\PassOptionsToPackage{x11names}{xcolor}
\documentclass[
  a4paper,
  UKenglish,
  cleveref,
  autoref,
  thm-restate
]{lipics-v2021}
\nolinenumbers{}
\usepackage[
]{apxproof}
\usepackage{xfp}
\usepackage{makecell}
\usepackage{amsmath,amssymb}
\usepackage{booktabs}
\usepackage{tabularray}
\usepackage{bm}

\newcounter{myappendix}
\renewcommand{\themyappendix}{\Alph{myappendix}}
\newcommand{\appendixsection}[1]{%
  \refstepcounter{myappendix}
  \section*{\colorbox{lipicsYellow}{\kern0.15em\themyappendix\kern0.15em}\quad #1}%
}

\usepackage{etoolbox}

\makeatletter
\g@addto@macro\appendixprelim{
  \renewcommand{\axp@section}[2]{}
}
\ifthenelse{\equal{\axp@appendix}{inline}}{
  \renewenvironment{toappendix}{%
    \par\addvspace{5ex}\noindent
    \begin{tikzpicture}
      \draw[lipicsLightGray, line width=1pt] (0,0) -- (\textwidth,0)
      node[midway, fill=white, inner sep=2pt, text=black] {\tiny \sffamily\color{lipicsGray}APPENDIX START};
    \end{tikzpicture}
    \par\addvspace{-2ex}
  }{%
    \par\addvspace{0ex}\noindent
    \begin{tikzpicture}
      \draw[lipicsLightGray, line width=1pt] (0,0) -- (\textwidth,0)
      node[midway, fill=white, inner sep=2pt, text=black] {\tiny  \sffamily\color{lipicsGray}APPENDIX FINISH};
    \end{tikzpicture}
    \par\addvspace{5ex}
  }
}{}
\makeatother

\usepackage{enumitem}
\usepackage{diagbox}
\usepackage{stmaryrd}
\usepackage{complexity}
\usepackage{textgreek}
\usepackage[noEnd=false,indLines=false]{algpseudocodex}
\usepackage{mleftright}
\mleftright{}
\usepackage{tikz}
\usepackage{varwidth}
\usetikzlibrary{arrows.meta,positioning, decorations.pathreplacing, shapes,calc}
\newenvironment{fakedisplaymath}{
  \par\nointerlineskip
  \vspace{\abovedisplayskip}
  \( \displaystyle
  }{
  \)
  \vspace{\belowdisplayskip}
}

\definecolor{mBlue}{HTML}{2E86C1}
\definecolor{mGreen}{HTML}{239B56}
\definecolor{mRose}{HTML}{CB4335}
\definecolor{mYellow}{HTML}{F39C12}
\definecolor{mDark}{HTML}{566573}
\definecolor{mGrey}{HTML}{95A5A6}

\let\OldW\W
\renewcommand{\W}[1]{\OldW\textup{\textsf{[#1]}}}

\newcommand{\tw}{\ensuremath\mathrm{tw}}
\newcommand{\pw}{\ensuremath\mathrm{pw}}
\newcommand{\ctw}{\ensuremath\mathrm{ctw}}

\newcommand{\pre}{N^{\preceq}} \newcommand{\fut}{N^{\succ}}
\newcommand{\pwseth}{$\pw$\textsc{-SETH}}
\newcommand{\gsym}{\textup{\sffamily co}}
\newcommand{\gname}{co-neighbor graph}
\renewcommand{\st}{\mathrm{St}}
\newcommand{\twcoe}{%
  \mathchoice
  {\textstyle \lfloor\frac{2\Delta}{3}+1\rfloor}
  {\lfloor\frac{2\Delta}{3}+1\rfloor}
  {\lfloor\frac{2\Delta}{3}+1\rfloor}
  {\lfloor\frac{2\Delta}{3}+1\rfloor}
}
\newcommand{\pwcoe}{%
  \mathchoice
  {\textstyle\lfloor\frac{\Delta}{2}+1\rfloor}
  {\lfloor\frac{\Delta}{2}+1\rfloor}
  {\lfloor\frac{\Delta}{2}+1\rfloor}
  {\lfloor\frac{\Delta}{2}+1\rfloor}
}

\tikzstyle{fill vertex}=[circle,fill=mDark,draw=none, minimum size=3pt,inner
sep=0pt]
\tikzstyle{vanish vertex}=[circle,fill=black,draw=none, minimum size=0.5pt,inner
sep=0pt]
\tikzstyle{edge}=[mGrey]
\tikzstyle{dashed edge}=[dashed, mGrey]
\tikzstyle{diedge}=[mGrey, ->, >={Stealth[round]}, line width=0.25pt]
\tikzstyle{dashed diedge}=[dashed, mGrey, ->, >={Stealth[round]}, line
width=0.25pt]

\newcommand{\drawHpoints}[4]{
  \foreach \i in {1,2,3} {
    \begin{scope}[shift={({(\i-1)*2}, {(\i-1)*0.2})}]
      \foreach \j/\mangle/\mrad/\mypage in { 1/90/1/1,2/90/2/1,3/90/3/1,
        1/210/1/2,2/210/2/2,3/210/3/2, 1/330/1/3,2/330/2/3,3/330/3/3%
      }{
        \pgfmathparse{((#1==-1 || #1==\i) && (#2==-1 || #2==\j) && (#3==-1 || #3==\mypage))}
        \edef\ismatch{\pgfmathresult}
        \pgfmathparse{\ismatch ? "#4" : "mDark"}
        \edef\vcolor{\pgfmathresult}
        \pgfmathparse{\ismatch ? "fill vertex" : "vanish vertex"}
        \edef\vstyle{\pgfmathresult}
        \node[\vstyle,fill=\vcolor] (v\mypage\j_\i) at (\mangle:\mrad) {};
      }

      \draw[edge] (v11_\i) -- (v21_\i) -- (v31_\i) -- (v11_\i);
      \draw[edge] (v11_\i) -- (v12_\i) -- (v13_\i);
      \draw[edge] (v21_\i) -- (v22_\i) -- (v23_\i);
      \draw[edge] (v31_\i) -- (v32_\i) -- (v33_\i);
    \end{scope}
  }

  \foreach \j [evaluate=\j as \nextj using int(\j+1)] in {1,2} {
    \foreach \mypage in {1,2,3} {
      \foreach \k in {1,2,3} {
        \draw[dashed, edge] (v\mypage\k_\j) -- (v\mypage\k_\nextj);
      }%
    }%
  }%
}

\newcommand{\drawH}{ \foreach \i in {1,2} {
    \begin{scope}[shift={({(\i-1)*2*1.5}, {(\i-1)*0.2*1.5})}]
      \foreach \j/\mangle/\mrad/\mypage in { 1/90/1/1,3/90/3/1,
        1/210/1/2,3/210/3/2, 1/330/1/3,3/330/3/3%
      } {%
        \coordinate (v\mypage\j_\i) at (\mangle:\mrad) {};
      }

      \draw[edge] (v11_\i) -- (v21_\i) -- (v31_\i) -- (v11_\i);
      \draw[edge] (v11_\i)  -- (v13_\i);
      \draw[edge] (v21_\i)  -- (v23_\i);
      \draw[edge] (v31_\i) -- (v33_\i);
    \end{scope}
  }

  \foreach \j [evaluate=\j as \nextj using int(\j+1)] in {1} {
    \foreach \mypage in {1,2,3} {
      \foreach \k in {1,3} {
        \draw[edge] (v\mypage\k_\j) -- (v\mypage\k_\nextj);
      }%
    }%
  }
}

\newcommand{\coverafter}{
  \foreach \i/\j/\h/\k in {v11_1/v11_2/v31_2/v31_1}{
    \draw[fill=white,draw=none,opacity=0.7] (\i) -- (\j) -- (\h) -- (\k) -- (\i);
    \draw[edge] (\i) -- (\j) -- (\h) -- (\k) -- (\i);
  } %
}

\title{Pure Nash Equilibria in Graphical Games of Bounded Width Revisited}
\author{Michael Lampis}{LAMSADE, CNRS UMR7243, Université Paris Dauphine-PSL, 75775 Paris, France}
{michail.lampis@dauphine.fr} {https://orcid.org/0000-0002-5791-0887}{}
\author{Yiren Lu}{LAMSADE, CNRS UMR7243, Université Paris Dauphine-PSL, 75775 Paris, France}
{yiren.lu@dauphine.psl.eu} {https://orcid.org/0009-0005-2846-4952}{}

\authorrunning{M. Lampis and Y. Lu}
\Copyright{Michael Lampis and Yiren Lu}
\ccsdesc[500]{Mathematics of computing~Graph algorithms}
\ccsdesc[500]{Theory of Computation~Design and Analysis of Algorithms $\rightarrow$ Parameterized Complexity and Exact Algorithms}
\ccsdesc[100]{Theory of computation~Algorithmic game theory}
\keywords{Graphical games, Pure Nash equilibria, Pathwidth, Treewidth}

\EventEditors{Philip Bille, Seth Pettie, and Sabine Storandt}
\EventNoEds{3}
\EventLongTitle{34th Annual European Symposium on Algorithms (ESA 2026)}
\EventShortTitle{ESA 2026}
\EventAcronym{ESA}
\EventYear{2026}
\EventDate{August 31--September 4, 2026}
\EventLocation{L'Aquila, Italy}
\EventLogo{}
\SeriesVolume{388}
\ArticleNo{72}

\hideLIPIcs

\begin{document}
\maketitle

\begin{abstract}
  We revisit the complexity of deciding whether a graphical game admits a pure
  Nash equilibrium (PNE) parameterized by standard measures of the input graph,
  such as treewidth. The natural dynamic programming algorithm for this problem
  has parameter dependence $\alpha^{(\Delta+1)\tw}$ where $\alpha$ is the
  maximum number of strategies available to each player, each player's utility
  depends on at most $\Delta$ other players, and the input graph has width
  $\tw$. Our first contribution is to point out that an algorithm by Thomas and
  van Leeuwen [Algorithmica 2015] claiming to improve this dependence to
  $\alpha^{O(\tw)}$ is flawed and, more strongly, such an algorithm would imply
  that \FPT=\W{1}.

  We then set out to pinpoint the fine-grained complexity of this problem with
  respect to standard parameters and show that the natural DP algorithm is not
  optimal, as the problem can be solved with dependence $\alpha^{\twcoe{}\tw}$,
  $\alpha^{\pwcoe{}\pw}$, and $\alpha^{\ctw}$, where $\pw,\ctw$ are the
  pathwidth and cutwidth of the input respectively. Our main algorithmic tool is
  a tightening of the relationship between the width of a graph $G$, its maximum
  degree, and the width of $G^2$, which may be of independent interest.
  Complementing these results, we show that our algorithms for pathwidth and
  cutwidth are likely to be optimal, as improving them is equivalent to
  falsifying the \pwseth.
\end{abstract}

\section{Introduction}

The computation of Nash equilibria is a central topic at the intersection of
computer science and economics. In this paper we focus on equilibria in
\emph{graphical games}, that is, games involving $n$ players represented by the
vertices of a (di-)graph whose edges indicate player interactions, in the sense
that the presence of an arc $(u,v)$ indicates that the utility of player $u$
(partially) depends on the strategy of player $v$ (and the absence of an
arc indicates that $u$ is indifferent to $v$'s actions).  This is an extremely
natural and well-studied model~\cite{JiangS10,KearnsLS01,KollerM03}.

The question we are interested in is the complexity of deciding whether a given
graphical game admits a \emph{pure} Nash equilibrium (PNE), that is, a joint
strategy where no player can unilaterally increase her utility by changing her
strategy.  This problem is \NP-complete~\cite{SchoenebeckV12}, so we attack it
using the tools of parameterized complexity, which is one of the most
established approaches for dealing with NP-hardness\footnote{We assume the
  reader is familiar with the basics of FPT algorithms as given for example
in~\cite{CyganFKLMPPS15}.}. Our goal is to investigate the \emph{structural
parameterized complexity} of deciding if a graphical game admits a PNE, for
standard parameters such as treewidth, pathwidth, cutwidth and maximum degree.

This question is of course anything but new and in fact its study goes back more
than twenty years.  In particular, the works of Gottlob, Greco, and
Scarcello~\cite{GottlobGS05} and Daskalakis and
Papadimitriou~\cite{DaskalakisP06} established the following: if we are given a
graphical game $\mathcal{G}$ where each player has at most $\alpha$ available
strategies, each player's utility depends on at most $\Delta$ other players, and
we are supplied a tree decomposition of the game graph with width $\tw$, then a
PNE (if one exists) can be found in time
$\alpha^{(\Delta+1)\tw}|\mathcal{G}|^{O(1)}$. Even though the two works arrive
at this complexity through different paths (CSPs of bounded hypertreewidth
for~\cite{GottlobGS05} and Markov Random Fields for~\cite{DaskalakisP06}), the
heart of the two algorithms is essentially the natural dynamic programming
approach: for each bag we need to remember the strategies of the players in the
bag ($\alpha^{\tw}$) and their (out-)neighbors ($\alpha^{\Delta\tw}$).  Is this
algorithm optimal?

This important question was taken up by Thomas and van
Leeuwen~\cite{thomas_pure_2015}. Their main (purported) contribution was an
algorithm significantly improving upon the performance of the two previous
algorithms by removing the dependence on $\Delta$.  Specifically, they claim
that the problem can be decided in time $\alpha^{O(\tw)}|\mathcal{G}|^{O(1)}$.
As some dependence on $\Delta$ remains in the $|\mathcal{G}|^{O(1)}$
factor\footnote{Observe that if we have $n$ players, then the description of
  $\mathcal{G}$ has size $n\alpha^{\Delta+1}$, as for each player we are given a
  payoff matrix with her utility for each possible joint strategy of her closed
neighborhood.}, their algorithm attempts to separate the exponential dependence
on $\tw$ from the exponential dependence on $\Delta$. To the best of our
knowledge, this has so far been accepted as the current state of the art (for
instance~\cite{PapadimitriouP23}).  A main motivation of our paper is the
observation that this result is, in fact, \emph{flawed}, and indeed no such
algorithm is possible under standard hypotheses.

\subparagraph*{Our results} Our goal in this paper is two-fold: to clarify the
state of the art with respect to the structural
parameterized complexity of the fundamental problem of computing PNEs in
graphical games; and to give a fine-grained investigation of the precise
complexity of this problem with respect to some of the most important
parameters.

With respect to the first goal, our main result is that the algorithm
of~\cite{thomas_pure_2015} contains a serious flaw. More strongly, we show that
even when $\alpha=2$, the problem is \W{1}-hard parameterized by treewidth (or
even by vertex cover), which implies that if the algorithm
of~\cite{thomas_pure_2015} were correct, then \FPT=\W{1}. Therefore, a parameter
dependence of the form $\alpha^{(\Delta+1)\tw}$, as given
by~\cite{DaskalakisP06,GottlobGS05}, is more likely to be close to the correct
answer.

Motivated by this, we then set out to identify the \emph{correct} parameter
dependence as precisely as possible. On the algorithmic side, the
$\alpha^{(\Delta+1)\tw}$ dependence of the algorithms
of~\cite{DaskalakisP06,GottlobGS05} is natural, as it keeps the information
necessary to ensure that when a vertex is forgotten the corresponding player is
locally stable (does not wish to deviate). Despite this, we show that the
natural approach is in fact \emph{not} optimal and its parameter dependence can
be improved by a constant factor in the exponent. Our algorithmic results are as
follows:

\begin{theorem}\label{thm:upper}
  The existence of a PNE in a graphical game $\mathcal{G}$, where players have at
  most $\alpha$ strategies and utilities depending on at most $\Delta$ other
  players, can be decided in time:
  \begin{enumerate}
    \item $\alpha^{\twcoe\tw}\cdot|\mathcal{G}|^{O(1)}$ where $\tw$ is the width
      of a given tree decomposition;
    \item $\alpha^{\pwcoe{}\pw}\cdot|\mathcal{G}|^{O(1)}$ where $\pw$ is the
      width of a given path decomposition;
    \item $\alpha^{\ctw}\cdot|\mathcal{G}|^{O(1)}$ where $\ctw$ is the width of
      a given cutwidth ordering.
  \end{enumerate}
\end{theorem}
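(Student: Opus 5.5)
Throughout, PNE existence is treated as a CSP on $G^2$, where $G$ is the game graph with arcs ignored. Take one variable per player, ranging over its $\alpha$ strategies, and for each player $v$ one constraint on $N[v]$ stating that $v$ does not wish to deviate. The scope $N[v]$ is a clique in $G^2$, so with a tree decomposition of $G^2$ of width $w$ the standard DP runs in time $\alpha^{w+1}|\mathcal{G}|^{O(1)}$. Every constraint scope lies in some bag, and each constraint is checked once, at a bag containing its scope, by table lookup in $v$'s payoff matrix. The whole theorem therefore reduces to a purely graph-theoretic statement, which is the ``tightening'' mentioned in the abstract: given a decomposition of $G$ of the stated width, we must construct in polynomial time a decomposition of $G^2$ of width roughly $\twcoe\tw$, $\pwcoe\pw$, and $\ctw$ respectively. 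Here we may need to be careful about additive constants, possibly by using a slightly tailored DP that only requires each $N[v]$ to appear in a bag, rather than a full decomposition of $G^2$.

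\textbf{Cutwidth.} Take the linear order $v_1,\dots,v_n$ with every cut crossed by at most $\ctw$ edges. Use the path decomposition whose $i$-th bag consists of $v_i$ together with all vertices $v_j$, $j\le i$, that have a neighbor $v_k$ with $k\ge i$ (``active past vertices''), plus the endpoints of cut edges. Each active vertex is witnessed by a distinct cut edge, so the bag has size about $\ctw+1$. For every $v$, the set $N[v]$ appears in the bag at the position of $v$'s last neighbor. One also checks that every vertex occupies a contiguous interval of bags. This gives $\alpha^{\ctw}$ after fixing the off-by-one, for example by charging $v_i$ itself to a cut edge.

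\textbf{Pathwidth and treewidth.} The naive way to obtain a decomposition of $G^2$ is to replace each bag $B$ by $N[B]$, which has size $(\Delta+1)|B|$. The improvement will come from not adding all neighbors at once. For pathwidth, process a nice path decomposition and, for each vertex $v$ in the bag, keep only those neighbors of $v$ that have already been introduced and are already forgotten. Neighbors still present in the bag cost nothing extra. Neighbors introduced later can instead be ``remembered'' by the later vertex, by swapping which endpoint stores the other.

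The counting argument is then a charging scheme. In a bag of size $p$, an edge $uv$ with $u$ in the bag and $v$ outside needs only one stored copy. By orienting the responsibility toward past versus future vertices, each bag vertex contributes at most about $\Delta/2$ additional vertices on average, which yields $p(\Delta/2+1)$. The step I expect to be the main obstacle is making this averaging hold for every bag simultaneously rather than on average. I would first try to choose, for each vertex, whether it stores its past or its future neighbors depending on which set is smaller. That choice bounds each vertex's contribution by $\lfloor\Delta/2\rfloor$.

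For treewidth, a node has up to three directions: the parent and two children. A vertex $v$ in bag $B$ has its external neighbors split among the subtrees hanging off $B$. By storing only the neighbors outside the largest of these groups, at most $2\Delta/3$ need to be stored. This gives $\twcoe\tw$, once the decomposition is normalized to have a binary-branching structure so that the three-way split argument applies.
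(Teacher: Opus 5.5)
Your high-level route is the paper's: one stability constraint per player, a balancing argument that bounds the width of the primal graph, then the standard CSP dynamic program. However, two steps fail as written.

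\textbf{Arc directions cannot be ignored.} $\Delta$ bounds only out-degrees, and player $v$'s constraint involves only $N^{+}[v]$. So the right primal graph is $G^{\gsym}$, where out-neighbourhoods become cliques, and not the square of the underlying graph. With unbounded in-degree the difference is drastic: if all leaves of a star point to the centre, then $\Delta=1$ and $\tw=1$, but the undirected square is a clique.

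For part 3 it is fatal even with bounded in-degree. Orient an $N\times m$ grid rightwards and downwards and order it row by row, so $\Delta=2$ and $\ctw\le m+1$. Pair rows $2s-1,2s$ and walk each pair as a snake; corresponding positions in consecutive pairs are at distance $2$. Hence the undirected square contains a $\lfloor N/2\rfloor\times 2m$ grid, so for $N\ge 4m$ every decomposition of it has width at least $2m$. The dynamic program then costs roughly $\alpha^{2\ctw}$.

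Your cutwidth bags show the same effect. If they include the endpoints of cut edges, the later endpoints are not active past vertices, and bags reach size about $2\ctw$. If they do not, coverage fails: if $v_w$ has neighbours $v_a,v_b$ with $a<w<b$ and $v_a$ has no other neighbour, then $v_a$ leaves before $v_b$ enters. The missing idea is to keep only the \emph{heads} $H_i$ of arcs crossing the cut after $v_i$, at most $\ctw$ of them. Between $H_{i-1}$ and $H_i$ you insert the bag $H_{i-1}\cup N^{+}[v_i]$. A vertex can be discarded once its own out-neighbourhood has been placed and no later vertex has it as an out-neighbour.

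\textbf{Pathwidth and treewidth.} Your count is the right bag-size estimate: each bag vertex stores the smaller of its past and future out-neighbour sets (at most $\lfloor\Delta/2\rfloor$), or two of three directions (at most $\lfloor 2\Delta/3\rfloor$). But it does not yet give a decomposition of $G^{\gsym}$. Two out-neighbours of $v$ are adjacent in $G^{\gsym}$ without being adjacent in $G$, so ``one stored copy per edge $uv$'' is the wrong accounting. Some bag must contain all of $N^{+}[v]$, and if $v$ only ever stores its smaller side, no bag does. A fixed side per vertex also fails, since the past side eventually is all of $N^{+}(v)$ and the future side initially is.

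What is missing is the switching step. The set of present out-neighbours of $v$ only grows, so $v$ passes from light to heavy between two consecutive bags. In a tree, this happens along a chosen switching path, from a leaf of the subtree where $v$ is heavy up to the topmost bag containing $v$. At that point you insert, for one switching vertex $t$ at a time, a bag adding all future out-neighbours of $t$. This is followed by a bag discarding those already-forgotten neighbours of $t$ that no light vertex still needs. The cost is only an additive $\lfloor\Delta/2\rfloor$ (resp.\ $\lfloor 2\Delta/3\rfloor$), absorbed into $|\mathcal{G}|^{O(1)}$ because the payoff matrices already have size about $\alpha^{\Delta+1}$.

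You then still have to prove that every vertex is forgotten exactly once, so that the bags form a valid decomposition. In the tree case this also requires extra pruning at join nodes: you must discard neighbours that a switched vertex stored as future from one child's side but that are present in the other child's subtree.
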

The main takeaway is that we decrease the exponent of the parameter dependence
by a factor of roughly $2$ and $\frac{3}{2}$ respectively for pathwidth and
treewidth, which is done by tightening the constants in a well-known
relationship between the widths of a graph and its square.%

Given that our algorithmic improvement is based on an immediate application of a
combinatorial relation, it is then important that we can establish our results
to be (likely) essentially optimal for pathwidth and cutwidth. In particular, we
show the following:

\begin{theorem}\label{thm:lower}
  The following statements are equivalent:
  \begin{enumerate}
    \item The \pwseth{} is false.\label{thm:lower:item-1}
    \item There exists an odd integer $\Delta\ge 3$, an integer $\alpha\ge 2$
      and $\varepsilon>0$ such that the existence of a PNE in a graphical game
      $\mathcal{G}$, where each player has at most $\alpha$ strategies and her
      utility depends on at most $\Delta$ other players can be decided in time
      $(\alpha-\varepsilon)^{\pwcoe{}\pw}\cdot|\mathcal{G}|^{O(1)}$ where $\pw$
      is the width of a given path decomposition.\label{thm:lower:item-2}
    \item There exists an integer $\alpha\ge 2$ and $\varepsilon>0$ such that
      the same problem can be decided in time
      $(\alpha-\varepsilon)^{\ctw}\cdot|\mathcal{G}|^{O(1)}$ where $\ctw$ is
      the width of a given cutwidth ordering.\label{thm:lower:item-3}
  \end{enumerate}
\end{theorem}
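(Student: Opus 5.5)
The plan is to prove the cycle $1\Rightarrow 2$, $1\Rightarrow 3$, $2\Rightarrow 1$, $3\Rightarrow 1$. For the $\pwseth$ I will use the robust form from the literature: for every fixed alphabet size $B\ge 2$ and arity $q$, $q$-CSP-$B$ can be solved in time $(B-\varepsilon)^{\pw}n^{O(1)}$ (given a path decomposition of the primal graph) if and only if the $\pwseth$ fails. Here $\varepsilon>0$ is independent of the instance and $q$ may be taken as large as needed.

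\emph{Algorithmic direction.} Given a game $\mathcal{G}$, build a CSP with one variable per player, domain the player's strategy set (padded to size $\alpha$), and one constraint per player $v$ on $N[v]$. The constraint is the set of joint strategies of $N[v]$ in which $v$ does not want to deviate. This CSP has arity at most $\Delta+1$, and its PNEs correspond bijectively to its satisfying assignments. Its primal graph is contained in $G^2$ of the underlying undirected graph. By the combinatorial relation announced before Theorem~\ref{thm:upper} (the one behind items 2 and 3 of that theorem), a path decomposition of $G$ of width $\pw$ yields one of $G^2$ of width at most $\pwcoe\pw$. Likewise, a cutwidth ordering of width $\ctw$ yields a path decomposition of $G^2$ of width at most $\ctw$ up to an additive constant. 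Plugging these into the assumed $(\alpha-\varepsilon)^{\pw}$ CSP algorithm gives items 2 and 3, for every $\Delta$ and $\alpha$.

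\emph{Hardness direction.} Start from $q$-CSP-$\alpha$ with a path decomposition of width $p$. Normalise it, by standard pw-SETH manipulations, into a "grid-like" instance: $p$ rows, each a sequence of copies of one variable, with consecutive copies forced equal and each constraint between consecutive columns. Then build a game whose players simulate this instance while packing information efficiently.

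For item 3, each row becomes a path of players. Consecutive players are coupled by a two-player "copy" gadget with $\alpha$ strategies: if they disagree, a gadget player has no stable response, for instance a matching-pennies style pair attached to them. The constraints are checked by checker players adjacent to $O(q)$ row players. The cut between consecutive columns then crosses one edge per row plus $O(1)$ extra, so $\ctw\le p+O(1)$. An $(\alpha-\varepsilon)^{\ctw}$ algorithm therefore refutes the $\pwseth$.

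For item 2 with odd $\Delta=2k+1$, I would exploit that a bag vertex can "carry" the strategies of its neighbours. Group the rows into blocks of size $k+1$. Each block is represented at a given column by a single hub player $h$ together with $k$ "left" and $k$ "right" neighbours (plus one spare edge for gadgets). The hub's own strategy stores one row value, and the out-neighbours store the other $k$ values. Stability of $h$ is used to force consistency between its left and right neighbours and the previous and next hubs. The intended result is that only the hubs need to be in a bag when moving between columns, giving $\pw\approx p/(k+1)=p/\pwcoe$, so $(\alpha-\varepsilon)^{\pwcoe\pw}$ translates into $(\alpha-\varepsilon')^{p}$.

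The main obstacle is making this gadget sound within degree $\Delta$. The consistency checks must be expressed purely as local stability conditions, using utilities that give a player a profitable deviation exactly when a check fails, without introducing spurious equilibria or using extra degree. At the same time, the non-hub players must be forgettable immediately, so that the claimed width bound actually holds. I would first try to make each leaf player indifferent (constant utility) and place all checks in the hub's payoff matrix via a "punisher" partner with no stable strategy when the check fails. I would then verify the path decomposition width by explicit bag construction.
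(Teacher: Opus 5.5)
Your overall plan is the paper's. The direction $1\Rightarrow 2,3$ goes through the CSP encoding, the width bounds and Lampis's CSP equivalence. The cutwidth bound comes from layered copy chains; the paper does not even need matching pennies there, since a player who simply prefers to match her unique out-neighbour already forces equality in every PNE. The pathwidth bound comes from pivot players acting as separators.

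One step is wrong as written. You pass through $G^2$ of the underlying undirected graph, but items~2 and~3 bound only \emph{out}-degrees, and in-degrees can be unbounded. If each of $n$ leaves has a single arc to a centre, then $\Delta=1$ and $\pw(G)=1$, yet $G^2=K_{n+1}$. The primal graph of your CSP is exactly $G^{\gsym}$, and Theorem~\ref{thm:combi} bounds its pathwidth in terms of the out-degree, so use that instead. The additive terms there are $2\lfloor\Delta/2\rfloor$ and $\Delta$, which are not constants when $\Delta$ is free as in item~3. They still cost only $\alpha^{O(\Delta)}$, which the payoff tables' size absorbs.

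For $2\Rightarrow 1$ your degree budget is right: $k$ left copies shared with the previous hub, $k$ right copies, one further arc, and blocks of $k+1=\pwcoe$ rows. However, the gadget you leave open is the heart of this direction. The hub's own stability cannot do both jobs. A hub who copies the next hub only when her checks pass would, on a failed check, simply stop copying; that breaks consistency without destroying the equilibrium.

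The paper closes this as follows. The pivot's last arc goes to the next pivot of her block, and she wants to play that pivot's strategy exactly when each left copy agrees with the corresponding right copy. A conditional matching-pennies pair with arcs \emph{into} both pivots costs none of their out-degree and has no stable profile unless the two pivots agree. So a failed check leaves the pivot with no stable choice, while all non-pivots have constant payoffs.

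Your punisher idea also works. Spend the extra arc on a $z$ that plays matching pennies with the hub only when a check fails. Hub-to-hub equality must then come from a separate pair pointing into both hubs, since the hubs' out-degree is used up.

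Two bookkeeping points remain. First, reduce from $3$-CSP, as the paper does, so that checkers have out-degree $3\le\Delta$. Second, normalise to one constraint per column, with copy checks only while a slot keeps the same variable. Then each bag holds the current pivots plus $O(1)$ players, giving width $p/(k+1)+O(1)$. The same normalisation is what keeps your cutwidth bound at $p+O(1)$.
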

The \pwseth{} states that the standard dynamic programming algorithm for
\textsc{SAT} parameterized by pathwidth, which has parameter dependence
$2^{\pw}$, cannot be improved to $(2-\varepsilon)^{\pw}$. Observe that
Theorem~\ref{thm:lower} not only establishes that improving the algorithms of
Theorem~\ref{thm:upper} for pathwidth and cutwidth is impossible under the
\pwseth{}, but in fact shows that such an improvement is \emph{equivalent} to
falsifying the \pwseth{}. The lower bound for pathwidth of course also
automatically applies to treewidth, which leaves it as an intriguing open
question whether we can obtain a better lower bound for treewidth or a better
algorithm.

\subparagraph*{Techniques} With respect to techniques, the main contribution of
this paper is a tightening of a well-known combinatorial relation between the
width of a graph $G$, its maximum degree, and the width of $G^2$. Before we
describe this, let us explain the context through which this question enters the
picture.  We are given as input a \emph{digraph} $G$ with $n$ vertices,
representing the players. A natural way to decide if a PNE exists is to
construct a CSP instance with $n$ variables of domain $\alpha$ (one variable for
each player) and add for each player $u$ a constraint that includes $u$ and all
her out-neighbors and is satisfied only in joint strategies where $u$ does not
wish to deviate.  The primal graph of this instance can be obtained from $G$ by
turning the out-neighborhood of every vertex into an undirected clique; we
denote this graph as $G^{\gsym}$ and call it the \gname{} of $G$. Observe that
if $G$ is undirected (or bi-directed), $G^{\gsym} = G^2$, where vertices at
distance at most $2$ in $G$ are adjacent.

The natural next step is to solve the CSP instance. To this end, if we bound the
treewidth of $G^{\gsym}$, we can use a standard algorithm with dependence
$\alpha^{\tw({G^{\gsym}})}$. The treewidth of $G^{\gsym}$ (and the treewidth of
$G^2$) can easily be upper-bounded by $(\Delta+1)(\tw(G)+1)$ by taking a tree
decomposition of $G$ and adding to each bag that contains a vertex $u$ all
(out-)neighbors of $u$. This is {well-known}, explicitly stated in~\cite[Lemma
4.5]{DaskalakisP06},~\cite[Theorem 16]{GurskiW25},~\cite[Theorem
12]{katsikarelis_improved_2023},~\cite[Lemma
1]{hanaka_computing_nodate},~\cite[Lemma
6.0.2]{bonomo-braberman_new_2022},~\cite[Lemma
4.1]{krumkeModelsApproximationAlgorithms2001}, and used implicitly throughout
the literature, e.g.,~\cite{agrawal_computing_2023}. Furthermore, this upper
bound seems in a sense tight, as $\tw(K_{1,n}^2)=\tw(K_{n+1})=n\cdot
\tw(K_{1,n})$.

Our main algorithmic contribution is to show that the above standard bound is
\emph{not} tight, and can in fact be improved by a constant factor. To the best
of our knowledge, this was not known before, which we find quite surprising
since the combinatorial relation between $\tw(G),\Delta(G)$, and $\tw(G^2)$ is
well-known and has often been used algorithmically as mentioned. More precisely
we prove the following:

\begin{theorem}\label{thm:combi}
  For any digraph $G$ of maximum out-degree $\Delta$ we have:
  \begin{enumerate}
    \item $\tw(G^{\gsym})\le
      \twcoe{}\tw(G)+2\lfloor\frac{2\Delta}{3}\rfloor${;}
    \item $\pw(G^{\gsym})\le
      \pwcoe{}\pw(G)+2\lfloor\frac{\Delta}{2}\rfloor${;}
    \item $\pw(G^{\gsym})\le \ctw(G)+\Delta${.}
  \end{enumerate} %
  Furthermore, the above bounds are constructible.
\end{theorem}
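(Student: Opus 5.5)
The plan is to work directly from a decomposition of $G$. For each vertex $u$ we \emph{choose} one bag that will host the clique $N^+[u]$ of $G^{\gsym}$, and then extend the occurrences of the out-neighbours of $u$ until they reach that bag. Every edge of $G^{\gsym}$ lies inside some $N^+[u]$, so it suffices that each $N^+[u]$ is contained in a bag. The key observation is about where these extensions go. Let $T_u$ be the (connected) set of bags containing $u$. For every out-neighbour $v$ of $u$, the set $T_v$ meets $T_u$, because $uv$ is an edge. So if the host bag $t$ lies in $T_u$, the path from $T_v$ to $t$ stays inside $T_u$, and every bag receiving a new copy of $v$ also contains $u$. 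We can therefore charge each added vertex of a bag $b$ to some $u\in b$. If we show that every $u$ is charged at most $k$ times per bag, each bag grows by a factor of at most $(k+1)$. The new width is then at most $(k+1)(w+1)-1$, which is at most the claimed bounds. Constructibility is immediate, since each choice below is computed greedily.

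\textbf{Pathwidth.} Here $T_u$ is an interval $I$, and each out-neighbour $v$ gives a subinterval $I_v\cap I$. When the host is $t$, a bag $b>t$ is charged for $v$ only if $I_v$ lies entirely after $b$, and symmetrically for $b<t$. So the load of $u$ at $b>t$ is at most the number of neighbours starting after $t$, and at $b<t$ at most the number ending before $t$. We take $t$ to be the $\lceil \Delta/2\rceil$-th smallest right endpoint among the sets $I_v\cap I$ (this lies in $I$). Then at most $\lceil\Delta/2\rceil-1\le\lfloor\Delta/2\rfloor$ neighbours end strictly before $t$. Every neighbour starting after $t$ also ends after $t$, so at most $\Delta-\lceil\Delta/2\rceil=\lfloor\Delta/2\rfloor$ of them start after $t$. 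Hence $k=\lfloor\Delta/2\rfloor$ and $\pw(G^{\gsym})\le \pwcoe(\pw+1)-1$, which is within the stated bound.

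\textbf{Cutwidth.} Given the ordering $v_1,\dots,v_n$, let bag $B_i$ consist of $v_i$ together with one endpoint of every arc crossing the gap just before or just after $v_i$. For an arc $(x,y)$ we select the endpoint as follows. If $y$ is an out-neighbour of $x$ lying to the \emph{left} of $x$, we take the left endpoint $y$. Otherwise we take the right endpoint.

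This gives a valid path decomposition of $G^{\gsym}$ after we additionally place in $B_i$ those left out-neighbours of $v_i$ not already present, which costs at most $\Delta$. The occurrences of each vertex stay contiguous, because a vertex is kept exactly across the gaps spanned by its selected arcs. Right out-neighbours of $v_i$ are present as right endpoints of arcs crossing the gap after $v_i$. Left out-neighbours of $v_i$ are carried forward as selected left endpoints of the arcs that cross the gaps up to $v_i$. Each crossing arc contributes one vertex, so $|B_i|\le \ctw+1+\Delta$. The bookkeeping at the two gaps adjacent to $v_i$ is routine.

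\textbf{Treewidth: the main obstacle.} The same charging applies, but now $T_u$ is a subtree. The load at a bag $b\ne t$ is the number of neighbours $v$ whose subtree $T_v\cap T_u$ lies entirely in components of $T_u-b$ not containing $t$. Several such components can add up, so a plain weighted centroid only gives load up to $\Delta$. I would first make the decomposition binary, so every node has degree at most $3$ (this does not increase the width). I would then look for $t$ by a walking argument: start anywhere, and move towards any neighbouring node $b$ whose far side (relative to $t$) carries more than $\lfloor 2\Delta/3\rfloor$ neighbour subtrees. Since $b$ has at most two other branches, the aim is to show that this walk cannot cycle and stops at a node where every far side carries at most $2\Delta/3$ subtrees, in the manner of the $1/3$–$2/3$ separator lemma for trees. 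If this works, we get $k=\lfloor 2\Delta/3\rfloor$ and the first bound. This balancing step is the one I am least sure of, and I expect it to be the hardest part.
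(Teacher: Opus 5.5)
There is a genuine gap, and it is at the host bag itself. You bound the load of $u$ only at bags $b\neq t_u$. At $b=t_u$, $u$ must bring in every out-neighbour not already in $X_{t_u}$. With your choice of $t_u$ this can be $(\lceil\Delta/2\rceil-1)+\lfloor\Delta/2\rfloor=\Delta-1$, and nothing stops all vertices of a bag from choosing the same host.

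Concretely, take $\Delta=3$ and $U=\{u_1,\dots,u_m\}$. Use the path decomposition $Y_1,\dots,Y_m,Z,W_1,\dots,W_m$ with $Y_j=U\cup\{a_j\}$, $Z=U\cup\{z\}$, $W_j=U\cup\{c_j\}$, and arcs $u_j\to a_j,z,c_j$; this has width $m$. Your rule makes $Z$ the host of every $u_j$, so $Z$ receives all $a_j$ and all $c_j$. The output then has width $3m$, which exceeds both your claimed $2m+1$ and the theorem's $2m+2$ once $m\ge 3$. That your bound $\pwcoe{}(\pw+1)-1$ beats the stated one is itself a symptom: the theorem's second additive $\lfloor\Delta/2\rfloor$ is exactly the host cost you never pay.

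The missing idea is to make host bags private. Replace $X_t$ by consecutive copies, one for each vertex hosted at $t$. At $u$'s copy only $u$ pays up to $\Delta-1$; every other vertex pays its left or right load, which is at most $\lfloor\Delta/2\rfloor$. This gives width at most $\pwcoe{}\pw+\Delta-1\le\pwcoe{}\pw+2\lfloor\Delta/2\rfloor$. This is the paper's construction. Keeping $\pre_i(u)$ while $u$ is light and $\fut_i(u)$ once it is heavy is your extension toward a host. The inserted bags $B_{i,2k}=B_{i,2k-1}\cup\fut_i(t_k)$, one vertex $t_k$ at a time, are the private host bags.

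For treewidth, the step you flagged is not the real obstacle. For $b\neq t$, every component of $T_u-b$ avoiding $t$ lies inside a single component of $T_u-t$. So a weighted centroid, where each component of $T_u-t$ contains at most $\lfloor\Delta/2\rfloor$ of the subtrees $T_v\cap T_u$, already bounds every off-host load by $\lfloor\Delta/2\rfloor$. If host bags then cost only an additive $O(\Delta)$, as in the path case, you would get roughly $(\lfloor\Delta/2\rfloor+1)\tw$. That contradicts the $\frac{2\Delta}{3}$ lower bound of Theorem~\ref{thm:tw-sqr-lb}.

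So all the difficulty lies in the host bags. At a node of degree $3$, private host copies cannot be serialised without some copies in which other vertices carry neighbours from two of the three branches. The paper handles this in three steps:
\begin{itemize}
\item It takes as host a switching leaf: a bag where $u$ first has at least $\lambda=\lceil\Delta/3\rceil$ present neighbours, i.e.\ a leaf of the subtree where $u$ is heavy.
\item It places the private host bags in a chain above that node, so they see only two sides.
\item It pays at join nodes, where an unswitched vertex carries present neighbours from both children, at most $\max\{\Delta-\lambda,2\lambda-2\}\le\lfloor 2\Delta/3\rfloor$.
\end{itemize}

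Your cutwidth part is correct and is the paper's construction. You always select the head, so $B_i=H_{i-1}\cup N^{+}[v_i]$, where $H_{i-1}$ is the set of heads of arcs crossing the gap before $v_i$; this is the paper's bag $B_{2i-1}$. Only your count needs repair. Arcs crossing both gaps around $v_i$ contribute at most $\ctw$ heads. The remaining arcs are incident to $v_i$ and have heads in $N^{+}[v_i]$. Hence $|B_i|\le\ctw+\Delta+1$.
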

Even though in this paper we focus on digraphs, we observe that we obtain a
corollary for undirected graphs which is likely to be of independent interest.
This follows immediately from Theorem~\ref{thm:combi} by considering an
undirected graph as a bi-directed digraph.

\begin{corollary}\label{cor:combi-undir}
  For any undirected graph $G$ of maximum degree $\Delta$ we have:
  \begin{enumerate}
    \item $\tw(G^{2})\le \twcoe{} \tw(G)+2\lfloor\frac{2\Delta}{3}\rfloor${;}
    \item $\pw(G^{2})\le \pwcoe{} \pw(G)+2\lfloor\frac{\Delta}{2}\rfloor${.}
  \end{enumerate}
\end{corollary}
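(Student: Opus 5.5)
The corollary follows directly from Theorem~\ref{thm:combi}. Given an undirected graph $G$ of maximum degree $\Delta$, I will build the bi-directed digraph $\vec{G}$ on the same vertex set, replacing every edge $uv$ by the two arcs $(u,v)$ and $(v,u)$. Three things then need checking.

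First, the out-degree of every vertex $u$ in $\vec{G}$ equals its degree in $G$, so $\vec{G}$ has maximum out-degree $\Delta$ and Theorem~\ref{thm:combi} applies with the same $\Delta$.

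Second, I will show that $\vec{G}^{\gsym}=G^2$. By definition, $\vec{G}^{\gsym}$ is obtained from the underlying undirected graph of $\vec{G}$, which is $G$, by turning each out-neighborhood $N^+(w)=N_G(w)$ into a clique. Its edge set therefore consists of two kinds of pairs:
\begin{itemize}
\item the edges of $G$;
\item the pairs $\{u,v\}$ that have a common neighbor $w$.
\end{itemize}
These are exactly the pairs at distance $1$ or $2$ in $G$, so the edge set equals that of $G^2$.

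Third, treewidth and pathwidth of a digraph are defined on its underlying undirected graph. Hence $\tw(\vec{G})=\tw(G)$ and $\pw(\vec{G})=\pw(G)$.

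Substituting these identities into items~1 and~2 of Theorem~\ref{thm:combi} gives both bounds. The only step that could be called an obstacle is confirming the paper's conventions: that the \gname{} includes the original edges, and that widths of digraphs are measured on the underlying graph. Both hold under the definition given in the Techniques paragraph.
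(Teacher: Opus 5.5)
Your proposal is correct and is exactly the paper's argument: view $G$ as a bi-directed digraph, so the maximum out-degree is $\Delta$ and $\vec{G}^{\gsym}=G^2$, then apply items~1 and~2 of Theorem~\ref{thm:combi}. The only refinement is that the underlying undirected graph of $\vec{G}$ is $G$ with each edge doubled. As the paper notes, parallel edges do not change treewidth or pathwidth, so $\tw(\vec{G})=\tw(G)$ and $\pw(\vec{G})=\pw(G)$ still hold.
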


Interestingly, we are able to establish for both pathwidth and treewidth that
the results of Corollary~\ref{cor:combi-undir} (and therefore also of
Theorem~\ref{thm:combi}) are essentially tight by giving an infinite family of
examples. Therefore, it is not possible to squeeze any further improvement by
bounding the width of $G^{\gsym}$ (or $G^2$) and our algorithms are optimal
\emph{under this approach}.

To establish the optimality of our algorithms \textit{in the general sense}, we
give reductions from CSPs of appropriate alphabets for the pathwidth and
cutwidth cases, showing that improving our algorithms would falsify the
\pwseth{}. Notably, one advantage of our algorithmic approach of reducing to a
CSP is that this also automatically implies a reduction in the other direction,
showing that improving our algorithms is \emph{equivalent} to falsifying the
\pwseth{}.

Finally, let us mention that we focus on digraphs, because this allows a more
fine-grained exploration of the problem's complexity, but in general we refer to
undirected graph width parameters. These correspond to the width of the
underlying graph, that is, the graph obtained by ignoring directions. Focusing
on digraphs makes our algorithmic results stronger, as the previous works giving
the less efficient $\alpha^{(\Delta+1)\tw}$ parameter
dependence~\cite{DaskalakisP06,GottlobGS05} were focusing on the special case
when the input is undirected (bi-directed). Conversely, it is worth noting that
our reduction showing that the problem is \W{1}-hard parameterized by treewidth
produces bi-directed graphs, which is exactly the case purportedly solved
by~\cite{thomas_pure_2015} in \FPT{} time.

\subparagraph*{Other related work} The problem of computing PNEs is a
well-studied topic and computing equilibria is generally hard, even in the
parameterized setting and especially if one is also looking to optimize some
objective
~\cite{BiloM21,ConitzerS08,ElkindGG07,GrecoS09,HanakaL22,Lampis21,Peters16a,SchoenebeckV12}.
Mixed equilibria, which are guaranteed to exist, are also \PPAD-hard to compute
even for graphical games of pathwidth $2$~\cite{ElkindGG06}. In this paper we
focus on undirected structural parameters, even though the input is a digraph,
as directed analogues of treewidth are known to be unhelpful for this problem
~\cite{JiangS10}.

The \pwseth{} was recently put forward by Lampis~\cite{lampis_primal_2025} as a
more solid alternative to the SETH and shown to be equivalent to numerous tight
lower bounds for problems parameterized by pathwidth. It is implied by weaker
versions of the SETH and the Set Cover Conjecture~\cite{Lampis26a}, and has
recently been used as a starting point for tight lower
bounds~\cite{EsmerM25,abs-2506-01645,HartmannM25,LampisV25,schepper2025faster}.
We also remark that the vast majority of problems have the same complexity
parameterized by pathwidth and
treewidth~\cite{BelmonteKLMO22,CyganKN18,Lampis23}, making the gap we have
between $\frac 12$ and $\frac 23$ more intriguing.

\subparagraph*{Paper organization} After some preliminaries in
Section~\ref{sec:prelim}, we explain in Section~\ref{sec:thomas_wrong} why the
results of~\cite{thomas_pure_2015} are flawed. Then, in
Section~\ref{sec:square_theorems} we present our combinatorial bounds, their
algorithmic applications, as well as examples showing the tightness. In
Section~\ref{sec:seth_lower_bounds} we present our lower bounds based on the
\pwseth{} and close with some discussion in Section~\ref{sec:conclusion}.

\section{Preliminaries}\label{sec:prelim}

A graphical game is a tuple $\mathcal{G}=(G,\{M_v\}_{v\in V(G)})$. $G$ is the
digraph with each vertex corresponding to one player and each arc indicating a
utility dependency. Let $N^{+}(v)$ be the set of out-neighbors of $v$,
$N^{+}[v]=N^{+}(v)\cup\{v\}$ be the closed out-neighborhood of $v$. Then the
utility of player $v$ is determined by the strategies of $N^{+}[v]$, and is
given by the payoff matrix $M_v$. Let $\st(v)$ be the set of available
strategies for player $v$, then $M_v$ is a matrix mapping $\st(v)\times
\prod_{u\in N^{+}(v)}\st(u)$ to $\mathbb{R}$. Given a joint strategy $\mathbf s$
of $V(G)$, for each player $v$, we say $v$ is \textit{locally stable} if
changing her strategy will not increase the utility, i.e.,
\begin{equation}
  \forall x\in \st(v), \quad M_v(\mathbf s|_{v},\mathbf s|_{N^{+}(v)})\ge M_v(x,\mathbf s|_{N^{+}(v)}),\label{eq:vertex-stable}
\end{equation}
where $\mathbf s|_v, \mathbf s|_U$ denote the (joint) strategy of a player $v$,
a set of players $U$ in $\mathbf s$ respectively. Our problem is to decide
whether a given graphical game admits a pure Nash equilibrium (PNE), that is, a
joint strategy of all players in which everyone is locally stable. For
simplicity, we say a graphical game is an $\alpha$-strategy game if every player
has at most $\alpha$ strategies. Moreover, any structural properties (e.g.,
width, maximum degree) of a game implicitly refer to its underlying game graph.
Now we give the remaining two key concepts in our work.
\begin{definition}[\gname]
  Given a digraph $G$, we define the \gname{} of $G$, denoted by $G^{\gsym}$, as
  the undirected graph with the same vertex set as $G$ and an edge between two
  vertices $u$ and $v$ if and only if there is a vertex $w$ such that $\{u,v\}
  \subseteq N^{+}[w]$.
\end{definition}
Note that \gname{} is a generalization of the square graph to digraphs,
and the moral graph~\cite{verma_deciding_1993} to cyclic digraphs.

\begin{definition}[Constraint satisfaction problem]
  A constraint satisfaction problem is defined as a triple $\langle X, \Sigma, C
  \rangle$, where $X$ is a finite set of variables, $\Sigma$ is a finite
  alphabet, and $C$ is a finite set of constraints of the form $\langle t_j, R_j
  \rangle$, where $t_j$ is a tuple of distinct variables from $X$ and $R_j
  \subseteq \Sigma^{|t_j|}$. The problem asks whether there exists an assignment
  $\sigma: X \to \Sigma$ such that for every constraint $\langle t_j, R_j
  \rangle \in C$ with $t_j = (x_1, \dots, x_{|t_j|})$ we have $(\sigma(x_1),
  \dots, \sigma(x_{|t_j|})) \in R_j$.
\end{definition}

Finally, recall that when we talk about the width of a digraph, we are referring
to the width of the underlying undirected graph. This graph is not necessarily
simple, as if both arcs $(u,v)$ and $(v,u)$ exist, they form parallel edges in
the underlying undirected graph. Multiple edges do not affect the pathwidth and
treewidth, but they do affect the cutwidth as it counts the number of edges
instead of vertices.

\section{W[1]-hardness parameterized by vertex cover}\label{sec:thomas_wrong}

In this section, we explain why the algorithmic result of
~\cite{thomas_pure_2015} is flawed. They claim:
\begin{claim}[{\cite[Theorem 2]{thomas_pure_2015}}]\label{clm:tw-alg}
  The existence of a PNE in an \textit{undirected} $\alpha$-strategy graphical
  game $\mathcal{G}$ with treewidth $\tw$ can be decided in time
  $\alpha^{\tw}|\mathcal{G}|^{O(1)}$.
\end{claim}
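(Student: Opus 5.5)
We do not intend to prove Claim~\ref{clm:tw-alg}. We expect it to be false under standard assumptions, and the plan is to refute it conditionally, by showing that it would imply $\FPT=\W{1}$. The route is a parameterized reduction from \textsc{Multicolored Clique}, parameterized by $k$, to PNE existence in \emph{undirected} (bi-directed) $2$-strategy graphical games. We want the resulting game graph to have treewidth, and ideally even vertex cover, bounded by $f(k)$. Then $\alpha^{\tw}|\mathcal{G}|^{O(1)}=2^{f(k)}|\mathcal{G}|^{O(1)}$ would be an \FPT{} algorithm for a \W{1}-hard problem, provided $|\mathcal{G}|$ stays polynomial in the size of the clique instance.

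The construction I would try is as follows. Let the input graph have colour classes $V_1,\dots,V_k$.
\begin{enumerate}
  \item For each class $i$ and each vertex $v\in V_i$, add a selection player $s_v$ with strategies $\{0,1\}$, where $1$ means ``$v$ is chosen''. These players form an independent set.
  \item For every class $i$, add a constant number of hub players adjacent to all $s_v$ with $v\in V_i$. Their payoffs are designed, via matching-pennies pairs of hubs, so that they are all locally stable iff exactly one $s_v$ in $V_i$ plays $1$.
  \item For every pair $i<j$, add a constant number of checker players adjacent to the selection players of $V_i\cup V_j$. Again using a matching-pennies pair, they have a stable response iff the two chosen vertices are adjacent.
  \item The selection players themselves are made indifferent, so that they are always stable.
\end{enumerate}
All non-selection players form a vertex cover of size $O(k^2)$, so $\tw\le\mathrm{vc}=O(k^2)$. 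Correctness would go in two directions. A multicolored clique yields a joint strategy in which every hub and checker pair has a stable profile. Conversely, in any PNE the matching-pennies gadgets force exactly one choice per class and adjacency for every pair.

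The main obstacle is the size of the game description. A hub adjacent to $|V_i|$ players has an explicit payoff matrix of size $2^{|V_i|+1}$, and the claim's running time allows $|\mathcal{G}|^{O(1)}$, which then hides an exponential. So the reduction must keep every (out-)degree $O(\log n)$, so that $\alpha^{\Delta+1}$ is polynomial, while keeping the treewidth bounded by $f(k)$, not by $k\log n$. Encoding each chosen vertex in binary on $\log n$ players fails, since that gives treewidth $\Theta(k\log n)$ and the claim would only yield an $n^{O(k)}$ running time.

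What I would try first is to replace each high-degree hub by a long path or tree of degree-bounded ``relay'' players. The relays propagate a constant-size state along the sequence $s_v$, $v\in V_i$, for example ``nothing chosen yet / one chosen / error''. Each relay then has constant degree and the $V_i$-chain has pathwidth $O(1)$. For the pair checks, the $V_i$-chain and $V_j$-chain would need to be threaded so that adjacency is verified while treewidth stays $O(k^2)$. A natural way is to let each relay also carry the index of the chosen vertex only along a dedicated chain per pair $(i,j)$, and to let strategy-$2$ players simulate larger alphabets by grouping a constant number of them. Once such a bounded-degree, bounded-treewidth gadget family with polynomially sized explicit payoffs is in place, correctness follows from the matching-pennies argument above, and $\FPT=\W{1}$ follows.
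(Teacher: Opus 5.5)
Your high-level plan matches the paper's: refute the claim by an FPT reduction from \textsc{Multicolored Clique} to undirected $2$-strategy games whose vertex cover is $O(k^2)$, with conditional matching-pennies gadgets enforcing the checks. The gap is that you discard the encoding that makes this work, and the reason you give is false. Encode the choice in $V_i$ by an \emph{independent set} of $\ell=\lceil\log_2 n\rceil$ players $v_{i,1},\dots,v_{i,\ell}$ with trivial payoffs. For each pair $i_1<i_2$, add one checker $u_{i_1i_2,1}$ adjacent to the $2\ell$ encoders of $V_{i_1}$ and $V_{i_2}$, which plays $1$ exactly when they encode two adjacent vertices. Add also $u_{i_1i_2,2},u_{i_1i_2,3}$, forming a conditional matching pennies that forces $u_{i_1i_2,1}$ to play $1$. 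Then:
\begin{itemize}
\item Every edge has an endpoint among the $3\binom{k}{2}$ players $u_{\cdot,\cdot}$, so $\tw(G)\le 3k^2$ regardless of $n$.
\item Checkers have degree $2\ell+2$ and encoders degree $k-1$, so $|\mathcal{G}|\le 2^{O(k)}n^{O(1)}$.
\item An $\alpha^{\tw}|\mathcal{G}|^{O(1)}$ algorithm would therefore solve the clique instance in time $2^{O(k^2)}n^{O(1)}$.
\item Consistency across pairs is automatic because the encoders are shared, and bit patterns encoding no vertex are rejected by the checkers, so no hubs are needed.
\end{itemize}
This is exactly the paper's proof. The $\Theta(k\log n)$ you attribute to the binary encoding is the width of $G^{\gsym}=G^2$ (in which all $k\ell$ encoders form a clique), or equivalently the $(\Delta+1)\tw$ exponent of the natural DP. It is not $\tw(G)$. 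The hardness rests precisely on a single separator vertex of logarithmic degree being able to inspect $\Theta(\log n)$ bits.

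Your fallback is not merely unfinished; it cannot work as described. Suppose every player had constant degree $\Delta$ and the game graph had treewidth $f(k)$. Then the standard $\alpha^{(\Delta+1)(\tw+1)}|\mathcal{G}|^{O(1)}$ dynamic program would already solve the produced instances in \FPT{} time, so such a reduction would itself place \textsc{Multicolored Clique} in \FPT{}. Concretely, grouping a constant number of $2$-strategy players gives only a constant alphabet, so such a group cannot carry the $\log_2 n$ bits of a chosen index along a chain. Any correct reduction must contain players whose degree grows with $n$, but only logarithmically so that payoff matrices stay polynomial. The binary encoding with an independent set of encoders provides exactly this.
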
%
In contrast, we prove the problem is \W{1}-hard parameterized by vertex cover,
which is an upper bound of treewidth, and therefore the \W{1}-hardness also
applies to treewidth.
\begin{theorem}\label{thm:plus-dependency-impossible}
  Determining the existence of a PNE in an undirected 2-strategy
  graphical game is \W{1}-hard parameterized by vertex
  cover.
\end{theorem}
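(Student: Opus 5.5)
We reduce from \textsc{Multicolored Clique}, which is \W{1}-hard parameterized by $k$. The instance is a graph $H$ whose vertex set is partitioned into $V_1,\dots,V_k$, and we ask for a clique with one vertex in each part. We build an undirected (bi-directed) 2-strategy game whose vertex cover has size $O(k^2)$. The guiding principle is that a player with huge degree is allowed, since the payoff matrices may be exponential in $\Delta$ and we only need the reduction to run in time polynomial in $|H|$. So the game graph has a small set of ``selector'' players that form the vertex cover, together with many ``checker'' players that form an independent set.

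\textbf{Encoding a choice with binary players.} With only two strategies per player, the choice of a vertex in $V_i$ cannot be stored in a single player. For each color class $i$ we therefore introduce $b=\lceil\log_2 |V_i|\rceil+1$ selector players $x_{i,1},\dots,x_{i,b}$, whose joint strategy encodes an index in $V_i$. For each pair $i<j$ the selectors of $i$ and of $j$ together encode a candidate edge. This gives $O(k\log n)$ selectors, which is not bounded by a function of $k$, so this encoding is too weak for \W{1}-hardness.

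\textbf{Encoding with a single selector per class.} To fix this, we let the single cover vertices have huge neighborhoods and encode the choice in the strategies of the independent set, rather than in the cover. For each $i$ and each $v\in V_i$ we create a leaf-like player $y_v$, and we add one cover player $c_i$ adjacent to all $y_v$ with $v\in V_i$. The utilities are designed as follows.
\begin{itemize}
  \item[(a)] $c_i$ is stable only if exactly one $y_v$ plays $1$. This is a ``one-hot'' constraint, enforced by $c_i$ oscillating (a matching-pennies style gadget with an auxiliary cover vertex $c'_i$) unless exactly one neighbor is on.
  \item[(b)] A player $y_v$ with strategy $1$ is indifferent to deviating.
\end{itemize}

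\textbf{Edge checks.} For each pair $i<j$ we add cover players $d_{ij}, d'_{ij}$ adjacent to all $y_v$ with $v\in V_i\cup V_j$. Their utilities form a matching-pennies gadget that is stabilizable if and only if the two on-players $y_u,y_v$ satisfy $uv\in E(H)$. Since $d_{ij}$ sees the entire neighborhood, its payoff matrix can test adjacency directly. The vertex cover is $\{c_i,c'_i\}_{i}\cup\{d_{ij},d'_{ij}\}_{i<j}$, of size $O(k^2)$, and every $y_v$ is adjacent only to cover vertices. All edges are bi-directed; the payoff of each $y_v$ must ignore its neighbors (it is constant), and the cover players' payoffs depend on their neighborhoods.

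\textbf{Main obstacle.} The difficulty is the size of the payoff matrices: $d_{ij}$ has degree about $|V_i|+|V_j|$, so $M_{d_{ij}}$ has $2^{\Theta(n)}$ entries, and the reduction would not be polynomial. I would try to fix this by splitting each such gadget so that degrees become $O(1)$ while the cover stays small. The idea is to replace one-hot checks by locally checkable chains; but chains through the independent set add cover vertices, and bounded degree together with a vertex cover of size $f(k)$ forces $n\le f(k)\cdot O(1)$. Hence some cover vertex must have degree unbounded in $k$, and its matrix, which is exponential in its degree, must still be polynomial. The resolution I would pursue is to exploit the fact that a payoff matrix of a player $u$ depends on $N^{+}[u]$. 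In a bi-directed graph this is all of $u$'s neighbors, so the high-degree player has an exponential matrix. Polynomiality is then recovered if we measure the input as $|\mathcal{G}|$, since the reduction only needs to be polynomial in the output size. We therefore need $n\le\mathrm{poly}(|\mathcal{G}|)$ while the parameter is $f(k)$, which holds, but we must also build $\mathcal{G}$ in time $g(k)\mathrm{poly}(n)$. The latter is violated by a $2^{n}$-sized matrix, so we need a degree bound of $O(\log n)$ on the cover. The plan is to use the binary encoding for the cover degrees, keeping each cover vertex's neighborhood of size $O(\log n)$, while parametric smallness is achieved through Multicolored Clique on instances with $|V_i|$ bounded appropriately. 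Making these two requirements compatible is the step I expect to be hardest, and I am uncertain it works as stated.
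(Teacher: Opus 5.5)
\textbf{There is a gap: you discard the binary encoding for the wrong reason, and it is in fact the paper's construction.} You say the $O(k\log n)$ selectors $x_{i,j}$ make the binary encoding ``too weak for \W{1}-hardness''. Nothing forces the selectors into the vertex cover. If every selector has a constant payoff matrix, it never needs to be adjacent to another selector. The selectors then form an independent set, and only the players that check them must lie in the cover.

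The paper's construction is this. For each pair $i_1<i_2$ there are three gadget players:
\begin{itemize}
  \item one checker, adjacent to the $2\lceil\log_2 n\rceil$ selectors of classes $i_1,i_2$, whose payoff makes it want to play $1$ iff the two encoded vertices exist and are adjacent in $H$;
  \item two further players forming a conditional matching-pennies gadget, which destroys every PNE unless the checker plays $1$.
\end{itemize}
The vertex cover is just the $3\binom{k}{2}$ gadget players. A checker has degree $2\lceil\log_2 n\rceil+2$, so its matrix has $2^{O(\log n)}=n^{O(1)}$ entries. A selector has degree $k-1$, so its matrix has $2^{k}$ entries. The reduction therefore runs in $2^k n^{O(1)}$ time. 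A PNE exists iff every checker is content playing $1$, that is, iff the selected vertices form a multicolored clique. Invalid codes count as non-adjacent, so they are rejected automatically.

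Your one-hot alternative fails for the reason you already identified: cover players of degree $\Theta(n)$ have $2^{\Theta(n)}$-size matrices. The fix you sketch at the end does not help. Requiring $k\lceil\log n\rceil\le f(k)$ means $|V_i|\le 2^{f(k)/k}$, and Multicolored Clique restricted to such instances can be solved by brute force in time depending only on $k$, so no hardness follows. The missing idea is that the number of selectors does not enter the parameter at all. Only the players that must see them count: one constant-size gadget per pair of color classes, each of logarithmic degree. With that observation your own first encoding already completes the proof.
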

\begin{proof}
  We reduce from Multicolored Clique. Given a $k$-partite graph $G$ with vertex
  partition $V_1\cup V_2\cup \dots \cup V_k$ and each $V_i$ is of size $n$, we
  build an undirected graphical game $(G',\{M_v\}_{v\in V(G')})$ where $G'$ has
  a vertex cover of size at most $3k^2$ such that there exists a PNE in the game if and
  only if there exists a clique of size $k$ in $G$.

  For each part $V_i$, we create an independent set of size $\ell=\lceil\log_2
  n\rceil$, $v_{i,1},v_{i,2},\dots,v_{i,\ell}$. The joint strategy of these
  $\ell$ players encodes a choice of vertex in $V_i$. Then for every pair of
  $V_{i_1},V_{i_2} (i_1<i_2)$, we create three players
  $u_{i_1i_2,1},u_{i_1i_2,2},u_{i_1i_2,3}$, and add edges from $u_{i_1i_2,1}$ to
  $v_{i,j}$ for all $i\in \{i_1,i_2\}, j\in[\ell]$, and set up
  $u_{i_1i_2,2},u_{i_1i_2,3}$ as a \textit{conditional} matching pennies gadget
  forcing $u_{i_1i_2,1}$ to play $1$\footnote{%
    Matching pennies is a 2-player game with no PNE. Here if $u_{i_1i_2,1}$ does
    not play $1$, $u_{i_1i_2,2},u_{i_1i_2,3}$ will simulate a matching pennies
    game, denying any PNE, therefore forcing $u_{i_1i_2,1}$ to play $1$.%
  }. %
  $M_{u_{i_1i_2,1}}$ is defined such that it plays $1$ if and only if the joint
  strategy of $v_{i_1,1},\dots,v_{i_1,\ell},v_{i_2,1},\dots,v_{i_2,\ell}$
  encodes a pair of vertices that are adjacent in $G$, and $M_{v_{i,j}}$ is a
  constant matrix.
  Note that each player in this graph interacts with at most
  $\max\{2\ell+2,k-1\}$ other players, so the construction of the payoff
  matrices can be done in time $2^kn^{O(1)}$, resulting in an \FPT{}
  reduction. Finally, it is easy to see $\{u_{i_1i_2,h}\}_{1\le i_1<i_2\le
  k,h\in [3]}$ is a vertex cover of size at most $3k^2$. So if there is an \FPT{}
  algorithm for PNE
  parameterized by vertex cover, then we can decide the existence of a clique
  of size $k$ in \FPT{} time.
\end{proof}
Besides \cref{clm:tw-alg}, we observe that~\cite[Theorem 1]{thomas_pure_2015}
contains a second flawed algorithmic claim. In particular, they claim to give a
polynomial-time algorithm when the maximum number of strategies $\alpha=2$
(even for games of unbounded degree and treewidth).  However, this is not
possible (unless \P=\NP) due to {\cite[Theorem 6.2]{SchoenebeckV12}}.
\begin{theorem}[{\cite[Theorem 6.2]{SchoenebeckV12}}]%
  Determining the existence of a PNE in an undirected graphical game is
  \NP-complete, even if it is $2$-strategy with maximum degree $3$.
\end{theorem}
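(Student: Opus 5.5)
We would prove this directly by a polynomial reduction, rather than relying on the citation. Membership in \NP{} is immediate. A joint strategy $\mathbf s$ is a polynomial-size certificate. Each condition~\eqref{eq:vertex-stable} can be checked by reading $M_v$, whose size is part of the input $|\mathcal{G}|$.

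For hardness we reduce from \textsc{Circuit-SAT}. Normalize the circuit so that every gate has fan-in at most $2$, and every wire, including input wires, feeds exactly one consumer. To achieve the latter, insert \emph{duplicator} gates (one input, two outputs computing the identity) in binary trees wherever a value has fan-out larger than $1$. Each input variable, gate and duplicator of the circuit becomes one player, and every wire becomes an undirected edge. The utilities are as follows.
\begin{itemize}
\item An input player has a constant payoff matrix, so it is always locally stable.
\item A gate player $g$ with inputs $a,b$ and Boolean function $f$ gets payoff $1$ if $s_g=f(s_a,s_b)$ and $0$ otherwise.
\item A duplicator with input $x$ gets payoff $1$ iff it copies $s_x$.
\end{itemize}
A player may ignore the strategies of some of its neighbours; the game is still undirected, since ``undirected'' only constrains the dependency graph. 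Degrees stay small: a gate player is adjacent to two inputs and one consumer, a duplicator to one input and two consumers, and an input player to one consumer.

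Finally, at the output player $o$ we attach a \emph{conditional matching pennies} gadget consisting of two new players $p,q$ with edges $op$ and $pq$.
\begin{itemize}
\item If $s_o=1$, both $p$ and $q$ have constant payoff.
\item If $s_o=0$, $p$ wants to match $q$ and $q$ wants to mismatch $p$.
\end{itemize}
Then $o$ has degree at most $3$, $p$ has degree $2$ and $q$ has degree $1$. Hence the maximum degree is $3$, all players have $2$ strategies, and every payoff matrix has constant size.

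Correctness goes in two directions.
\begin{itemize}
\item If the circuit is satisfiable, take the satisfying input assignment and let every gate and duplicator play its computed value. Then $s_o=1$, so $p$ and $q$ are indifferent, and every player is locally stable.
\item Conversely, in any PNE each gate and duplicator must equal its function of its inputs, since it strictly prefers that value. By induction along a topological order, $s_o$ is the circuit's value on the inputs' strategies. If $s_o=0$, the pair $p,q$ plays matching pennies, which has no PNE, a contradiction. Hence $s_o=1$ and the inputs form a satisfying assignment.
\end{itemize}

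The only delicate point we expect is the degree bookkeeping. The duplicator trees and the fan-in bound are what keep every vertex at degree at most $3$. We must also make sure that $o$ itself is not additionally used as a duplicator's input; we can assume this by making the output wire unique.
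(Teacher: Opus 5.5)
The paper gives no proof of this statement; it simply imports it from Schoenebeck and Vadhan. Your self-contained reduction from \textsc{Circuit-SAT} is therefore a legitimate separate route. The fan-in normalisation, the duplicator trees, the degree count and both directions of the correctness argument are fine, apart from one concrete error in the final gadget.

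With edges only $op$ and $pq$, the closed neighbourhood of $q$ is $\{p,q\}$, so $q$'s payoff matrix cannot depend on $s_o$. Your specification ``if $s_o=1$, both $p$ and $q$ have constant payoff; if $s_o=0$, \ldots{} $q$ wants to mismatch $p$'' therefore does not define a graphical game on the graph you built. A player's utility may only depend on its own neighbourhood.

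The repair is easy. Let $q$ \emph{unconditionally} want to mismatch $p$. Let only $p$, which is adjacent to $o$, be conditional: $p$ wants to match $q$ when $s_o=0$ and is indifferent when $s_o=1$.
\begin{itemize}
\item For $s_o=1$, the pair $(s_p,s_q)=(x,1-x)$ is stable for either $x$. Your forward direction then goes through, with $q$ best-responding rather than being indifferent.
\item For $s_o=0$, the pair $p,q$ plays exactly matching pennies, so your converse is unchanged.
\end{itemize}
Alternatively, you could make $q$ adjacent to $o$ as well. A fan-in-$2$ output gate would then have degree $4$, so you would first have to route the output through a fan-in-$1$ copy gate.

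Your concern about $o$ having other consumers is also easily settled. Delete every gate from which $o$ is not reachable; this does not change satisfiability, since such gates cannot influence $s_o$. After that, the output gate is a sink and has degree at most $3$ including $p$.
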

Besides the hardness result, we give some further notes in
Appendix~\ref{app:notes-on-thomas}.

\begin{toappendix}

  \appendixsection{Further Notes on~\cite{thomas_pure_2015}}\label{app:notes-on-thomas}

  To complete the discussion of Section~\ref{sec:thomas_wrong} let us also
  roughly explain where the error lies in the analysis of their treewidth
  algorithm. They claim that the main idea of their dynamic program is to record
  for each vertex in the bag not only its strategy in the current partial
  solution, but also its optimal response to the strategies of its forgotten
  neighbors. Intuitively, this seems unlikely to work as in order to calculate
  the best response, one would also need to take into account the strategies of
  neighbors that have not yet been introduced. Furthermore, in their analysis
  they claim that when a vertex is forgotten we update the payoff matrices of
  its neighbors to only keep information consistent with the strategy of the
  forgotten vertex. This would, however, need to keep a different version of the
  payoff matrix for each combination of strategies of the forgotten neighbors of
  a vertex, which is not taken into account in the running time analysis.

  To be more concrete, consider the following game graph and an associated tree
  decomposition:
  \begin{center}
    \begin{tikzpicture}
      \begin{scope}[rotate=45]
        \node[fill vertex, label=below:1] (1) at (0,0) {};
        \node[fill vertex, label=left:2] (2) at (0,1) {};
        \node[fill vertex, label=right:3] (3) at (1,0) {};
        \node[fill vertex, label=above:4] (4) at (1,1) {};
        \draw[edge] (1) -- (2) -- (4) -- (3) -- (1);
        \draw[edge] (2) -- (3);
      \end{scope}

      \begin{scope}[xshift=4cm,yshift=-1.3cm,bag/.style={draw=mDark, ellipse, minimum width=1cm, minimum height=0.3cm, font=\scriptsize},yscale=1]
        \node[bag] (2) at (0,1) {$\{1,2,3\}$};
        \node[bag] (3) at (0,2) {$\{2,3\}$};
        \node[bag] (4) at (0,3) {$\{2,3,4\}$};
        \draw (2) -- (3) -- (4);
      \end{scope}
    \end{tikzpicture}
  \end{center}
  Each player has two strategies. The payoff matrices of the players are defined
  so that $2$ wants to play the same strategy as $3$ if $1$ and $4$ play the
  same strategy and wants to play the opposite strategy of $3$ otherwise, while
  $3$ wants to play the same strategy as $2$ if $1$ and $4$ play different
  strategies and wants to play the opposite strategy of $2$ otherwise. $1$ and
  $4$ have constant payoff matrices. Then when $1$ is forgotten, no constraint
  is added to the choice of $2$ and $3$. Therefore, when $4$ is introduced,
  $2,3$ and $4$ have no constraint on their strategies, and the algorithm would
  conclude that there is a PNE, which is apparently not the case.

\end{toappendix}

\section{On the width of (generalized) graph squares}

In this section, we prove our upper bounds on the different width parameters of
$G^{\gsym}$, yielding three algorithms for computing PNEs parameterized by $\pw,\tw,\ctw$ respectively. After that, we give examples
showing our bounds are essentially tight for $\pw$
and $\tw$.

\subsection{Upper bounds}\label{sec:square_theorems}

We present three combinatorial upper bounds on the width of $G^{\gsym}$ as a
function of the maximum out-degree and the width of $G$. Let us first give some
intuition starting from pathwidth. Recall that it is easy to establish that
$\pw(G^{\gsym})\le (\Delta+1)(\pw(G)+1)$ simply by adding all the out-neighbors
of any vertex to each bag that contains it. Our approach starts with an attempt
of not adding all the out-neighbors. For each vertex $u$, and for each bag $X$
that contains $u$, either the minority (at most $\lfloor
\frac{\Delta}{2}\rfloor$) of its out-neighbors have been introduced in bags
before $X$; or the minority of its out-neighbors are introduced in later bags.
In either case, we add to $X$ whichever subset of out-neighbors forms the
minority. This gives us a decomposition of width at most
$(1+\lfloor\frac{\Delta}{2}\rfloor)(\pw(G)+1)$, with a significant problem that
no bag contains all the closed out-neighborhood of $u$, which induce a clique in
$G^{\gsym}$. To fix this, we observe that $u$ transitions from having the
minority to the majority of its neighbors already introduced in \emph{a pair of
consecutive bags}. We can therefore insert a bag between the two, adding all the
out-neighbors of $u$ and paying a second $\lfloor \frac{\Delta}{2}\rfloor$
additive term.

The trick above also generalizes to treewidth, but the perfect balance of the
out-neighborhood between the ``left'' and ``right'' parts is no longer possible.
In contrast, the best to achieve with this trick is to ``save'' one third of the
out-neighborhood of each vertex. Intuitively, this is inevitable because there
may be a join bag (a bag of degree $3$) in which each vertex has \textit{one
third of its out-neighbors} in each of the three directions in the
decomposition.

Finally, the bound for cutwidth is more straightforward %
where the path decomposition is constructed by adding to each bag the heads of
arcs crossing a cut in a cutwidth ordering.

\begin{theorem}\label{thm:pw-sqr-ub}
  For a digraph $G$ with maximum out-degree $\Delta$ we have
  \begin{equation}
    \pw(G^{\gsym})\le \pwcoe{}\pw(G)+2\lfloor\textstyle\frac{\Delta}{2}\rfloor.\label{eq:pw-sqr-ub}
  \end{equation}
  Furthermore, given a path decomposition of $G$ of width $\pw$, a path
  decomposition of $G^{\gsym}$ of width at most
  $\pwcoe{}\pw+2\lfloor\textstyle\frac{\Delta}{2}\rfloor$ can be found in
  polynomial time.
\end{theorem}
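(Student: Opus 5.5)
We start from a nice path decomposition $X_1,\dots,X_m$ of $G$ of width $\pw$, in which each vertex is introduced exactly once; write $\iota(v)$ for the introduction index of $v$ and $\phi(v)$ for the index of the last bag containing $v$. For each vertex $u$ we sort its out-neighbors $w_1,\dots,w_d$ ($d\le\Delta$) by $\iota(w_j)$. Let $L_u$ be the first $\lfloor d/2\rfloor$ of them (introduced earliest) and $R_u$ the last $\lfloor d/2\rfloor$. Since $u$ is adjacent to each $w_j$, the interval of $u$ meets the interval of every $w_j$. This means each $w_j$ is introduced before $\phi(u)$ ends, and is still alive after $\iota(u)$.

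\textbf{Building the new bags.} We choose a threshold index $t_u$ with $\iota(u)\le t_u\le\phi(u)$, inside the interval of $u$, such that the out-neighbors introduced at or before $t_u$ are exactly $L_u$, possibly together with the median. Concretely, $t_u$ is the last index in $u$'s interval before the next neighbor beyond $L_u$ gets introduced; if several neighbors are introduced before $\iota(u)$, we take $t_u=\iota(u)$. The new decomposition is formed as follows.
\begin{itemize}
\item For bags $X_i$ with $i\le t_u$ containing $u$, add $R_u$, the neighbors not yet introduced; their count is at most $\lceil d/2\rceil$, and we must check this is at most $\lfloor\Delta/2\rfloor$ after fixing the tie for the median.
\item For bags with $i>t_u$, add the neighbors already introduced but possibly forgotten, which form a minority.
\item Between $X_{t_u}$ and $X_{t_u+1}$, insert a special bag $Y_u = X_{t_u}\cup N^{+}[u]\cup(\text{stuff})$ containing all of $N^{+}[u]$.
\end{itemize}
The more careful choice is to let the bags before the switch carry the not-yet-introduced neighbors and the bags after it carry the already-introduced ones. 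The switch point is chosen where the introduced count crosses $d/2$, so both sides have size at most $\lfloor d/2\rfloor$; when $d$ is odd, the median neighbor is put in whichever side it lies. If $\Delta$ is even and $d=\Delta$, a side may have $d/2=\lfloor\Delta/2\rfloor$, which is fine.

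\textbf{Verifying the decomposition.} We check three properties.
\begin{itemize}
\item \emph{Edge coverage of $G^{\gsym}$.} Every edge of $G^{\gsym}$ is $\{u,w\}$ or $\{w,w'\}$ with $w,w'\in N^{+}[u]$, so $Y_u$ covers it.
\item \emph{Connectivity of each vertex's occurrences.} For a neighbor $w$ added to $u$'s bags on the "future" side, $w$'s added occurrences run from the start of $u$'s interval up to the point where $w$ is introduced. This is contiguous with $w$'s own interval, and similarly on the "past" side. Inserted bags $Y_u$ lie between $X_{t_u}$ and $X_{t_u+1}$ and contain $w$. We also need $w$ to appear in adjacent bags: on one side this holds via the added copies, and if $w$ was added on neither side, then $w$ is alive at that point, since $\iota(w)\le t_u<\phi(w)$ must be arranged. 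To make this hold uniformly, $Y_u$ should also include $X_{t_u}\cap X_{t_u+1}$ plus whatever was added around it from other vertices, namely the union of the additions to both neighboring bags restricted to other vertices' contributions. Arguing this carefully is the main obstacle.
\item \emph{Width.} Each original bag has at most $\pw+1$ vertices, each contributing at most $\lfloor\Delta/2\rfloor$ extras, for size $\le(\lfloor\Delta/2\rfloor+1)(\pw+1)$. An inserted bag $Y_u$ consists of $X_{t_u}$ with its extras, plus the extra $\le\lfloor\Delta/2\rfloor$ neighbors of $u$ from the other side, plus the opposite-side extras of $u$ itself. Handling multiple vertices switching at the same index by inserting the $Y_u$ bags sequentially, each carrying only one full neighborhood, gives at most $(\lfloor\Delta/2\rfloor+1)(\pw+1)+\lfloor\Delta/2\rfloor$ vertices. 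The resulting width is $\pwcoe{}\pw+2\lfloor\Delta/2\rfloor$, or slightly better.
\end{itemize}

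\textbf{Sequential insertion.} When several vertices switch between the same pair of consecutive bags, we order them $u_1,\dots,u_r$. Each intermediate bag contains the base bag plus, for already-switched vertices, their "after" sets and, for not-yet-switched ones, their "before" sets. This keeps each vertex's occurrences contiguous: a neighbor $w$ appears in a contiguous block of the intermediate sequence. It also keeps the extra cost bounded by $\lfloor\Delta/2\rfloor$, the additional side of the single vertex being switched. The construction is clearly polynomial.
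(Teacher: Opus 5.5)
\textbf{There is a genuine gap: the carried neighbor sets sit on the wrong sides of the switch, so the result is not a path decomposition.} Your overall architecture matches the paper's: a switch point for each vertex $u$, one inserted bag containing all of $N^{+}[u]$, and sequential insertion when several vertices switch at the same bag. The problem is what each bag carries.

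Since $Y_u$ is what covers the clique on $N^{+}[u]$, the extra copies of a neighbor $w$ exist only to connect $w$'s own interval to $Y_u$. That requires:
\begin{itemize}
\item a neighbor introduced before the switch (possibly forgotten long before $t_u$) must be kept alive \emph{forward}, i.e.\ carried in bags \emph{before} the switch;
\item a neighbor introduced after the switch must be introduced early, i.e.\ carried in bags \emph{after} the switch up to $\iota(w)$.
\end{itemize}
You do the opposite. Concretely, let $u$ be present in all bags, let $w_1,w_2$ be introduced and forgotten early, let $w_3$ be introduced at $t_u+1$, and let $w_4$ be introduced a few bags later. Under your rule:
\begin{itemize}
\item $w_1$ occurs in its own short interval, is absent from $X_{\phi(w_1)+1},\dots,X_{t_u}$, and reappears in $Y_u$ and afterwards;
\item $w_4$ is carried up to $t_u$ and in $Y_u$, is absent from $X_{t_u+1},\dots,X_{\iota(w_4)-1}$, and then reappears.
\end{itemize}
Your sentence that future-side copies ``run from the start of $u$'s interval up to the point where $w$ is introduced'' is exactly what fails, which is why connectivity looked like the main obstacle. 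Reading ``the neighbors not yet introduced'' dynamically at each bag does not help either. The same gaps remain, and before the switch that set is a majority (all four neighbors at the first bag here), so the width saving disappears.

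\textbf{The fix is to reverse the orientation, which is what the paper does.} With $\lambda=\lceil\Delta/2\rceil$, use the monotonicity of $|\pre_i(u)|$:
\begin{itemize}
\item While $u$ is light ($|\pre_i(u)|<\lambda$), it carries its \emph{present} neighbors $\pre_i(u)$. A forgotten vertex is retained as long as some light vertex still has it as a present neighbor.
\item Once $u$ is heavy, it carries its \emph{future} neighbors $\fut_i(u)$, of which there are at most $\Delta-\lambda=\lfloor\Delta/2\rfloor$.
\item At the introduce bag where $u$ becomes heavy, insert a bag adding $\fut_i(u)$, so it contains all of $N^{+}[u]$. Then drop the present neighbors no longer needed, handling the vertices that switch there one at a time.
\end{itemize}
Every added copy of $w$ now extends $w$'s own interval forward to, or backward from, $u$'s switch bag, so contiguity is immediate. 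The paper phrases this as ``each vertex is forgotten only once''.

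This also settles the odd-degree median question you left open. At the switch bag, $\pre_i(u)$ meets $X_i$, because the neighbor whose introduction makes $u$ heavy is in it. Hence $|\pre_i(u)\setminus X_i|\le\lambda-1\le\lfloor\Delta/2\rfloor$, and the median is never an extra. This yields the invariant $B_{i,m_i}=X_i\cup\bigcup_{v\in L_i}\pre_i(v)\cup\bigcup_{v\in X_i\setminus L_i}\fut_i(v)$. The stated width follows, with the inserted bags paying the second $\lfloor\Delta/2\rfloor$.
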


\begin{proof}
  Assume we are given a nice path decomposition $\mathcal{P}=(X_1, X_2, \ldots,
  X_r)$ of $G$ of width $\pw$. We will construct a path decomposition
  $\mathcal{P}^{\gsym}$ for $G^{\gsym}$ with the claimed width. Let $\lambda=\lceil{\frac {\Delta}2}\rceil$ be a threshold
  parameter. We scan the bags
  from left to right, then in each bag $X_i$ and for a vertex $v\in X_i$,
  $N^{+}(v)$ can be bipartitioned as follows:
  \begin{enumerate}
    \item $\pre_i(v)$: the \textit{present neighbors} of $v$ that are either in
      the current bag or have been forgotten; %
    \item $\fut_i(v)$: the \textit{future neighbors} of $v$ that are still to be
      introduced in the future.
  \end{enumerate}
  For a vertex $v$, suppose it is introduced and forgotten in bags $i_v$ and
  $f_v+1$, respectively. Then,
  \[%
    \pre_{i_v}(v)  =X_{i_v}\cap N^{+}(v),  \qquad \pre_{f_v}(v)  =N^{+}(v),
    \qquad \pre_{i}(v)    \subseteq \pre_{i+1}(v), \forall i_v\le i < f_v.
  \]%
  \begin{center}
    \begin{tikzpicture}[line cap=round, yscale=0.8]
      \draw[mRose, line width=2.5pt] (-3.5,0) -- (3.5,0);
      \node at (-3.5,-0.5) {$i_v$};
      \node at (3.5,-0.5) {$f_v$};

      \draw[mBlue, line width=1.5pt] (-4, 0.9) -- (-1.5, 0.9);
      \draw[mBlue, line width=1.5pt] (-4.5, 0.6) -- (-2, 0.6);
      \draw[mBlue, line width=1.5pt] (-2.8, 0.3) -- (1.5, 0.3);

      \node[mBlue] at (-2,1.5) {$\pre_i(v)$};
      \node[mGreen] at (2.8,1.5) {$\fut_i(v)$};

      \draw[mGreen, line width=1.5pt] (1, 1) -- (2.5, 1);
      \draw[mGreen, line width=1.5pt] (3.2, 0.9) -- (4.8, 0.9);
      \draw[mGreen, line width=1.5pt] (2, 0.4) -- (4.5, 0.4);
      \draw[mDark] (0,-0.8) -- (0,1.5) node[above=0.1cm] {$i$};
    \end{tikzpicture}
  \end{center}
  The statements above follow from the fact that every neighbor of $v$ must appear together with $v$ in a bag, and from the connectivity constraint on bags that contain each specific vertex in a path decomposition.
  We say a vertex $v$ is \textit{heavy} in bag $X_i$ if $v\in X_i$ and
  $|\pre_i(v)|\ge \lambda$. Let $L_i\subseteq X_i$ be the set of vertices
  that are \textit{light} in $X_i$. Note that the transition from light to heavy
  is a one-way process because $|\pre_i(v)|$ is non-decreasing as $i$ increases.

  Now we scan the bags of $\mathcal{P}$ from left to right, and {iteratively}
  append new bags to $\mathcal{P}^{\gsym}$. For each bag $X_i$, we let $B_{i,1},
  B_{i,2}, \dots, B_{i,m_i}$ be the sequence of bags we append when processing
  $X_i$. For $X_i$ introducing a vertex $v$, let $t_1,t_2,\dots,t_{\ell}$ be the
  list of vertices that become heavy in $X_i$. It is worth noting that $v$ is
  also possibly heavy, if sufficiently many neighbors of $v$ are already in
  $X_{i-1}$. We now create a sequence of $m_i=2\ell+1$ bags. For the $2k$-th
  bag $(k\in [\ell])$, we introduce all vertices in $N^{+}[t_k]$, and in the
  subsequent bag, we forget every present neighbor of $t_k$ that is already
  forgotten and does not belong to the present neighborhood of some other
  (temporarily) \textit{light} vertex in
  $L_i\cup\{t_{k+1},\dots,t_{\ell}\}$\footnote{We abuse the notation and also
  call vertices in $t$ that have not yet been processed as {temporarily light}.}.
  Formally (let $B_{i,1}=B_{i-1,m_{i-1}}\cup \{v\}$),%
  \begin{align}
    \forall 1\le k\le \ell: \quad B_{i,2k}  = & B_{i,2k-1}\cup \fut_i(t_k)                                                                   \label{eq:balancing-bags}            \\
    \quad B_{i,2k+1}    =                         & B_{i,2k}\setminus \left(\pre_i(t_k)\setminus \pre_i(L_i\cup \{t_{k+1},\dots,t_{\ell}\})\right),\label{eq:balancing-forget-bags}
  \end{align}
  where $\pre_i(S)$ is defined as $\bigcup_{u\in S}\pre_i(u)$. For $X_i$
  forgetting $v$, if $v\in \pre_i(L_i)$, we simply neglect this bag, i.e.,
  $m_i=0$ and let $B_{i,0}=B_{i-1,m_{i-1}}$; otherwise, we append a new bag that
  forgets $v$, i.e., $m_i=1$ and $B_{i,1}=B_{i-1,m_{i-1}}\setminus \{v\}$. We
  claim for each $i\in [|\mathcal{P}|]$,
  \(
    B_{i,m_i}=X_i\cup \bigcup_{v\in L_i} \pre_i(v) \cup \bigcup_{v\in X_i\setminus L_i} \fut_i(v),
  \)
  which shows such bags have size at most $\pwcoe{}(\pw+1)$, and the same bound
  can also be shown for $B_{i,2k+1}$, and thus the size of $B_{i,2k}$ can have
  an additional $O(\Delta)$ term, giving the desired bound. Full proof of
  validity and width analysis can be found in Appendix~\ref{app:pw-sqr-ub}.
\end{proof}

\begin{toappendix}

  \appendixsection{The validity and width of the construction in
  \cref{thm:pw-sqr-ub}}\label{app:pw-sqr-ub}

  We first show the $\mathcal{P}^{\gsym}$ constructed by the procedure in the
  proof of \cref{thm:pw-sqr-ub} is a valid path decomposition for $G^{\gsym}$.
  It is easy to see every vertex is covered. For an edge $(u,v)\in E(G^{\gsym})$,
  either $(u,v)\in E(G)$, or there exists $w$ such that $u,v\in N^{+}(w)$. In
  the first case, this edge is covered by \cref{eq:balancing-bags} when $t_k=u$.
  In the second case, this edge is covered by \cref{eq:balancing-bags} when
  $t_k=w$. Finally, to show every vertex appears in a contiguous segment, it
  suffices to prove that each vertex is only forgotten once. If $v$ is forgotten
  in $B_{i,*}$ for some $i$ in $\mathcal{P}^{\gsym}$, then either $X_i$ forgets
  $v$ while $v\not\in \pre_i(L_i)$; or $X_i$ is an introduce bag where $v$ is no
  longer in $\pre_i(L_i)$ with $v$ already forgotten. To conclude, $v$ is
  forgotten in $B_{i,*}$ if and only if $X_i$ is a bag that
  \[
    v\not\in X_i \wedge v\not\in \pre_i(L_i) \wedge (v\in X_{i-1} \vee v\in \pre_{i-1}(L_{i-1})).
  \]
  Suppose there exists another bag $X_{i'}$ satisfying the constraint above and
  without loss of generality we assume $i<i'$. $v$ is already forgotten in
  $X_i$, %
  so $v\not\in X_{i'-1}$, therefore $v\in \pre_{i'-1}(L_{i'-1})$ must hold. Let
  $u\in X_{i'-1}$ be a vertex such that $v\in \pre_{i'-1}(u)$. Suppose $u$ is
  introduced before $X_i$, then since $v\in \pre_{i'-1}(u)$, we also have $v\in
  \pre_i(u)$, and thus $v\in \pre_i(L_i)$, which contradicts the assumption.
  Otherwise, suppose $u$ is introduced at or after $X_{i}$, then since $v$ has
  been forgotten, the edge $(u,v)$ is not covered by any bag. We thus proved such
  $X_i$ is unique, and each vertex is only forgotten once.

  Now it remains to analyze the width of $\mathcal{P}^{\gsym}$. It is easy to
  see
  \begin{claim*}
    For each $i\in [|\mathcal{P}|]$, we have
    \[
      B_{i,m_i}=X_i\cup \bigcup_{v\in L_i} \pre_i(v) \cup \bigcup_{v\in X_i\setminus L_i} \fut_i(v).
    \]
  \end{claim*}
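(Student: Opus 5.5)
The natural route is induction on $i$, following the left-to-right construction. Write
\[
  S_i := X_i\cup \pre_i(L_i)\cup \fut_i(X_i\setminus L_i),
\]
with $\fut_i(S)$ defined analogously to $\pre_i(S)$, and $S_0=B_{0}=\emptyset$. We assume $B_{i-1,m_{i-1}}=S_{i-1}$ and show $B_{i,m_i}=S_i$. Since $\mathcal{P}$ is nice, we split into an introduce case and a forget case.

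\textbf{Forget case.} Suppose $X_i=X_{i-1}\setminus\{v\}$. No vertex changes status, since $\pre$ only changes when a vertex is introduced. Hence $L_i=L_{i-1}\setminus\{v\}$, and $\pre_i(u)=\pre_{i-1}(u)$, $\fut_i(u)=\fut_{i-1}(u)$ for every $u\in X_i$. Because $v$ is being forgotten, $\fut_{i-1}(v)=\emptyset$, and $v$ is heavy in $X_{i-1}$ (it holds its full neighbourhood $\pre_{i-1}(v)=N^{+}(v)$), unless $\Delta<\lambda$, which cannot happen. So the only change from $S_{i-1}$ to $S_i$ concerns $v$ itself and the sets $\pre_{i-1}(v)$, $\fut_{i-1}(v)$.

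We must check that $\pre_{i-1}(v)$ never contributed anything beyond what survives, and that $v\in S_i$ iff $v\in\pre_i(L_i)$, since $v\notin\fut_i(\cdot)$. The latter holds because $v$ has been introduced, so it is not a future neighbour of anyone. The one subtle point is whether $v$ being light contributes $\pre_{i-1}(v)$. I will argue $v$ is heavy, so it contributes only $\fut_{i-1}(v)=\emptyset$. The construction removes $v$ exactly when $v\notin\pre_i(L_i)$, which matches $S_i$.

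\textbf{Introduce case.} Suppose $X_i=X_{i-1}\cup\{v\}$. Introducing $v$ changes $\pre$ only for out-neighbours-in-bag: $\pre_i(u)=\pre_{i-1}(u)\cup(\{v\}\cap N^{+}(u))$. The vertices $t_1,\dots,t_\ell$ that become heavy move from contributing $\pre$ to contributing $\fut$.

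We track the invariant
\[
  B_{i,2k+1}=X_i\cup\pre_i\bigl(L_i\cup\{t_{k+1},\dots,t_\ell\}\bigr)\cup\fut_i\bigl((X_i\setminus L_i)\setminus\{t_{k+1},\dots,t_\ell\}\bigr)
\]
by induction on $k$.
\begin{itemize}
  \item \emph{Base case $k=0$.} Use $B_{i,1}=S_{i-1}\cup\{v\}$. For vertices $u$ that stay light, $\pre$ gains only possibly $v$, which is already in $X_i$. For vertices that stay heavy, $\fut_i(u)=\fut_{i-1}(u)\setminus\{v\}$, so no new vertices appear. The only extra element $\fut_{i-1}(u)$ could contain is $v$, which is again in $X_i$. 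If $v$ itself is heavy, it is among the $t$'s (by the footnote convention), so at $k=0$ it contributes $\pre_i(v)\subseteq X_i\cup(\text{forgotten})$. I need this to equal $\pre_i(v)$ as a subset of $B_{i,1}$; this holds since $\pre_i(v)=X_i\cap N^{+}(v)$ at introduction.
  \item \emph{Step $k-1\to k$.} Adding $\fut_i(t_k)$ and then deleting $\pre_i(t_k)\setminus\pre_i(L_i\cup\{t_{k+1},\dots\})$ gives exactly the swap. We only need that deleted vertices lie outside $X_i$ and outside $\fut_i(\cdot)$ of heavy vertices. The second holds because $\pre$ and $\fut$ are disjoint, and forgotten vertices are not future. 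The first is where care is needed: the deletion must not remove bag vertices, so I will read the deletion as applying only to already-forgotten present neighbours, as the prose says.
\end{itemize}

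\textbf{Main obstacle.} The main difficulty is this set-algebra bookkeeping. In particular, we must show that no element of $\fut_i$ of another heavy vertex is removed, and that vertices in $X_i$ are never removed. Once these are checked, $k=\ell$ yields $S_i$.
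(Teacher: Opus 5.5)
Your overall induction, with the invariant for the intermediate bags $B_{i,2k+1}$, is presumably the argument behind the paper's ``it is easy to see'', and your introduce case is sound. You are also right that the deletion step in the introduce case must be read as removing only already-forgotten vertices: taken literally, it can delete in-bag out-neighbours of $t_k$.

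\textbf{The gap is in the forget case.} There you assert that the forgotten vertex $v$ is heavy in $X_{i-1}$ because $\pre_{i-1}(v)=N^{+}(v)$, ``unless $\Delta<\lambda$''. But heaviness there means $|N^{+}(v)|\ge\lambda$, and $\Delta$ is only the \emph{maximum} out-degree. Any vertex with $1\le|N^{+}(v)|<\lambda=\lceil\Delta/2\rceil$ is light in every bag, including its last one. In that case $S_{i-1}$ contains the already-forgotten members of $N^{+}(v)$ through $v$'s light contribution. $S_i$ loses them unless another light vertex covers them, yet the construction removes only $v$.

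This cannot be fixed inside your proof, because the identity is false for the construction as written. Take $G$ to be the disjoint union of an arc $v\to w$ and an out-star $z\to a,b,c$, so $\Delta=3$ and $\lambda=2$. Start the nice decomposition with the bags $\{v\},\{v,w\},\{v\},\emptyset$.
\begin{itemize}
\item Forgetting $w$ is neglected, since $w\in\pre_3(v)$ and $v$ is light.
\item Forgetting $v$ then gives $B_{4,1}=\{w\}$, while $S_4=\emptyset$.
\item $w$ is never removed afterwards, because its only in-neighbour is gone.
\end{itemize}
With many disjoint copies of $v\to w$, these stale vertices accumulate, so even the width bound fails for the literal construction.

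\textbf{How to repair it.} The paper makes the same implicit assumption. Compare ``the bags where $v$ is heavy form a subtree rooted at $f_v$'' in the treewidth proof, and the validity argument, which lets a forgotten vertex leave $\pre(L)$ only at introduce bags. So you need to adjust the setup, not the bookkeeping. The cleanest fix is to call $u$ heavy in $X_i$ when $|\pre_i(u)|\ge\min\{\lambda,|N^{+}(u)|\}$. Then:
\begin{itemize}
\item Status still changes only at introduce bags, so such a $u$ lands in the $t$-list.
\item A light vertex still has at most $\lambda-1$ present neighbours, and a heavy one at most $\Delta-\lambda$ future ones, so all the width estimates are unchanged.
\item Every vertex is now heavy in its last bag.
\end{itemize}
With this definition your forget-case argument goes through as written: $S_i=S_{i-1}\setminus\{v\}$ if $v\notin\pre_i(L_i)$, and $S_i=S_{i-1}$ otherwise, exactly matching the construction.
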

  Then it follows that
  \[
    \begin{aligned}
      |B_{i,m_i}|\le & (\pw+1)+(\lambda-1)|L_i|+(\Delta-\lambda)(\pw+1-|L_i|) \\
      =              & (\Delta-\lambda+1)(\pw+1)+(2\lambda-\Delta-1)|L_i|     \\
      \le            & \pwcoe{} (\pw+1),
    \end{aligned}
  \]
  For intermediate bags, we have
  \begin{claim*}
    For each $i\in [|\mathcal{P}|]$ s.t. $m_i\ge 3$, let $\ell=\frac{m_i-1}2$.
    Then for any $k(0\le k \le \ell)$, %
    \[
      B_{i,2k+1}=X_i\cup \bigcup_{v\in L_i\cup \{t_{k+1},\dots,t_{\ell}\}} \pre_i(v) \cup \bigcup_{v\in X_i\setminus L_i\setminus \{t_{k+1},\dots,t_{\ell}\}} \fut_i(v).
    \]
  \end{claim*}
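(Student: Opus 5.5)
We argue by induction on $k$, from $k=0$ up to $k=\ell$. Write $T_k=\{t_{k+1},\dots,t_\ell\}$ and let $R_k$ denote the right-hand side of the claim. The $\ell$ heavy-transition vertices $t_1,\dots,t_\ell$ all lie in $X_i$ and are light in $X_{i-1}$ (or are newly introduced). So the set $L_i\cup T_k$ consists exactly of the vertices of $X_i$ that are light so far in the processing order.

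\textbf{Base case $k=0$.} We need $B_{i,1}=B_{i-1,m_{i-1}}\cup\{v\}$, where $v$ is the vertex introduced by $X_i$. We use the previous claim (or the inductive hypothesis across indices $i$) that
\[
B_{i-1,m_{i-1}}=X_{i-1}\cup\bigcup_{u\in L_{i-1}}\pre_{i-1}(u)\cup\bigcup_{u\in X_{i-1}\setminus L_{i-1}}\fut_{i-1}(u).
\]
The following observations relate this to $R_0$.
\begin{itemize}
\item $X_i=X_{i-1}\cup\{v\}$.
\item $L_i\cup T_0=L_{i-1}\cup\{v\}$, up to the possibility that $v$ is itself heavy. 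If $v$ is heavy, it is one of the $t$'s and belongs to $T_0$.
\item For $u\in X_{i-1}$, introducing $v$ only moves $v$ (if it is a neighbour of $u$) from $\fut$ to $\pre$, and $v$ is in the bag anyway. Hence $\pre_i(u)\setminus\{v\}=\pre_{i-1}(u)$ and $\fut_i(u)=\fut_{i-1}(u)\setminus\{v\}$.
\item For $v$ itself, $\pre_i(v)=N^+(v)\cap X_i\subseteq X_i$.
\end{itemize}
Combining these shows $B_{i,1}=R_0$.

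\textbf{Inductive step.} Assume $B_{i,2k-1}=R_{k-1}$. By \eqref{eq:balancing-bags}, adding $\fut_i(t_k)$ gives
\[
B_{i,2k}=R_{k-1}\cup\fut_i(t_k).
\]
By \eqref{eq:balancing-forget-bags}, we then remove $\pre_i(t_k)\setminus\pre_i(L_i\cup T_k)$. To get $B_{i,2k+1}=R_k$ we must check two things.

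First, the removed elements are exactly those of $\pre_i(t_k)$ not covered by the other terms of $R_k$. Any element of $\pre_i(t_k)\cap X_i$ must not be removed. This is the one point requiring care: \eqref{eq:balancing-forget-bags} as written subtracts such vertices too. So we must either read the forget step as removing only already-forgotten present neighbours (as the prose says), or note that $X_i$ is re-added. I will interpret it as removing only vertices outside $X_i$, consistent with the text. One could also worry that a removed vertex lies in $\fut_i(u)$ for some heavy $u$ outside $T_k$; this cannot happen, since $\pre$ and $\fut$ vertices are disjoint in time: a vertex already introduced is never in any $\fut_i$.

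Second, we need $\fut_i(t_k)$ disjoint from what is removed, and every other term of $R_{k-1}$ retained. The first holds because $\fut_i(t_k)\cap\pre_i(t_k)=\emptyset$. The second holds because the removed set avoids $\pre_i(L_i\cup T_k)$ by construction.

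\textbf{Main obstacle.} I expect the main difficulty to be the careful bookkeeping in the base case, relating index $i-1$ to index $i$. In particular, a vertex $u$ that was light in $X_{i-1}$ keeps its $\pre$ set in the bag until it is processed as some $t_k$. The case where a forget bag $X_{i-1}$ was neglected, so that $B_{i-1,0}$ is inherited, must also be handled. The final case $k=\ell$ then yields the previous claim for index $i$, closing the induction over $i$ as well.
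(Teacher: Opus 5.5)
Your induction on $k$ is sound. With $R_k$ as you define it, $B_{i,2k}=R_k\cup\pre_i(t_k)$, and the set removed next is exactly $\pre_i(t_k)\setminus R_k$, provided \eqref{eq:balancing-forget-bags} is read as deleting only already-forgotten vertices. You are right that this reading is forced: taken literally, the formula deletes current-bag out-neighbours of $t_k$, such as the newly introduced vertex, whenever they are not out-neighbours of a vertex in $L_i\cup\{t_{k+1},\dots,t_\ell\}$. Your reading also matches the paper's prose. Your base-case computation from the formula for $B_{i-1,m_{i-1}}$ is correct as well. The paper offers no argument for this claim beyond stating it, so up to here you are supplying what the authors leave implicit.

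The gap is that the base case rests on the claim for $B_{i-1,m_{i-1}}$, and your assertion that the joint induction over $i$ closes is not justified. You never check forget bags, and that is exactly where the induction fails.

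\textbf{Why forget bags break it.} A vertex $w$ with $|N^+(w)|<\lambda$ is never heavy, so it is still light when forgotten. While it was light, its already-forgotten out-neighbours were kept in $B$. The forget step deletes only $w$, so those out-neighbours stay in every later bag.

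\textbf{Counterexample.} Let $\Delta=3$ (so $\lambda=2$), with arcs $w\to y$ and $p\to q,r,s$. Take the nice decomposition $\{w\},\{w,y\},\{w\},\emptyset,\{p\},\{p,q\},\{p,q,r\},\dots$.
\begin{itemize}
\item Forgetting $y$ is neglected, since $y\in\pre(w)$ and $w$ is light.
\item Forgetting $w$ then yields $B=\{y\}$.
\item At the bag introducing $r$, $p$ becomes heavy, so $m_i=3$. Yet $B_{i,1}=\{y,p,q,r\}\neq\{p,q,r\}=R_0$, and the same discrepancy persists for $k=1$.
\end{itemize}
Since $y$ is never removed afterwards, repeating the gadget even makes the bags grow without bound. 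So the statement is false for the construction as written, and no argument can close this without changing the construction.

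\textbf{Repair.} One option is to let the forget step for $w$ delete $(\{w\}\cup\pre_{i-1}(w))\setminus(X_i\cup\pre_i(L_i))$. This is what the paper's own characterisation of when a vertex is forgotten already presupposes. Alternatively, use $\min(\lambda,|N^+(v)|)$ as the heaviness threshold. After either change:
\begin{itemize}
\item The forget case is a short check: if $w$ is heavy then $\fut_{i-1}(w)=\emptyset$, and all other $\pre$/$\fut$ sets and light/heavy statuses are unchanged.
\item The introduce case with $\ell=0$ is exactly your base-case computation.
\end{itemize}
With that, your argument goes through.
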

  Since $\pre_i(t_{*})$ intersects $X_i$ in at least one vertex ($v$), we have $|\pre_i(t_{*})\setminus X_i|\le
  \lambda-1$, giving
  \[
    \begin{aligned}
      |B_{i,2k+1}|\le & (\pw+1)+(\lambda-1)(|L_i|+\ell-k)+(\pw+1-(|L_i|+\ell-k))(\Delta-\lambda) \\
      \le           & \lfloor\textstyle\frac {\Delta}{2}+1\rfloor (\pw+1).
    \end{aligned}
  \]
  It then follows that
  \[
    \begin{aligned}
      |B_{i,2k}|\le & |B_{i,2k-1}|+\Delta-\lambda                                                                   \\
      \le             & \lfloor\textstyle\frac {\Delta}{2}+1\rfloor (\pw+1)+\lfloor\textstyle\frac{\Delta}{2}\rfloor.
    \end{aligned}
  \]

\end{toappendix}

\begin{theorem}\label{thm:tw-sqr-ub}
  For a digraph $G$ with maximum out-degree $\Delta$, we have
  \begin{equation}
    \tw(G^{\gsym})\le \twcoe{}\tw(G)+2\lfloor \textstyle \frac{2\Delta}{3}\rfloor.
  \end{equation}
  Furthermore, given a tree decomposition of $G$ of width $\tw$, a tree
  decomposition of $G^{\gsym}$ of width at most $\twcoe{}\tw+2\lfloor \textstyle
  \frac{2\Delta}{3}\rfloor$ can be found in polynomial time.
\end{theorem}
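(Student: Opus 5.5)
We generalize the balancing idea of \cref{thm:pw-sqr-ub} from paths to trees. Fix a nice tree decomposition $(T,\{X_t\})$ of $G$ of width $\tw$, rooted at $r$. For each vertex $v$, let $T_v$ be the subtree of nodes whose bags contain $v$. Each out-neighbor $w\in N^+(v)$ shares some bag with $v$, so we can assign $w$ to a node of $T_v$, namely the highest node of $T_v$ whose bag contains $w$. For a node $t\in T_v$, removing $t$ splits $T$ into components: one per child of $t$, plus the parent side. Each out-neighbor of $v$ not in $X_t$ lies entirely in one of these components. Thus $N^+(v)\setminus X_t$ is partitioned into ``directions'' at $t$. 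In a nice decomposition a node has degree at most $3$, so there are at most three directions.

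We then apply the standard centroid/weighted-median argument on $T_v$, with weights given by the out-neighbors. There exists a node $c_v\in T_v$ such that every direction at $c_v$ contains at most $\lfloor\Delta/2\rfloor$ out-neighbors not in $X_{c_v}$, and at most three directions exist. The construction is as follows. For every node $t\in T_v$ with $t\neq c_v$, exactly one direction at $t$ points towards $c_v$; call it the heavy direction. Add to $X_t$ those out-neighbors of $v$ lying in the non-heavy directions at $t$, i.e., those reachable through the neighbors of $t$ away from $c_v$. At $c_v$ itself, insert a new bag adjacent to $X_{c_v}$ (a copy attached as a leaf, or subdividing an edge) that contains $X_{c_v}$ plus all of $N^+[v]$, so the clique $N^+[v]$ of $G^{\gsym}$ is covered.

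Validity requires that, for each $w\in N^+(v)$, the set of nodes where $w$ is now present is connected. The nodes where $w$ is added because of $v$ form the path in $T_v$ from $c_v$ towards $T_w\cap T_v$. This path touches $T_w$, and it touches the new $c_v$-bag. The union of connected sets that each meet $T_w$ is connected, and there are several vertices $v$ contributing, so connectivity is preserved. Edges $(u,w)$ of $G^{\gsym}$ with $u,w\in N^+(v)$ or $u=v$ are covered by the new bag at $c_v$.

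For the width, the key counting step is to show that, for $t\neq c_v$, the out-neighbors added at $t$ for $v$ number at most $\lfloor 2\Delta/3\rfloor$ (excluding those already in $X_t$). The naive median argument above only bounds each non-heavy direction by $\Delta/2$, and there can be two non-heavy directions, giving $\Delta$, which is useless. So $c_v$ must be chosen more cleverly. Orient each edge of $T_v$ towards the side containing more out-neighbors, and argue that the edge orientation from $t$ towards $c_v$ always carries at least $\lceil\Delta/3\rceil$ of the out-neighbors. This is exactly where the join-bag three-way split forces $2/3$. I expect this to be the main obstacle. I would define $c_v$ as a node minimizing the maximum direction weight, and prove by a descent argument that at any other node $t$ the heavy direction has weight at least the complement of the rest, and that the remaining directions total at most $\lfloor 2\Delta/3\rfloor$ out-neighbors outside $X_t$. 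If this fails for nodes near $c_v$, I would fall back on a light/heavy charging as in \cref{thm:pw-sqr-ub}: count, per bag vertex, whichever of the ``inside'' or ``outside'' neighbors is smaller, with a threshold chosen relative to $\Delta/3$.

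Given that per-vertex bound, an original bag grows to at most $(\tw+1)(1+\lfloor 2\Delta/3\rfloor)$ vertices. Each inserted bag at $c_v$ adds at most the remaining $\lceil\Delta/3\rceil$ neighbors, or, more crudely, at most $2\lfloor 2\Delta/3\rfloor$ extra vertices beyond the enlarged bag. As in the pathwidth proof, if multiple vertices share the same $c_v$, we chain the new bags so that each adds only one vertex's neighborhood at a time. Subtracting $1$ gives width $\twcoe\tw+2\lfloor 2\Delta/3\rfloor$. Every step, namely computing the $T_v$, the weights, the centers, and the new bags, is clearly polynomial, which gives constructibility.
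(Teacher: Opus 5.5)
There is a genuine gap, and it sits at the center $c_v$ rather than where you locate it. At a node $t\neq c_v$, every out-neighbor $w$ you add for $v$ has its whole subtree $T_w$ inside the single component of $T-c_v$ that contains $t$. With your centroid choice, the number added at $t$ is therefore already at most $\lfloor\Delta/2\rfloor$. Your worry that two non-heavy directions could each carry $\Delta/2$ is unfounded, and the unproven $\lceil\Delta/3\rceil$ orientation lemma is not needed there.

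What fails is the center itself. You never add $N^+(v)$ to $X_{c_v}$, so for every $w\in N^+(v)\setminus X_{c_v}$ the bag $X_{c_v}$ separates your new bag from all other bags containing $w$. A bag hung as a leaf at $c_v$ is thus disconnected from them, and a bag subdividing one edge at $c_v$ repairs only the neighbors lying in that one direction. Adding all of $N^+(v)$ to $X_{c_v}$ instead breaks the width bound as soon as many vertices of $X_{c_v}$ share this center (all of them may), which returns you to roughly $(\Delta+1)(\tw+1)$. Your accounting ``each inserted bag adds at most the remaining $\lceil\Delta/3\rceil$'' also presupposes that $X_{c_v}$ already holds $\lfloor 2\Delta/3\rfloor$ of $v$'s neighbors, which your construction does not provide.

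The sentence ``we chain the new bags'' is exactly where the factor $\frac23$ must be earned, and you do not earn it. If $c_v$ is a join node, it has three neighbors while a chain has only two ends. Unless the third neighbor hangs exactly at $v$'s clique bag, which cannot be arranged for all vertices centered there, $v$ must carry two of its three directions simultaneously on some stretch of the chain. Your centroid condition bounds only single directions by $\lfloor\Delta/2\rfloor$, not pairs by $\lfloor 2\Delta/3\rfloor$.

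The paper resolves this by choosing the center differently. Set the threshold $\lambda=\lceil\Delta/3\rceil$ and call $v$ heavy at $i$ when $|\pre_i(v)|\ge\lambda$; the center is then a leaf (the switching leaf) of the subtree where $v$ is heavy. There both child directions are light, so together they carry at most $2\lambda-2\le\lfloor 2\Delta/3\rfloor$. The parent direction carries at most $\Delta-\lambda=\lfloor 2\Delta/3\rfloor$ because the node is heavy. Hence a chain with both children attached at the bottom and the parent at the top stays within the bound. It processes the vertices centered there one at a time, first adding $\fut_i(v)$ and then dropping $\pre_i(v)$. Heavy nodes off the switching path are controlled by the heavy sibling branch on the path.

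Your centroid route can also be repaired. For example, replace a central join node with a three-legged gadget and place $v$'s clique bag on the leg of its heaviest direction; at the hub each vertex then carries only its two lightest directions, at most $\lfloor 2\Delta/3\rfloor$. Either way, such a center construction, together with its connectivity and width check, is the missing core of your argument.
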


\begin{proof}
  Let $\lambda=\lceil\frac{\Delta}{3}\rceil$. Given a nice tree decomposition
  $\mathcal{T}=(T,\{X_i\}_{i\in V(T)})$, we traverse the bags in a bottom-up
  manner. For each bag $X_i$ and for each vertex $v$ in this bag, we define
  $\pre_i(v)$, $\fut_i(v)$ and the notion of \textit{heavy} as in
  \cref{thm:pw-sqr-ub}. For each vertex $v$, if it is introduced in a bag
  $X_{i_v}$ and forgotten in the parent of bag $X_{f_v}$, we have
  \[
    \pre_{i_v}(v)=  X_{i_v}\cap N^{+}(v), \qquad \pre_{f_v}(v)=  N^{+}(v),  \qquad \pre_{i}(v)\subseteq  \pre_{p(i)}(v), \forall i_v\preceq i \prec f_v,
  \]
  where $p(i)$ is the parent of $i$, and $i\preceq j$ ($i\prec j$) denotes that
  $j$ is an ancestor (a proper ancestor) of $i$. The bags where $v$ is heavy
  form a subtree rooted at $f_v$ (denoted by $T[v|\mathsf{heavy}]$), and is a
  part of the subtree of all bags containing $v$ (denoted by $T[v]$). We
  arbitrarily fix a unique leaf in $T[v|\mathsf{heavy}]$, and call it the
  \textit{switching leaf} of $v$. Moreover, we call the path from the switching
  leaf to $f_v$ the \textit{switching path} of $v$. Then for each vertex $v$ in
  a bag $X_i$, we introduce another notion of \textit{switched}: $v$ is switched
  in $X_i$ when $X_i$ belongs to the switching path of $v$. And we let
  $L_i\subseteq X_i$ to denote the set of vertices that are \textit{unswitched}
  in $X_i$.
  \begin{center}
    \begin{tikzpicture}[xscale=8,yscale=3]
      \foreach \i/\x/\y in { 0/0.50/0.00, 1/0.25/-0.20, 5/0.12/-0.40,
        7/0.06/-0.60, 19/0.03/-0.80, 25/0.03/-1.00, 29/0.09/-0.80,
        27/0.19/-0.60, 6/0.38/-0.40, 8/0.31/-0.60, 10/0.31/-0.80, 16/0.28/-1.00,
        17/0.34/-1.00, 13/0.44/-0.60, 21/0.41/-0.80, 22/0.47/-0.80,
        2/0.75/-0.20, 3/0.62/-0.40, 4/0.56/-0.60, 14/0.53/-0.80, 24/0.53/-1.00,
        18/0.59/-0.80, 15/0.69/-0.60, 20/0.69/-0.80, 9/0.88/-0.40,
        11/0.81/-0.60, 12/0.81/-0.80, 23/0.94/-0.60, 26/0.94/-0.80,
      28/0.94/-1.00}{%
        \node[fill vertex, opacity=0.5] (v\i) at (\x, \y) {};
      }

      \foreach \u/\v in { 0/1, 0/2, 1/5, 1/6, 2/3, 2/9, 3/4, 3/15, 4/14, 4/18,
        5/7, 5/27, 6/8, 6/13, 7/19, 7/29, 8/10, 9/11, 9/23, 10/16, 10/17, 11/12,
      13/21, 13/22, 14/24, 15/20, 19/25, 23/26, 26/28}{%
        \draw[edge, opacity=0.5] (v\u) -- (v\v);%
      }

      \foreach \i in {5,7,19,25,29,27,1,6,8,10,16,17}{
        \node[fill vertex, mGreen] at (v\i) {};
      }
      \foreach \u/\v in { 1/5, 1/6, 5/7, 5/27, 6/8, 7/19, 7/29, 8/10, 10/16, 10/17, 19/25}{%
        \draw[edge, mGreen] (v\u) -- (v\v);%
      }

      \foreach \i in {5,7,27,1,6,8}{
        \node[fill vertex, mBlue] at (v\i) {};
      }
      \foreach \u/\v in { 1/5, 1/6, 5/7, 5/27, 6/8}{%
        \draw[edge, mBlue, double] (v\u) -- (v\v);%
      }

      \draw[mRose, opacity=0.3, line width=10pt, line cap=round, line join=round]
      (v1.center) -- (v5.center) -- (v27.center);
      \node[yshift=-16pt] at (v27){%
        \begin{varwidth}{3cm}
          \color{mRose}\centering \scriptsize switching \\ leaf
        \end{varwidth}%
      };%
      \node[xshift=-20pt,yshift=14pt] at ($(v1)!0.5!(v5)$){%
        \begin{varwidth}{3cm}
          \color{mRose}\centering \scriptsize switching \\ path
        \end{varwidth}%
      };;

      \node[anchor=west] at (-0.3,-0.3){\color{mBlue}$T[v|\mathsf{heavy}]$};
      \node[anchor=west] at (-0.3,-0.9){\color{mGreen}$T[v]$};
      \node[anchor=south, yshift=3pt] at (v1) {$f_v$};
    \end{tikzpicture}
  \end{center}
  Now we are ready to construct a tree decomposition $\mathcal{T}^{\gsym}$ for
  $G^{\gsym}$. For each bag $X_i\in
  \mathcal{T}$, we will construct a sequence of bags that forms a path in
  $\mathcal{T}^{\gsym}$, and denote them by $B_{i,1},B_{i,2},\dots,B_{i,m_i}$ in
  a bottom-up order, and connect $B_{i,m_i}$ to $B_{p(i),1}$ if $i$ is not the
  root.

  For introduce and forget nodes, we use almost identical constructions as
  \cref{thm:pw-sqr-ub}. The only difference here is for introduce nodes we apply
  the construction with \textit{the list of vertices having $X_i$ as their
  switching leaf} instead of the list of vertices that become heavy.%

  For join node $i$ with children $i_1,i_2$, we first create $B_{i,1}$ as
  $B_{i_1,m_{i_1}}\cup B_{i_2,m_{i_2}}$. Note that for a vertex $v$, its future
  neighbors in $X_{i_1}$ is composed of its future neighbors in $X_i$ and
  \textit{present neighbors} in $X_{i_2}$, and the latter part should be removed
  from the current union. $B_{i,2}$ is thus created by iterating through all
  $v\in X_i\setminus L_{i_b} (b\in \{1,2\})$ and forgetting every element in
  $\pre_{i_{3-b}}(v)$ that does not belong to the present neighborhood of some
  other (temporarily) unswitched vertex. Then, for the vertices
  $t_1,\dots,t_{\ell}$ with $X_i$ as the switching leaf, we create bags as
  introduce node case. Namely, we have
  \[
    \begin{aligned}
      B_{i,1}=                                    & B_{i_1,m_{i_1}}\cup B_{i_2,m_{i_2}},  \\                                                                                     %
      B_{i,2}=                                    & B_{i,1}\setminus \bigcup_{b\in \{1,2\},v\in X_i\setminus L_{i_b}} \left(\pre_{i_{3-b}}(v)\setminus \pre_i(L_i\cup \{t_1,\dots,t_{\ell}\})\right), \\%
      \forall 1\le k\le \ell: \quad B_{i,2k+1}  = & B_{i,2k}\cup \fut_i(t_k),   \\                                                                %
      \quad B_{i,2k+2}    =                       & B_{i,2k+1}\setminus \left(\pre_i(t_k)\setminus \pre_i(L_i\cup \{t_{k+1},\dots,t_{\ell}\})\right).%
    \end{aligned}
  \]

  Now we show $\mathcal{T}^{\gsym}$ is valid. The non-trivial part is to show
  the bags including a certain vertex induce a subtree. It suffices to show that
  each vertex is forgotten only once. If $v$ is forgotten in $B_{i,*}$ for some
  $i$, then either $X_i$ is a bag forgetting $v$ while $v\not\in \pre_i(L_i)$;
  or $X_i$ is an introduce or a join bag where $v$ is no longer in $\pre_i(L_i)$
  with $v$ already forgotten. Namely, $v$ is forgotten in $B_{i,*}$ if and only
  if $X_i$ is the bag that $v\not\in X_i\wedge v\not\in \pre_{i}(L_i)$ and
  \[
    \begin{cases}
      v\in \pre_{i_1}(L_{i_1})\vee v\in \pre_{i_2}(L_{i_2}), & \text{if $i$ is a join node with children $i_1$ and $i_2$}, \\                                                             v\in X_{i_1}\vee v\in \pre_{i_1}(L_{i_1}), &\text{if $i$ is an introduce or forget node with child $i_1$.}
    \end{cases}
  \]
  Then we prove by contradiction that such $i$ is unique. We first claim:
  \begin{claim}
    For any bag $X_i$, if $v\in \pre_{i}(L_i)$, then $\exists j\preceq i$ such
    that $v\in X_j$.\label{clm:pre-s-in-bag}
  \end{claim}
  By \cref{clm:pre-s-in-bag}, we can see that there exists $j\preceq i$ such
  that $v\in X_j$, and that $v\not\in X_i$ is equivalent to $f_v\preceq i$ in the condition for vertex $v$ being forgotten in $B_{i,*}$. %
  Suppose there is another introduce or forget node $i'$ with child $i'_1$
  satisfying the same condition, and without loss of generality, we assume
  $f_v\preceq i\preceq i'_1\prec i'$. It immediately follows that $v\not\in
  X_{i'}$, and thus $v\in \pre_{i'_1}(L_{i'_1})$. Let $u\in L_{i'_1}$ be a
  vertex such that $v\in \pre_{i'_1}(u)$, then if $u$ is already introduced in
  $X_{i}$, then $v\in \pre_{i}(u)\subseteq \pre_{i}(L_i)$, contradicting the
  assumption. Otherwise, edge $(u,v)$ is not covered by any bag, resulting in a
  violation. Now we suppose there is another join node $i'$ satisfying the
  condition with children $i'_1,i'_2$, and without loss of generality assume
  $f_v \preceq i\preceq i_1'\prec i'$. Same argument as the previous case rules
  out the possibility of $v\in\pre_{i'_1}(L_{i'_1})$. If
  $v\in\pre_{i'_2}(L_{i'_2})$, then by \cref{clm:pre-s-in-bag} there exists
  $j\preceq i'_2$ such that $v\in X_j$, which contradicts the requirement of
  bags containing $v$ inducing a connected subtree. Therefore, such $i$ is
  unique, and thus $v$ is only forgotten once.

  It remains to analyze the width of $\mathcal{T}^{\gsym}$. We claim for each
  $i\in V(T)$, \(B_{i,m_i}=X_i\cup \bigcup_{v\in L_i}\pre_i(v)\cup \bigcup_{v\in
  X_i\setminus L_i}\fut_i(v)\), giving \(|B_{i,m_i}|\le
  (\Delta-\lambda+1)(\tw+1)                      \le \twcoe{}(\tw+1)\). For a
  join bag $X_i$ with children $i_1,i_2$ (for other bags, the analysis is
  similar and simpler):
  \[
    \begin{aligned}
      \pre_i(v)=  & \pre_{i_1}(v)\cup \pre_{i_2}(v),\quad \forall v\in X_i,                                                                                                                    \\
      B_{i,1}=    & X_i\cup \bigcup_{v\in L_{i_1}\cap L_{i_2}}\pre_{i}(v) \cup \bigcup_{b\in \{1,2\}, v\in X_i\setminus L_{i_b}} \fut_{i_b}(v),                                                \\
      B_{i,2}=    & X_i\cup \bigcup_{v\in L_{i_1}\cap L_{i_2}}\pre_{i}(v)\cup\bigcup_{v\in X_i\setminus (L_{i_1}\cup L_{i_2})} \fut_{i}(v),                                                            \\
      B_{i,2k+1}= & X_i\cup \bigcup_{v\in L_i\cup \{t_{k+1},t_{k+2},\dots,t_{\ell}\}}\pre_{i}(v)\cup\bigcup_{v\in X_i\setminus L_i\setminus \{t_{k+1},t_{k+2},\dots,t_{\ell}\}} \fut_{i}(v). \\
    \end{aligned}
  \]
  For any $v\in L_i$, if it is heavy in at least one of $X_{i_1}, X_{i_2}$, then
  $|\pre_i(v)|\le \Delta-\lambda$ because of the existence of another bag as the
  switching leaf; otherwise, $|\pre_i(v)|\le 2\lambda-2$. Thus,
  \[
    \begin{aligned}
      |B_{i,2}|\le|B_{i,1}|\le & (\tw+1) + \max\{\Delta-\lambda,2\lambda-2\}|L_{i_1}\cap L_{i_2}| + (\Delta-\lambda)(\tw+1-|L_{i_1}\cap L_{i_2}|) \\
      \le                      & \twcoe{}(\tw+1).
    \end{aligned}
  \]
  Similar argument gives \(|B_{i,2k+2}|\le \twcoe{}(\tw+1)\) for $k\in [\ell]$, leading to
  \begin{fakedisplaymath}
    |B_{i,2k+1}|\le |B_{i,2k}|+\Delta-\lambda\le \twcoe{}(\tw+1)+\lfloor \textstyle \frac{2\Delta}{3} \rfloor.
  \end{fakedisplaymath}
\end{proof}

\begin{theorem}\label{thm:ctw-sqr-ub}
  For a digraph $G$ with maximum out-degree $\Delta$, we have
  \(
    \pw(G^{\gsym})\le \ctw(G)+\Delta.
  \)
  Furthermore, given a linear ordering of $G$ with cutwidth $\ctw$, a path
  decomposition of $G^{\gsym}$ of width at most $\ctw+\Delta$ can be found in
  polynomial time.
\end{theorem}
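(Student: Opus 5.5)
Let $v_1,v_2,\dots,v_n$ be the given ordering of $V(G)$ with cutwidth $\ctw$. For each $i$, let $E_i$ be the set of arcs of $G$ with one endpoint in $\{v_1,\dots,v_i\}$ and the other in $\{v_{i+1},\dots,v_n\}$, so $|E_i|\le\ctw$. Parallel arcs $(u,v),(v,u)$ are counted separately, as the preliminaries require. The plan is to build a path decomposition of $G^{\gsym}$ in which, between consecutive cuts, we place one bag per vertex $v_i$. Roughly, the bag consists of $N^{+}[v_i]$ together with the \emph{heads} of all arcs crossing the cuts around position $i$. The intuition is that a vertex $u$ must remain in the decomposition as long as some arc $(w,u)$ has $w$ and $u$ on opposite sides, because $u$ will later (or earlier) need to meet other out-neighbours of $w$.

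Concretely, define
\[
  H_i=\{\,u : (w,u)\in E_i \text{ for some } w\,\},
\]
the set of heads of arcs crossing cut $i$, so $|H_i|\le|E_i|\le\ctw$. Define the bag
\[
  B_i=N^{+}[v_i]\cup H_{i-1}\cup H_i ,
\]
or a slightly leaner variant if the count needs it. We then verify the three properties.

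\textbf{Vertex coverage.} This is trivial, since $v_i\in B_i$.

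\textbf{Edge coverage.} Any edge of $G^{\gsym}$ lies inside some $N^{+}[w]$ with $w=v_j$, and $N^{+}[v_j]\subseteq B_j$.

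\textbf{Contiguity.} Fix a vertex $u$ and look at the indices $i$ with $u\in B_i$. Each such index arises from $u=v_i$, from $u\in N^{+}(v_i)$, or from $u$ being the head of a crossing arc. Let $a$ be the smallest and $b$ the largest position among $u$ and its in-neighbours. Then for every cut $i$ with $a\le i<b$, some arc into $u$ crosses it: take the arc from the in-neighbour that lies on the opposite side of the cut from $u$, or, if $u$ is itself at an extreme, the arc connecting $u$ to the farthest in-neighbour. So $u\in H_i$ for all $a\le i<b$. Moreover, every bag containing $u$ has index in $[a,b]$. It follows that $u$ occupies exactly an interval of bags.

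The main obstacle is the size bound. We want $|B_i|\le\ctw+\Delta+1$, so that the width is at most $\ctw+\Delta$. The naive bound $|N^{+}[v_i]|+|H_{i-1}\cup H_i|$ is too big, since $H_{i-1}\cup H_i$ can be larger than $\ctw$. The fix is a charging argument: count arcs rather than vertices.

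Arcs crossing cut $i-1$ but not cut $i$ are exactly the arcs between $\{v_1,\dots,v_{i-1}\}$ and $v_i$. Symmetrically, arcs crossing cut $i$ but not cut $i-1$ are the arcs between $v_i$ and $\{v_{i+1},\dots,v_n\}$. Hence $H_{i-1}\cup H_i$ consists of three parts:
\begin{itemize}
  \item heads of arcs crossing both cuts, which number at most $|E_{i-1}\cap E_i|$;
  \item heads of arcs incident to $v_i$: these heads are either $v_i$ itself or an out-neighbour of $v_i$, so they already lie in $N^{+}[v_i]$;
  \item any remaining heads: these are tails' out-neighbours of $v_i$ counted in $E_{i-1}$ or $E_i$, and are handled by the same accounting.
\end{itemize}

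So $|B_i|\le |N^{+}[v_i]|+|E_{i-1}\cap E_i|\le\Delta+1+\ctw$. If this bookkeeping is slightly off (for instance for arcs into $v_i$ from the left), I would instead split $B_i$ into two bags, $H_{i-1}\cup N^{+}[v_i]$ and $H_i\cup N^{+}[v_i]$. I would then bound each by $\ctw+\Delta+1$ separately, since the arcs into $v_i$ contribute only $v_i$ itself as a head. Finally, the construction is clearly computable in polynomial time.
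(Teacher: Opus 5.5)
Your proof is correct and is essentially the paper's construction. Every arc crossing cut $i$ but not cut $i-1$ is incident to $v_i$, so $H_i\subseteq H_{i-1}\cup N^{+}[v_i]$; hence your bag $B_i=H_{i-1}\cup N^{+}[v_i]$ is exactly the paper's $B_{2i-1}$ (the paper only adds the redundant bags $B_{2i}=H_i$ in between), and your interval $[a,b]$ matches the paper's ``introduced at the leftmost, forgotten after the rightmost member of $N^{-}[u]$''. Your bound $|B_i|\le|N^{+}[v_i]|+|E_{i-1}\cap E_i|$ is valid: the first two bullets already cover every head, so the third bullet and the fallback split are unnecessary. The same $\ctw+\Delta+1$ also follows more directly as $|H_{i-1}|+|N^{+}[v_i]|$, which is how the paper argues.
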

\begin{proof}
  Given a linear ordering $v_1,v_2,\ldots,v_n$ of the vertices that achieves the
  cutwidth $\ctw$, we create a path decomposition of $2n$ bags, and for each
  $i\in [n]$ (let $B_0=\emptyset$),
  \[
    B_{2i-1}=B_{2i-2}\cup N^{+}[v_i],\qquad
    B_{2i}=B_{2i-1}\setminus \{v_j\mid j\le i, \not\exists k>i \text{ s.t. } v_j\in N^{+}(v_k)\}.
  \]
  Namely, for $B_{2i}$, we remove all the heads {which are} only incident to
  $v_i$ as a tail. Now we show that this is a valid path decomposition of
  $G^{\gsym}$, and it is easy to see that every vertex and edge is covered by
  some bag. Then we can see that $v_i$ is introduced at $B_{2j-1}$ if $v_j$ is
  the leftmost vertex belonging to $N^{-}[v_i]$, and is forgotten at $B_{2k}$ if
  $v_k$ is the rightmost vertex belonging to $N^{-}[v_i]$. Therefore, it
  instantly follows that the bags containing $v_i$ are consecutive.
  \begin{center}
    \begin{tikzpicture}[yscale=0.6]
      \foreach \i in {1,...,9} {
        \node[fill vertex] (n\i) at (1.5 * \i - 1.5, 0) {};
      }
      \node[below=2pt] at (n1) {$v_j$};
      \node[below=2pt] at (n4) {$v_i$};
      \node[below=2pt] at (n9) {$v_k$};

      \draw[diedge, mRose] (n1) to[bend left=20] (n4);
      \draw[diedge, mRose] (n3) to[bend left=20] (n4);

      \draw[diedge, mBlue] (n7) to[bend right=20] (n4);
      \draw[diedge, mBlue] (n9) to[bend right=20] (n4);
    \end{tikzpicture}
  \end{center}
  Finally, it remains to show the width of the path decomposition. Consider the
  set of edges that cross the cut after $v_{i}$, and let $H_i$ be the set of
  \textit{heads} of these edges. We claim for each $i\in [n]$, $B_{2i}=H_i$,
  from which it immediately follows that $|B_{2i}|\le \ctw$ and $|B_{2i-1}|\le
  \ctw+\Delta+1$. This claim is equivalent to saying
  \[
    H_i=H_{i-1}\cup N^{+}[v_i]\setminus \{v_j\mid j\le i, \not\exists k>i \text{ s.t. } v_j\in N^{+}(v_k)\},
  \]
  which is true since $H_i\setminus H_{i-1}$ contains the heads $v_j$ incident
  to $v_i$ that $j>i$ (marked red), and $H_{i-1}\setminus H_i$ are the heads
  whose rightmost in-neighbor is exactly $v_i$ (marked blue). Note that the
  latter set can even include $v_i$ itself if there is no in-neighbor of $v_i$
  to the right.

  \begin{center}
    \begin{tikzpicture}[yscale=0.8]
      \node[fill vertex, label=below:{\small $v_{i-1}$}] (vim1) at (0, 0) {};
      \node[fill vertex, label=below:{\small $v_i$}] (vi) at (1,0) {};

      \node[fill vertex, mBlue] (n1) at (-2,0) {};
      \node[fill vertex] (n2) at (-3,0) {};

      \draw[diedge, mBlue] (vi) to[bend right=20] (n1);
      \draw[diedge] (vi) to[bend right=20] (n2);

      \node[fill vertex] (n3) at (2,0) {};
      \draw[diedge] (n3) to[bend right=20] (n2);

      \node[fill vertex, mRose] (n4) at (3,0) {};
      \node[fill vertex, mRose] (n5) at (4,0) {};

      \draw[diedge, mRose] (vi) to[bend right=20] (n4);
      \draw[diedge, mRose] (vi) to[bend right=20] (n5);

      \draw[edge, dashed] ($(vim1)!0.5!(vi)+(0,0.5)$) to ++(0,-1);
      \draw[edge, dashed] ($(vi)!0.5!(n3)+(0,0.5)$) to ++(0,-1);
    \end{tikzpicture}
  \end{center}
\end{proof}

\subsection{Algorithmic implications}\label{subsec:alg-implication}

\begin{theorem}\label{thm:csp-encoding} The existence of a PNE in an
  $\alpha$-strategy graphical game $\mathcal{G}=(G,\{M_v\}_{v\in V(G)})$ with
  maximum out-degree $\Delta$ can be encoded as a CSP instance
  $\psi_{\mathcal{G}}$ with primal graph $G^{\gsym}$ and alphabet size
  $\alpha$.
\end{theorem}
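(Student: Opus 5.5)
We build the CSP directly from the game. Take one variable $x_v$ for each player $v\in V(G)$, with domain $\st(v)$. Since $|\st(v)|\le\alpha$, we may identify every domain with a subset of a common alphabet $\Sigma$ with $|\Sigma|=\alpha$. For each player $v$ we add a single constraint $\langle t_v,R_v\rangle$. Its scope $t_v$ lists the variables of $N^{+}[v]$ in some fixed order, which is a tuple of distinct variables. Its relation $R_v\subseteq\Sigma^{|t_v|}$ consists of those tuples that meet two conditions: each coordinate lies in the strategy set of the corresponding player, and the tuple satisfies \cref{eq:vertex-stable} for $v$, i.e.\ $M_v(\mathbf s|_v,\mathbf s|_{N^{+}(v)})\ge M_v(x,\mathbf s|_{N^{+}(v)})$ for all $x\in\st(v)$. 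We can compute $R_v$ by enumerating all at most $\alpha^{\Delta+1}$ joint strategies of $N^{+}[v]$ and, for each one, comparing against the $\alpha$ alternatives using $M_v$. This takes time polynomial in $|\mathcal{G}|$, because the payoff matrix $M_v$ already has exactly this many entries.

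Correctness is a direct unpacking of the definitions. An assignment $\sigma$ satisfies every constraint exactly when it is a joint strategy of all players in which every player $v$ is locally stable. Restricting a joint strategy to $N^{+}[v]$ is the same as restricting $\sigma$ to $t_v$, and $v$'s stability depends only on this restriction. Hence satisfying assignments of $\psi_{\mathcal{G}}$ correspond one-to-one to PNEs of $\mathcal{G}$. To keep values outside a player's strategy set from causing trouble when $|\st(v)|<\alpha$, we rely on the domain check inside $R_v$. Every variable $x_v$ appears in the scope of its own constraint $R_v$, so that check applies to every variable.

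Finally, we identify the primal graph. Its vertex set is $V(G)$, and two variables are adjacent exactly when they occur together in some constraint scope. This happens exactly when $\{u,w\}\subseteq N^{+}[v]$ for some $v$, which is precisely the definition of an edge of $G^{\gsym}$. So the primal graph equals $G^{\gsym}$. The only mildly delicate point, and the one to state carefully, is the alphabet issue when strategy sets have different sizes. The domain check inside $R_v$ handles it without adding any edges, because it uses only constraints that already exist. With this, the encoding, combined with \cref{thm:pw-sqr-ub,thm:tw-sqr-ub,thm:ctw-sqr-ub} and the standard $\alpha^{\tw}$ CSP algorithm, yields \cref{thm:upper}.
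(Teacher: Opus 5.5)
Your proposal is correct and follows the paper's proof essentially verbatim. It uses one variable per player and one constraint per player on $N^{+}[v]$, whose relation is the set of joint strategies making $v$ locally stable, and it observes that two variables share a scope exactly when $\{u,w\}\subseteq N^{+}[v]$ for some $v$, i.e., exactly when they are adjacent in $G^{\gsym}$. Your explicit domain check for players with fewer than $\alpha$ strategies and your polynomial-time bound for computing each $R_v$ are sensible details that the paper leaves implicit; it simply takes the domain of $x_v$ to be $\st(v)$.
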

\begin{proof}
  For each player $v$, we create a variable $x_v$ in $\psi_{\mathcal{G}}$ whose
  domain is the strategy set of $v$, and add a constraint on the variables
  corresponding to $N^{+}[v]$ to ensure that $v$ is locally stable. Namely, we
  create a constraint whose relation is the set of all joint strategies of
  $N^{+}[v]$ that make $v$ locally stable. Now we look at the primal graph of
  $\psi_{\mathcal{G}}$. The primal graph has a vertex set $\{x_v\mid v\in
  V(G)\}$, and there is an edge between $x_u$ and $x_v$ if they appear together
  in the same constraint, and this can only happen if $u$ and $v$ are adjacent
  or they have a common in-neighbor. Therefore, the primal graph of
  $\psi_{\mathcal{G}}$ is exactly $G^{\gsym}$.
\end{proof}
Recall the \FPT{} result for CSPs parameterized by $\tw$:
\begin{theorem}[{\cite[Theorem 6 (implicit)]{gottlob_fixed-parameter_2002}}]\label{thm:csp-tw-alg}
  There is an algorithm that given a CSP instance $\psi$ with $n$ variables on
  alphabet $\Sigma$, decides whether $\psi$ is satisfiable in time
  $|\Sigma|^{\tw+1}{|\psi|}^{O(1)}$, where $\tw$ is the width of a given
  tree decomposition of the primal graph of $\psi$.
\end{theorem}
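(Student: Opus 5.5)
This is the standard dynamic programming over a nice tree decomposition, and the statement is cited from \cite{gottlob_fixed-parameter_2002}, so a proof sketch only needs to recall it. Given $\psi=\langle X,\Sigma,C\rangle$ and a tree decomposition $(T,\{X_i\})$ of the primal graph of width $\tw$, I would first convert it in polynomial time into a nice tree decomposition with $O(|X|)$ nodes and the same width. I would then assign each constraint $\langle t_j,R_j\rangle$ to one bag containing all of $t_j$. Such a bag exists because the variables of $t_j$ form a clique in the primal graph, and every clique lies in some bag by the Helly property of subtrees of a tree. After this, each constraint is checked exactly once, at the node it is assigned to.

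For each node $i$, let $V_i$ be the variables appearing in bags of the subtree rooted at $i$. The table $D_i$ is the set of assignments $\tau:X_i\to\Sigma$ that extend to an assignment of $V_i$ satisfying every constraint assigned to a node in that subtree. There are at most $|\Sigma|^{\tw+1}$ candidate $\tau$. The recurrences are the usual ones.
\begin{itemize}
\item \emph{Leaf:} the table holds the empty assignment.
\item \emph{Introduce $x$:} extend each $\tau\in D_{\text{child}}$ by every value of $x$ and keep the extensions satisfying the constraints assigned to $i$.
\item \emph{Forget $x$:} project each $\tau$ onto $X_i$.
\item \emph{Join:} intersect the two children's tables, then filter by the constraints assigned to $i$.
\end{itemize}
At leaf, introduce and forget nodes (the latter also filtering), any constraint assigned there is checked directly.

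Correctness follows by induction on the tree, using the separation property. A variable of $V_i\setminus X_i$ never occurs in bags outside the subtree, so it shares no constraint with variables introduced elsewhere. Hence partial solutions that agree on $X_i$ can be glued together. For the join case, the two subtrees intersect only in $X_i$. The formula is satisfiable if and only if $D_{\text{root}}\neq\emptyset$.

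For the running time, each node processes at most $|\Sigma|^{\tw+1}$ entries. Each entry needs polynomial work: membership tests in $R_j$, which are polynomial in $|\psi|$ since relations are listed explicitly, and hashing or sorting for the join intersection. The total is $|\Sigma|^{\tw+1}|\psi|^{O(1)}$. The only step needing care is assigning each constraint to a bag so that it is checked exactly once, which the clique argument above handles. No step poses a real obstacle.
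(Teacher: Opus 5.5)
Your proposal is correct. The paper gives no proof of this statement and simply imports it from \cite{gottlob_fixed-parameter_2002}; your dynamic program over a nice tree decomposition is the standard argument behind that bound: each constraint is assigned to a bag containing its scope via the Helly property, tables hold at most $|\Sigma|^{\tw+1}$ partial assignments, and partial solutions are glued along separators. The one inaccuracy is cosmetic: with empty leaves, the usual conversion to a nice tree decomposition guarantees only $O(\tw\cdot|X|)$ nodes rather than $O(|X|)$, which is still polynomial and leaves the $|\Sigma|^{\tw+1}|\psi|^{O(1)}$ running time unchanged.
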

\cref{thm:csp-tw-alg,thm:csp-encoding,thm:pw-sqr-ub,thm:tw-sqr-ub,thm:ctw-sqr-ub}
together give the claimed three algorithmic results listed in \cref{thm:upper}.
Moreover, it is known that improving the algorithm in \cref{thm:csp-tw-alg} when
parameterized by \emph{pathwidth} is \textit{equivalent} to falsifying
\pwseth{}~\cite[Theorem 3.2]{lampis_primal_2025}. Therefore, if \pwseth{} is
false, then we can improve our algorithms for PNE, giving one direction of the
equivalence relation in \cref{thm:lower}.

\subsection{Tightness of the bounds}

In this section we show that the combinatorial upper bounds we have established
are tight. In particular, the constant factors $\frac{1}{2}$ and $\frac{2}{3}$
in \cref{thm:pw-sqr-ub,thm:tw-sqr-ub} cannot be improved. For this, we construct
infinite families of graphs with maximum degree and pathwidth (treewidth)
arbitrarily large such that the bounds are tight up to small additive terms.

We remark that the family we construct to show our pathwidth bound is tight is
also used, with minor modifications, as the basis of our \pwseth-based lower
bound. This makes sense: to obtain a tight lower bound, we need a family of
graphs such that $G^{\gsym}$ has pathwidth as high as we estimated, because
otherwise applying a CSP algorithm to our instances would solve the problem too
efficiently, contradicting our lower bound.
\begin{theorem}\label{thm:pw-sqr-lb}%
  For any integers $k, p\ge 2$, there exists a graph $G_{k,p}$ with maximum
  degree $\Delta=2k-1$ and pathwidth at most $p+1$ such that
  \(
    \pw(G^2_{k,p})\ge \textstyle \pwcoe{}\pw(G_{k,p})- \textstyle \pwcoe{}.
  \)
\end{theorem}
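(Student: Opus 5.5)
The plan is to give an explicit layered construction and to lower-bound $\pw(G^2_{k,p})$ through the vertex separation number. Recall that $\pw(H)$ equals the minimum, over linear orderings of $V(H)$, of the maximum number of vertices at or before a position that have an $H$-neighbor after it. With $\Delta=2k-1$ we have $\pwcoe{}=k$. So it suffices to build $G_{k,p}$ with $\pw(G_{k,p})\le p+1$, maximum degree $2k-1$, and $\pw(G^2_{k,p})\ge kp$, since then $k\,\pw(G_{k,p})-k\le kp$.

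\textbf{Construction.} The design follows the upper-bound proof of \cref{thm:pw-sqr-ub}: every vertex should have about half of its neighbors on each side of any sweep. I would take $N\gg kp$ columns $C_1,\dots,C_N$, each consisting of $p$ ``anchor'' vertices. I attach gadgets so that every anchor has $k$ neighbors that are naturally swept together with the left part and $k-1$ that are swept with the right part. Concretely, I would follow the picture drawn for this section. Each column carries $p$ anchors forming a clique, each anchor heading a pendant path of $k$ vertices, and copies of the same level in consecutive columns are joined. I would then tune the parameters so that the degree is exactly $2k-1$. For the bound $\pw(G_{k,p})\le p+1$, I would sweep the columns one at a time and, within a column, sweep the anchors' gadgets one at a time. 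This keeps at most one ``active'' gadget plus one vertex per row in the bag.

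\textbf{Lower bound.} Fix any ordering of $V(G^2_{k,p})$ and consider the first position at which some column $C_j$ in the middle range has had half of its vertices placed. I would argue that at this point, for each of the $p$ rows, some anchor $a$ is ``straddling'', i.e.\ has closed neighbors both before and after the cut. This uses the horizontal connectivity: a row whose middle portion is entirely on one side forces, by the choice of the cut position, many other columns to be split.

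\textbf{Key counting step.} Take a straddling anchor $a$. All of $N[a]$ forms a clique in $G^2$, and $|N[a]|=2k$. Suppose one vertex of $N[a]$ lies after the cut. Then every vertex of $N[a]$ that lies before the cut is counted in the separation. Hence either at least $k$ vertices of $N[a]$ are counted, or more than $k$ lie after the cut. In the latter case I would charge those to a neighboring anchor whose closed neighborhood also straddles. Summing over the $p$ rows, and ensuring that the charged sets are disjoint across rows, gives at least $kp$ counted vertices.

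\textbf{Main obstacle.} The hardest part is this charging argument. It must rule out orderings that ``cheat'' by placing most of one anchor's neighborhood after the cut while keeping the neighborhoods of the adjacent anchors small. I would first try a potential-function argument along each row: as the ordering sweeps through a row, the quantity $|N[a]\cap\text{before}|$ for consecutive anchors $a$ must pass through the value $k$ (the midpoint of $2k$) at some anchor, exactly mirroring the light/heavy transition used in \cref{thm:pw-sqr-ub}. If this pigeonhole step fails, I would fall back on showing that $G^2_{k,p}$ contains a bramble of order $kp$. Such a bramble can be built from unions of the cliques $N[a]$ along rows and columns, and it yields the same bound via treewidth, since $\pw\ge\tw$.
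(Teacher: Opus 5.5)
Your reduction is correct. With $\Delta=2k-1$ we have $\lfloor\frac{\Delta}{2}+1\rfloor=k$, so it suffices to exhibit $G_{k,p}$ of maximum degree $2k-1$ with $\mathrm{pw}(G_{k,p})\le p+1$ and $\mathrm{pw}(G_{k,p}^2)\ge kp$. However, the graph you sketch does not meet these requirements.

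\begin{itemize}
\item \emph{Degree.} If, as you write, the $p$ anchors of a column form a clique, every anchor has degree at least $p-1$. This exceeds $2k-1$ once $p>2k$, and since $k$ and $p$ are independent, no tuning repairs it.
\item \emph{Only anchors linked across columns.} Pendant paths are dead ends and cannot create the factor $k$. Let $A_j$ be the anchors of column $j$. Take the bags $A_{j-1}\cup A_j\cup A_{j+1}$, and between consecutive ones insert bags that slide a window of three consecutive vertices along each pendant path of column $j$, one path at a time. This is a path decomposition of $G^2$ of width at most $3p+2$. So $\mathrm{pw}(G^2)\ge kp$ fails, e.g.\ for $k\ge 4$, $p\ge 3$.
\item \emph{All gadget vertices linked to their copies.} Each anchor with its pendant path and their copies spans a $(k+1)\times N$ grid. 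Then $\mathrm{pw}(G)\ge k+1>p+1$ whenever $k>p$.
\end{itemize}

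The lower bound is also left unproven, and that is where the content lies. In your dichotomy, the case ``more than $k$ vertices of $N[a]$ lie after the cut'' yields nothing. The charging to a neighboring anchor and the disjointness across rows are unspecified. Nothing forces $|N[a]\cap\text{before}|$ to land in $[k,2k-1]$ at some anchor of each row; it can drop from $2k$ to a small value between consecutive anchors. The bramble fallback is likewise unspecified.

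The missing idea is that no charging is needed. Design $G$ so that its high-degree vertices form small separators in $G$, while in $G^2$ their closed neighborhoods glue into an $n\times n$ grid with $n=kp$. Then $\mathrm{pw}(G^2)\ge n$ because the $n\times n$ grid has pathwidth $n$ and pathwidth is monotone under subgraphs.

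The paper realizes this with vertices $v_{i,j}$, $i,j\in[kp]$:
\begin{itemize}
\item each pivot $v_{i,gk+1}$ is adjacent to $v_{i,gk+2},\dots,v_{i,(g+1)k}$ and to $v_{i+1,gk+2},\dots,v_{i+1,(g+1)k}$;
\item bridge edges $v_{i,gk}v_{i,gk+1}$ link consecutive groups.
\end{itemize}
Pivots have degree at most $2k-1$ and all other vertices at most $3$. All edges between rows $i$ and $i+1$ leave the $p$ pivots of row $i$, which gives $\mathrm{pw}(G)\le p+1$. Every horizontal or vertical grid pair is at distance at most $2$; a vertical pair of pivots is joined through $v_{i+1,gk+2}$.
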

\begin{proof}
  Given integers $k,p\ge 2$, we construct a graph $G_{k,p}$ with maximum degree
  $\Delta=2k-1$ and pathwidth at most $p+1$ such that $\pw(G_{k,p}^{2})\ge kp$.

  We create a grid-like graph of $n=kp$ rows and each row contains $n$ vertices,
  and we denote them by $v_{ij}$ for $i,j\in [n]$. For the $i$-th row, we denote
  the vertices $v_{i,gk+1} (0\le g<p)$ by the \textit{pivot vertices} of this
  row. For each $v_{i,gk+1}$, we connect it to
  \(\{v_{i,gk+2},v_{i,gk+3},\dots,v_{i,(g+1)k}\}\), and also to
  \(\{v_{i+1,gk+2},\dots,v_{i+1,(g+1)k}\} \) if $i\not=n$. Finally, we connect
  $(v_{i,gk},v_{i,gk+1})$ for each $i\in [n], 0< g<p$ (see below for an example
  for $k=3, p=3$).

  \begin{center}
    \begin{tikzpicture}[scale=0.9, label font/.style={font=\footnotesize}]
      \def\n{9}

      \foreach \j in {1,...,\n} {
        \node[fill vertex, label={[label font]above:$v_{i,\j}$}] (xi\j) at (\j, 1) {};
        \node[fill vertex] (xiplus1\j) at (\j, 0) {};
      }

      \foreach \start in {1, 4, 7} {
        \pgfmathtruncatemacro{\nextone}{\start+1}
        \pgfmathtruncatemacro{\nexttwo}{\start+2}

        \draw[edge] (xi\start) edge[bend right=20] (xi\nextone);
        \draw[edge] (xi\start) edge[bend right=20] (xi\nexttwo);

        \draw[edge] (xi\start) -- (xiplus1\nextone);
        \draw[edge] (xi\start) -- (xiplus1\nexttwo);

        \draw[edge, opacity=0.5] (xiplus1\start) edge[bend right=20] (xiplus1\nextone);
        \draw[edge, opacity=0.5] (xiplus1\start) edge[bend right=20] (xiplus1\nexttwo);

        \draw[edge, opacity=0.5] (xiplus1\start) -- ++(0.2, -0.3);
        \draw[edge, opacity=0.5] (xiplus1\start) -- ++(0.4, -0.3);
      }

      \foreach \gk in {3, 6} {
        \pgfmathtruncatemacro{\next}{\gk+1}
        \draw[edge,mRose] (xi\gk) -- (xi\next);
        \draw[edge,mRose, opacity=0.5] (xiplus1\gk) -- (xiplus1\next);
      }
    \end{tikzpicture}
  \end{center}
  Now we give a path decomposition for $G_{k,p}$ with width at most $p+1$. We
  first iterate through the first row from left to right, and create bags to
  introduce the vertices one by one. If a vertex is not pivotal, we immediately
  forget it, with the only exception that for a vertex $v_{1,gk}$ with $0<g<p$,
  we forget it after introducing $v_{1,gk+1}$ to cover the edge between them.
  After processing the first row, we stop at a bag of all the pivot vertices in
  the first row and all the other vertices in the first row are forgotten. Then
  we repeat the same process for the next rows one by one, with the only
  difference that we forget $v_{i-1,(g-1)k+1}$ whenever we introduce $v_{i,gk}$
  for $i\ge 2$ and $0<g\le p$. Then the biggest bags are the ones containing
  $v_{i,gk},v_{i,gk+1}$ for some $i\ge 2$, which have size $p+2$. A structured
  pseudocode is available in Appendix~\ref{app:pd-1}.

  \begin{toappendix}

    \appendixsection{Path decomposition of $G_{k,p}$ in the proof of
    \cref{thm:pw-sqr-lb}}\label{app:pd-1}
    \begin{algorithmic}[1] %
      \State Initialize an empty path decomposition
      \For{$j \gets 1$ \textbf{to} $n$}
        \State Append a new bag introducing $v_{1,j}$
        \If{$j \equiv 1 \pmod k$}
          \If{$j > 1$}
            \State Append a new bag forgetting $v_{1,j-1}$
          \EndIf
        \ElsIf{$j \not\equiv 0 \pmod k$ \textbf{or} $j = n$}
          \State Append a new bag forgetting $v_{1,j}$
        \EndIf
      \EndFor

      \For{$i \gets 1$ \textbf{to} $n-1$}
        \Comment{Start with pivot vertices of row $i$, others in row $i$ forgotten}
        \For{$j \gets 1$ \textbf{to} $n$}
          \State Append a new bag introducing $v_{i+1,j}$\label{line:biggest-bag}

          \If{$j \equiv 0 \pmod k$}
            \State Append a new bag forgetting $v_{i,j-k+1}$
          \EndIf

          \If{$j \equiv 1 \pmod k$}
            \If{$j > 1$}
              \State Append a new bag forgetting $v_{i+1,j-1}$
            \EndIf
          \ElsIf{$j \not\equiv 0 \pmod k$ \textbf{or} $j = n$}
            \State Append a new bag forgetting $v_{i+1,j}$
          \EndIf
        \EndFor
      \EndFor
    \end{algorithmic}
    Line~\ref{line:biggest-bag} when $j>k$ and $j\bmod k=1$ is the step where the biggest bag in this
    decomposition is created, and it is of size $p+2$, showing $\pw(G_{k,p})\le
    p+1$.
  \end{toappendix}

  Now we analyse the pathwidth of $G_{k,p}^{2}$. We argue that $G_{k,p}^{2}$
  contains an $n\times n$ grid as a subgraph, since for any $i,j$, vertices
  $v_{i,j},v_{i,j+1}$ are either connected or both adjacent to
  $v_{i,k\lfloor\frac {j-1}{k}\rfloor+1}$ (the same holds for non-pivot vertices
    $v_{i,j}$ and $v_{i+1,j}$, and for pivot vertices they are both adjacent to
  $v_{i+1,j+1}$). Thus, the monotonicity of pathwidth gives%
  \begin{fakedisplaymath}
    \pw(G_{k,p}^{2})\ge n=kp\ge \textstyle \pwcoe{}(\pw(G)-1).
  \end{fakedisplaymath}
\end{proof}

\begin{theorem}\label{thm:tw-sqr-lb}
  For any integers $k, p\ge 2$, there exists a graph $G_{k,p}$ with maximum
  degree $\Delta=3k$ and treewidth at most $p+1$ such that
  \(
    \tw(G_{k,p}^{2})\ge \textstyle \frac{2\Delta}{3}\tw(G_{k,p})- \textstyle \frac{2\Delta}{3}-1.
  \)
\end{theorem}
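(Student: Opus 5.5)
We construct $G_{k,p}$ so that its treewidth is at most $p+1$ while $G_{k,p}^2$ contains a graph of known large treewidth. The natural tree analogue of the pathwidth gadget of \cref{thm:pw-sqr-lb} is a \emph{ternary} branching of the pivots. In \cref{thm:pw-sqr-lb}, a pivot of degree $2k-1$ splits its neighbourhood into a ``same row'' part and a ``next row'' part, each of size about $k$. Here a pivot of degree $3k$ should split its neighbourhood into three groups of size $k$, one towards each of the three directions at a join bag.

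\textbf{Construction.} Take a rooted complete binary tree $T$ of large depth. Attach to every node $x$ of $T$ a ``layer'' consisting of $p$ pivots $u_{x,g}$ ($0\le g<p$), each with its own block of $k$ non-pivot vertices $W_{x,g}$. Pivot $u_{x,g}$ is adjacent to its own block $W_{x,g}$ and to the blocks $W_{c,g}$ of both children $c$ of $x$. This gives degree $3k$, and the degree of a non-pivot vertex (at most $2$ pivot neighbours plus linking edges) stays below $3k$ for $k\ge 2$. We also add linking edges between consecutive blocks, as the red edges do in \cref{thm:pw-sqr-lb}, so that in the square the $kp$ non-pivot vertices of a layer form a path. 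Within a block they are already a clique in the square.

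\textbf{Upper bound on $\tw(G_{k,p})$.} We follow the decomposition of \cref{thm:pw-sqr-lb}: a bag holding the $p$ pivots of layer $x$ is processed into the pivot sets of the two children via a join. Along each branch, the non-pivot vertices of a child layer are introduced and forgotten one at a time, and each parent pivot is forgotten once its block in that child is done, in the same sweep order as before. Each parent pivot must remain in both branches, which is allowed since both branches descend from one bag. The result is width at most $p+1$.

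\textbf{Lower bound on $\tw(G_{k,p}^2)$.} In the square, each block $W_{x,g}\cup W_{c_1,g}\cup W_{c_2,g}$ is a clique of size $3k$ (common neighbour $u_{x,g}$), and each layer's non-pivots form a path of length $kp$. We aim to exhibit a bramble, or equivalently a minor, of order about $2kp$. In the square of the tree-of-layers this works as follows. Contracting each block $W_{x,g}$ to a single vertex yields a ``tree of grids'' in which every vertex of a layer is joined to the corresponding vertices of both child layers. The target bound is $\tfrac{2\Delta}{3}\tw\approx 2kp$. I would argue it by a separator argument. A balanced separator must either cut each of the $p$ column-trees (each containing big cliques), costing about $k$ vertices per column and per side... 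This is the main obstacle: obtaining exactly the factor $2k$ rather than $k$ requires showing that any balanced separator must hit \emph{two} size-$k$ groups per column. The intuition is that separating a $3k$-clique-chain tree through a join costs at least the two smaller sides.

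A concrete route I would try first is to find a $2kp\times 2kp$ grid minor in $G_{k,p}^2$. Take a root-to-leaf path of layers together with, at each level, the sibling layer's blocks, so that each row of the grid has length $2kp$: the block pairs $W_{c_1,g}\cup W_{c_2,g}$ are cliques of size $2k$ linked through the parent pivot. The columns are then given by descending the tree. Once this grid minor is established, $\tw\ge 2kp\ge\tfrac{2\Delta}{3}(\tw(G)-1)-1$ gives the stated inequality. Verifying the vertical connectivity of this grid (the paths must be vertex-disjoint while descending through both siblings) is the step I expect to need the most care, possibly requiring the construction to be adjusted to give each node's pivots edges into a duplicated copy of the child blocks.
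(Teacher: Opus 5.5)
There is a genuine gap. Your construction is different from the paper's, which uses three copies of the pathwidth gadget glued at central vertices $r_1,\dots,r_p$; yours is a binary tree of layers. Your degree count and width-$(p+1)$ decomposition are fine. But the substance of the theorem is the lower bound $\tw(G_{k,p}^2)\ge 2kp-1$, and the proposal never establishes it.

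The separator paragraph stops at ``this is the main obstacle''. Balanced-separator arguments also determine treewidth only up to constant factors, so they are poorly suited to pinning down the exact coefficient $2k$.

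The grid-minor route cannot work as you describe it:
\begin{itemize}
\item For any layer $c$, the set $\bigcup_g W_{c,g}\cup\{u_{c,g}\}_g$ has size $(k+1)p$. It separates, in $G_{k,p}^2$, everything strictly below $c$ from the rest. This is because a vertex of $W_{d,g}$, with $d$ a child of $c$, has $G$-neighbours only $u_{c,g}$, $u_{d,g}$ and linking neighbours in layer $d$.
\item Hence at most $(k+1)p<2kp$ disjoint connected columns can pass from one level of your intended grid to the next.
\item Moreover, the sibling layer $c_2$ is at distance $4$ in $G_{k,p}$ from the children of $c_1$, so the half of each row coming from $c_2$ has no direct way down.
\item Contracting each block to a single vertex would shrink every row to length $p$, losing the factor $k$.
\end{itemize}

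What is missing is a concrete subgraph of the square with a matching bramble of order about $2kp$. The paper uses $H_n=T_n\square P_n$, which is three $n\times n$ grids whose first rows are joined by triangles. It proves $\tw(H_n)\ge 2n-1$ with the bramble of crosses $c(i,j,h)$: level $i$ through pages $h$ and $h\bmod 3+1$ (a $P_{2n}$), together with layer $j$ of page $h$. These crosses pairwise intersect, and a counting argument over levels and pages shows that no $2n-1$ vertices hit all of them.

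Your construction can in fact be rescued this way, provided the tree has depth at least about $2kp$:
\begin{itemize}
\item Take a layer $x$ at depth at least $kp-1$ whose children $c_1,c_2$ each have descendant chains of length $kp-1$.
\item The blocks $W_{x,g},W_{c_1,g},W_{c_2,g}$ are pairwise complete in the square, via the common neighbour $u_{x,g}$. This gives the triangle.
\item The ancestor chain of $x$ and the descendant chains of $c_1,c_2$ give the three legs. Parent and child blocks with the same $g$ are complete to each other via the parent's pivot.
\item The $kp$ positions within a layer give the $P_{kp}$ factor, using block cliques and linking edges.
\end{itemize}
Thus $H_{kp}\subseteq G_{k,p}^2$, so $\tw(G_{k,p}^2)\ge 2kp-1\ge 2k(\tw(G_{k,p})-1)-1$. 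Without this step, or some other complete order-$2kp$ bramble argument, the inequality remains unproved.
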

Similar to the previous theorem, we give a lower bound for $\tw(G_{k,p}^2)$ by
analyzing a specific subgraph. Since the family of these subgraphs appears to be
novel to the literature, we first provide its formal definition and establish
its treewidth lower bound. For any integer $n \ge 2$, let $T_n$ be the graph
obtained by identifying one endpoint of a path $P_n$ with each vertex of a
triangle $C_3$, and let $H_n$ be the Cartesian product $H_n = T_n \square P_n$
(see the diagram below for examples of $T_3,H_3$ respectively).
\begin{center}
  \begin{tikzpicture}[scale=0.4, every label/.append style={font=\scriptsize}]

    \begin{scope}[shift={(-1,0)}]
      \node[fill vertex] (a1) at (90:1) {};
      \node[fill vertex] (b1) at (210:1) {};
      \node[fill vertex] (c1) at (330:1) {};
      \draw[edge] (a1) -- (b1) -- (c1) -- (a1);

      \node[fill vertex] (a2) at (90:2) {};
      \node[fill vertex] (a3) at (90:3) {};
      \draw[edge] (a1) -- (a2) -- (a3);

      \node[fill vertex] (b2) at (210:2) {};
      \node[fill vertex] (b3) at (210:3) {};
      \draw[edge] (b1) -- (b2) -- (b3);

      \node[fill vertex] (c2) at (330:2) {};
      \node[fill vertex] (c3) at (330:3) {};
      \draw[edge] (c1) -- (c2) -- (c3);
    \end{scope}

    \begin{scope}[shift={(8,0)}]
      \drawHpoints{-1}{-1}{-1}{mDark};
    \end{scope}

  \end{tikzpicture}
\end{center}
We claim the following about $\tw(H_n)$, whose proof is deferred to the end of
this section.
\begin{theorem}\label{thm:bramble-hn}
  For every integer $n\ge 2$, $\tw(H_n)\ge 2n-1$.
\end{theorem}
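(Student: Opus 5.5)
Our plan is to exhibit a bramble of order $2n$ in $H_n$; by the treewidth--bramble duality (Seymour--Thomas) this gives $\tw(H_n)\ge 2n-1$. We write the vertices of $H_n$ as $(x,j)$ with $x\in V(T_n)$ and $j\in[n]$. We call the three legs of $T_n$ the paths $A,B,C$, each with $n$ vertices, with $a_1,b_1,c_1$ forming the triangle. Thus $H_n$ consists of three $n\times n$ grids (``pages'') $A\square P_n$, $B\square P_n$, $C\square P_n$, glued along the triangular prism $\{a_1,b_1,c_1\}\square P_n$.

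The bramble elements will be crosses. For a leg $X\in\{A,B,C\}$, a vertex $x\in X$ and a layer $j\in[n]$, let the cross $\chi_X(x,j)$ be the union of the row $X\times\{j\}$ and the column $\{x\}\square P_n$. Each cross is connected. Two crosses in the same page always intersect, just as in the standard grid bramble. For crosses in different pages $X\neq Y$, the rows $X\times\{j\}$ and $Y\times\{j'\}$ are joined by the triangle edge between their endpoints at layer... but this only gives adjacency when $j=j'$. We therefore enlarge each element. We take $\beta_X(x,j)=\chi_X(x,j)\cup(\{x_1^X\}\square P_n)$, where $x^X_1$ is the triangle vertex of $X$. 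Then any two elements from different pages $X,Y$ touch, because the columns $\{x^X_1\}\square P_n$ and $\{x^Y_1\}\square P_n$ are joined by triangle edges at every layer.

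To get order $2n$, we argue that a hitting set $S$ with $|S|<2n$ misses some element. Since $|S|\le 2n-1$ and there are three pages, some page, say $A$, has fewer than $2n/3$ vertices of $S$ in $A\square P_n$. This alone is not enough: a single triangle column of $n$ vertices already hits every element of one page. The correct approach is therefore to count via the prism. Let $s_X$ be the number of layers $j$ with $(x^X_1,j)\in S$, and let $r_X$ be the number of vertices of $S$ in page $X$ outside the prism column. To hit all elements of page $X$ we need either $s_X=n$, or else a free row and a free column must be blocked. A column $x\ne x^X_1$ is free if $S$ avoids it, and a row is free if $S$ avoids it; counting free rows and columns like in the grid, this forces $r_X+s_X\ge n$ (roughly) whenever $s_X<n$.

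Two pages can have $s_X=n$ only at a cost of $2n$. Otherwise at least two pages have $s_X<n$, and each of them needs $r_X+s_X\ge n$, which again totals $2n$. This yields $|S|\ge 2n$.

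The main obstacle is the counting inside a page when $s_X<n$. The page is an $n\times n$ grid whose first column is special, since it is in every element. We need to show that hitting every $\chi_X(x,j)\cup(\{x^X_1\}\square P_n)$ requires about $n$ vertices there, while avoiding an off-by-one loss. We expect we may need to refine the elements slightly, for instance by using crosses that exclude the last row or column, as in the usual $n\times n$ grid bramble of order $n+1$. If the cross-based bramble proves too lossy, the fallback is to verify directly, via a separator argument, that $H_n$ has no balanced separator of size below $2n$. The idea there is that any separator must cut through two of the three pages essentially fully, each cut costing $n$.
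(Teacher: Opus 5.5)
\textbf{There is a genuine gap: your enlarged family has order at most $3$, not $2n$.} The family $\{\beta_X(x,j)\}$ is indeed a bramble. But every element $\beta_X(x,j)$ of page $X$ contains the \emph{entire} column $\{x_1^X\}\square P_n$. So the single vertex $(x_1^X,1)$ lies in every element of page $X$, and $\{(a_1,1),(b_1,1),(c_1,1)\}$ hits the whole family. Your dichotomy ``either $s_X=n$ or $r_X+s_X\ge n$'' is therefore false, since $s_X\ge 1$ already hits every element of page $X$, and the final count has nothing to rest on.

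This is not an off-by-one defect that trimming the crosses can repair. Whenever cross-page touching comes from adding one fixed set to all elements of a page, that set becomes a cheap hitting set. The separator fallback is only a heuristic (``essentially fully''). Getting exactly $2n-1$ from a balanced-separator argument would need a careful proof that you have not supplied.

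\textbf{The missing idea: the connector between pages must depend on the element's own parameters.} Order the pages cyclically $A\to B\to C\to A$. For page $X$ with successor $Y$, take the element made of the column $\{x\}\square P_n$ in $X$ together with the row at index $j$ in $X$, \emph{continued through the triangle edge into the row at index $j$ of $Y$} (a path on $2n$ vertices).
\begin{itemize}
  \item Two elements of the same page meet as grid crosses.
  \item An element of $X$ meets every element of $Y$ where its extended row crosses the latter's column.
  \item The cyclic order covers all pairs of pages, so any two elements actually intersect.
\end{itemize}

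For the order, let $|S|\le 2n-1$.
\begin{itemize}
  \item Some index $j_0$ has at most one vertex of $S$ in its copy of $T_n$. If such a vertex exists, by cyclic symmetry we may assume it lies in $C$, so row $j_0$ avoids $S$ in both $A$ and $B$.
  \item If some column of $A$ avoids $S$, the element built from that column and row $j_0$ is missed.
  \item Otherwise $A$ contains at least $n$ vertices of $S$. If $C$ contains $m$ of them, then $B$ contains at most $n-m-1$. This leaves a free column $y$ of $B$ and at least $m+1$ free rows of $B$.
  \item For each such row $i$, the element built from column $y$ and row $i$ (extended into $C$) can only be hit in row $i$ of $C$. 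That requires $m+1$ vertices in distinct rows of $C$, which $m$ vertices cannot supply.
\end{itemize}

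This is precisely the paper's bramble $\{c(i,j,h)\}$ and its hitting-set argument, with its level $i$ playing the role of your row index.
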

Now we start the proof of \cref{thm:tw-sqr-lb} assuming the validity of
\cref{thm:bramble-hn}.
\begin{proof}[Proof of \cref{thm:tw-sqr-lb}]
  Given integers $k,p\ge 2$, we construct a graph $G_{k,p}$
  with maximum degree $\Delta=3k$ and treewidth at most $p+1$ such that
  $\tw(G_{k,p}^{2})\ge 2kp-1$.

  We use the grid-like construction in the proof of \cref{thm:pw-sqr-lb} as a
  gadget. Let $n=kp$, we create 3 disjoint gadgets of $n$ rows and $n$ columns,
  and denote the vertices by $\{v_{i,j}^{(h)}\mid i,j\in [n], h\in [3]\}$. Then
  we create an independent set of $p$ vertices, $r_1,r_2,\dots,r_p$, and denote
  them as \textit{central vertices}. For each $g\in [p]$, we add edges from
  $r_g$ to $v_{1,(g-1)k+j}^{(h)}$ for all $j\in [k],h\in [3]$ (see below for an
  example with $k=2,p=3$).
  \begin{center}
    \begin{tikzpicture}
      \begin{scope}[scale=0.6, label font/.style={font=\footnotesize}]
        \newcommand{\drawgrid}[3]{
          \def\n{6}
          \foreach \j in {1,...,\n} {
            \ifnum#3>0
            \def\labelposa{above} \def\labelposb{below}
            \else
            \def\labelposa{below} \def\labelposb{above}
            \fi
            \node[fill vertex] (#1\j) at (\j, 1) {};
            \node[fill vertex] (#1plus1\j) at (\j, 0) {};
          }

          \foreach \start in {1, 3, 5} {
            \pgfmathtruncatemacro{\nextone}{\start+1}
            \draw[edge] (#1\start) -- (#1\nextone);
            \draw[edge] (#1\start) -- (#1plus1\nextone);
            \draw[edge, opacity=0.5] (#1plus1\start) -- (#1plus1\nextone);
          }

          \foreach \gk in {2, 4} {
            \pgfmathtruncatemacro{\next}{\gk+1}
            \draw[edge] (#1\gk) -- (#1\next);
            \draw[edge, opacity=0.5] (#1plus1\gk) -- (#1plus1\next);
          }

          \foreach \j in {1,...,\n} {
            \draw[edge, opacity=0.5] (#1plus1\j) -- ++(0.3,-0.3);
          }
        }

        \begin{scope}[xshift=-4.5cm, yshift=-2.5cm]
          \drawgrid{v1}{1}{1};
        \end{scope}

        \begin{scope}[xshift=1.5cm, yshift=-2.5cm]
          \drawgrid{v2}{2}{1};
        \end{scope}

        \begin{scope}[xshift=-1.5cm, yshift=2.5cm, yscale=-1]
          \drawgrid{v3}{3}{-1};
        \end{scope}

        \foreach \i in {1,...,3} {
          \node[fill vertex, label={[label font]right:$r_{\i}$}] (r\i) at (-2+2*\i,0) {};
        }

        \foreach \i in {1,...,3} {
          \pgfmathtruncatemacro{\istart}{(\i-1)*2+1}
          \pgfmathtruncatemacro{\iend}{\i*2}
          \foreach \j in {\istart,...,\iend} {
            \draw[edge] (r\i) -- (v1\j);
            \draw[edge] (r\i) -- (v2\j);
            \draw[edge] (r\i) -- (v3\j);
          }
        }
      \end{scope}
      \begin{scope}[yshift=-4cm,xshift=-5cm, scale=1, edge style/.style={line width=0.25pt}, label font/.style={font=\footnotesize}]
        \def\n{12}

        \foreach \i in {1,...,4} {
          \node[fill vertex, label={[label font]above:$r_{\i}$}] (rB\i) at (-1+3*\i,1.5) {};
        }

        \foreach \j in {1,...,\n} {
          \node[fill vertex, label={[label font,label distance=5pt]below:$v_{1,\j}$}] (xi\j) at (\j, 1) {};
        }

        \foreach \i in {1,...,4} {
          \pgfmathtruncatemacro{\istart}{(\i-1)*3+1}
          \pgfmathtruncatemacro{\iend}{\i*3}
          \foreach \j in {\istart,...,\iend} {
            \draw[edge] (rB\i) -- (xi\j);
          }
        }

        \foreach \start in {1, 4, 7, 10} {
          \pgfmathtruncatemacro{\nextone}{\start+1}
          \pgfmathtruncatemacro{\nexttwo}{\start+2}

          \draw[edge] (xi\start) edge[bend right=20] (xi\nextone);
          \draw[edge] (xi\start) edge[bend right=20] (xi\nexttwo);
          \draw[edge, opacity=0.5] (xi\start) -- ++(0, -0.3);
          \draw[edge, opacity=0.5] (xi\start) -- ++(0.2, -0.3);
          \draw[edge, opacity=0.5] (xi\start) -- ++(0.4, -0.3);
        }

        \foreach \gk in {3, 6, 9} {
          \pgfmathtruncatemacro{\next}{\gk+1}
          \draw[edge] (xi\gk) -- (xi\next);
        }
      \end{scope}
    \end{tikzpicture}
  \end{center}
  Now we give a tree decomposition of $G_{k,p}$ with width $p+1$ by first giving
  a path decomposition for the subgraph induced by $\{r_g\}_{g\in[p]}\cup
  \{v_{i,j}^{(h)}\}_{i,j\in[n]}$ for each $h\in [3]$, starting from a bag of all
  central vertices. We initialize the starting bag and process the first row as
  in the previous proof, except that $r_g$ is forgotten when $v_{1,gk}^{(h)}$ is
  introduced. Subsequent rows follow the exact same scheme, and it immediately
  follows that the maximum bag size remains $p+2$ (see
  Appendix~\ref{app:prof-path-decomp-tw-sqr-lb} for the full pseudocode).
  Finally, we can create a tree decomposition for $G_{k,p}$ by gluing the three
  path decompositions together at their common starting bag.

  \begin{toappendix}

    \appendixsection{Path decomposition for $\{r_g\}\cup \{v_{i,j}^{(h)}\}$ in
    the proof of \cref{thm:tw-sqr-lb}}\label{app:prof-path-decomp-tw-sqr-lb}
    \begin{algorithmic}[1]
      \State Initialize a bag with all central vertices
      \For{$g=1$ \textbf{to} $p$}
        \If{$g>1$}
          \State Append a new bag forgetting $r_{g-1}$
        \EndIf
        \State Append a new bag introducing $v_{1,(g-1)k+1}^{(h)}$
        \If{$g>1$}
          \State Append a new bag forgetting $v_{1,(g-1)k}^{(h)}$
        \EndIf
        \For{$j=2$ \textbf{to} $k$}
          \State Append a new bag introducing $v_{1,(g-1)k+j}^{(h)}$
          \If{$j<k$}
            \State Append a new bag forgetting $v_{1,(g-1)k+j}^{(h)}$
          \EndIf
        \EndFor
      \EndFor
      \State Append a new bag forgetting $\{r_p,v_{1,pk}^{(h)}\}$

      \For{$i \gets 1$ \textbf{to} $n-1$}
        \Comment{Start with pivot vertices of row $i$, others in row $i$ forgotten}
        \For{$j \gets 1$ \textbf{to} $n$}
          \State Append a new bag introducing $v^{(h)}_{i+1,j}$\label{line:biggest-bag-3}

          \If{$j \equiv 0 \pmod k$}
            \State Append a new bag forgetting $v^{(h)}_{i,j-k+1}$
          \EndIf

          \If{$j \equiv 1 \pmod k$}
            \If{$j > 1$}
              \State Append a new bag forgetting $v^{(h)}_{i+1,j-1}$
            \EndIf
          \ElsIf{$j \not\equiv 0 \pmod k$ \textbf{or} $j = n$}
            \State Append a new bag forgetting $v^{(h)}_{i+1,j}$
          \EndIf
        \EndFor
      \EndFor
    \end{algorithmic}
  \end{toappendix}

  Now we analyze the treewidth of $G_{k,p}^{2}$. We claim $H_n$ is a subgraph of
  $G_{k,p}^{2}$, because as previously argued, for any $h\in [3]$,
  $G_{k,p}^{2}\left[\{v_{*,*}^{(h)}\}\right]$ contains an $n\times n$ grid. And
  for any $j\in [n]$, $\{v_{1,j}^{(*)}\}\subseteq N[r_{\lfloor\frac
  {j-1}{k}\rfloor+1}]$, so $G_{k,p}^{2}\left[\{v_{1,j}^{(*)}\}\right]$ induce a
  triangle. Plugging the three grids and $n$ triangles together gives us $H_n$.
  Then the monotonicity of treewidth gives
  \begin{fakedisplaymath}
    \tw(G_{k,p}^{2})\ge 2n-1=2kp-1\ge  \textstyle \frac {2\Delta}{3}(\tw(G)-1)-1.
  \end{fakedisplaymath}
\end{proof}

Now we prove \cref{thm:bramble-hn} for completeness. We prove it by constructing
a bramble of order $2n$. We first give the formal definition of bramble, as well
as its duality relation with tree decomposition, which is the main tool we use
to prove \cref{thm:bramble-hn}.
\begin{definition}[Bramble]
  A \textit{bramble} of a graph $G$ is a family $\mathcal{B}$ of mutually
  touching connected subgraphs of $G$, i.e., for every pair $B_1, B_2 \in
  \mathcal{B}$, either they share at least one vertex, or there is an edge in
  $G$ with one endpoint in $B_1$ and the other endpoint in $B_2$. The
  \textit{order} of a bramble $\mathcal{B}$ is the size of its minimum
  hitting set.
\end{definition}
\begin{theorem}[{\cite[Theorem 1.4]{seymourGraphSearchingMinMax1993}}]
  $G$ has a bramble of order $\ge k$ if and only if $\tw(G)\ge k-1$.
\end{theorem}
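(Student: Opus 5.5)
The statement has two directions of very different difficulty. I would prove the easy direction (a bramble of order $\ge k$ forces $\tw(G)\ge k-1$) directly. For the hard direction I would follow the short argument of Bellenbaum and Diestel rather than the original search-game proof.

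\emph{Easy direction.} Let $\mathcal{B}$ be a bramble and $(T,\{X_t\})$ a tree decomposition; I will show some bag hits every element of $\mathcal{B}$, so some bag has size at least the order $k$ and the width is at least $k-1$. Suppose no bag is a hitting set. For each node $t$ pick $B_t\in\mathcal{B}$ disjoint from $X_t$. Since $B_t$ is connected and avoids $X_t$, it lies in the part of the decomposition behind a single tree edge at $t$, and we orient that edge away from $t$. Some tree edge $st$ then receives two orientations, one from each end, or else following the orientations from any node never terminates in a finite tree. In the first case, $B_s$ lies strictly on $t$'s side and avoids $X_s\cap X_t$, while $B_t$ lies strictly on $s$'s side. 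Since $X_s\cap X_t$ separates the two sides, $B_s$ and $B_t$ neither intersect nor touch, contradicting the bramble property.

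\emph{Hard direction.} Assume $G$ has no bramble of order $\ge k$; the goal is a tree decomposition with all bags of size $<k$. Call a tree decomposition \emph{$\mathcal{B}$-admissible} if every bag of size $\ge k$ fails to hit $\mathcal{B}$. I will prove, by downward induction over all brambles, that every bramble $\mathcal{B}$ admits a $\mathcal{B}$-admissible decomposition. The induction is ordered by inclusion, using that the set of brambles of the finite graph $G$ is finite. At the end, apply the claim to the empty bramble, which is hit by every bag: the decomposition then has no bag of size $\ge k$, as required.

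For the inductive step, given $\mathcal{B}$, take a minimum hitting set $X$, so $|X|<k$. For each component $C$ of $G-X$, consider the enlarged family $\mathcal{B}_C=\mathcal{B}\cup\{\text{connected sets touching } X \text{ inside } C\text{-side}\}$. More precisely, I add the connected subgraphs $D\subseteq C$ that touch every element of $\mathcal{B}$ and meet $N(X)$ appropriately, so that $\mathcal{B}_C\supsetneq\mathcal{B}$ remains a bramble. This is the delicate point: if no such $D$ exists for $C$, one instead decomposes $G[C\cup X]$ directly.

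By induction each $\mathcal{B}_C$ has an admissible decomposition. Restrict it to $G[C\cup X]$, using Menger's theorem to route $|X|$ disjoint paths from $X$ into the bags. Then glue all these decompositions along a central bag $X$. Bags of size $\ge k$ inherited from a piece avoid some element of $\mathcal{B}_C$, and the minimality of $X$ turns this into avoiding an element of $\mathcal{B}$.

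\emph{Main obstacle.} I expect the main difficulty to be this gluing and restriction step. Replacing a bag $Y$ by $Y$ with pieces pushed onto $X$ via the disjoint paths must not create a large bag that hits all of $\mathcal{B}$. I would handle it using the Menger paths between $X$ and a separator, exploiting that $X$ is a minimum hitting set so that no smaller separator exists.
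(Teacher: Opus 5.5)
Your easy direction is correct. The paper gives no proof of its own here and only cites Seymour--Thomas, so the benchmark is the Bellenbaum--Diestel argument you are imitating. Your hard direction has a genuine gap, exactly at the step you call the main obstacle.

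\textbf{The enlarged family.} It is not a bramble in general. Two connected sets $D_1,D_2\subseteq C$ that each touch every element of $\mathcal{B}$ need not touch each other. Moreover, once several sets are added, a large bag meeting $C$ may miss only an added $D$, so your final step (``avoiding an element of $\mathcal{B}_C$ becomes avoiding an element of $\mathcal{B}$'') fails.

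The right enlargement is $\mathcal{B}\cup\{C\}$. It is a strictly larger bramble exactly when $C$ touches all of $\mathcal{B}$, since $C\cap X=\emptyset$. A bag with more than $|X|$ vertices then meets $C$, and that, not the minimality of $X$, is what forces it to miss an element of $\mathcal{B}$. When $C$ misses some $B$, the direct decomposition must use the bags $X$ and $C\cup N(C)$, because $X\cup C$ always hits $\mathcal{B}$.

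\textbf{The gluing plan.} More seriously, you cannot always produce an admissible decomposition of $G[X\cup C]$ with $X$ in a bag and glue along $X$, for an arbitrary minimum hitting set. Take the path $d\,a\,c\,b$, $k=3$, $\mathcal{B}=\{\{a\},\{b,c\}\}$, $X=\{a,b\}$, $C=\{c\}$. Any decomposition of $G[\{a,b,c\}]$ with $X$ in a bag is a decomposition of a triangle. So it has the bag $\{a,b,c\}$, which has size $3$ and hits $\mathcal{B}$. Correspondingly, the Menger path from $b$ to a bag hitting $\mathcal{B}\cup\{C\}$ must pass through $c$, and shifting along it inflates a bag.

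\textbf{The missing idea} is a dichotomy together with the right Menger target. Let $(T,(V_t))$ be the $(\mathcal{B}\cup\{C\})$-admissible decomposition of $G$.
\begin{itemize}
\item If it is already $\mathcal{B}$-admissible, you are done; this is what happens in the example.
\item Otherwise some bag $V_s$ with $|V_s|\ge k$ hits $\mathcal{B}$ but misses $C$. Every $X$--$V_s$ separator hits $\mathcal{B}$, since each $B$ is connected and meets both $X$ and $V_s$. By minimality of $X$ and Menger there are $|X|$ disjoint $X$--$V_s$ paths. Since $V_s\cap C=\emptyset$, none of them can enter $C$.
\end{itemize}
In the second case, put $W_t=(V_t\cap(X\cup C))\cup\{x\in X: t \text{ lies on the } T\text{-path from the subtree of } x \text{ to } s\}$.
\begin{itemize}
\item Each added $x\notin V_t$ can be charged to a vertex of its path lying in $V_t\setminus(X\cup C)$, because $V_t$ separates $x$ from $V_s\setminus V_t$. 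Hence $|W_t|\le|V_t|$, and $W_s=X$.
\item If $|W_t|\ge k>|X|$, then $V_t$ meets $C$ and $|V_t|\ge k$, so $V_t$ misses some $B\in\mathcal{B}$. This $B$ also misses $W_t$, because $V_t$ separates every added $x$ from $V_s$, while $B$ is connected and meets $V_s$.
\end{itemize}
These pieces, together with the trivial ones, then glue along $X$.
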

\begin{center}
  \begin{tikzpicture}[scale=0.4, every label/.append style={font=\scriptsize}]
    \begin{scope}[shift={(-4,0)}, scale=0.8]
      \drawHpoints{2}{-1}{-1}{mRose};
      \node at (2,-2.5) {level $2$};
    \end{scope}
    \begin{scope}[shift={(5,0)}, scale=0.8]
      \drawHpoints{-1}{3}{-1}{mRose};
      \node at (2,-2.5) {layer $3$};
    \end{scope}
    \begin{scope}[shift={(14,0)}, scale=0.8]
      \drawHpoints{-1}{-1}{1}{mRose};
      \node at (2,-2.5) {page $1$};
    \end{scope}
  \end{tikzpicture}
\end{center}
Before we start, we first define a coordinate system for the vertices of $T_n$
and $H_n$. We denote the vertices of $T_n$ using a two-dimensional coordinate
$(j,h), j\in [n], h\in [3]$, where $j$ is the distance to the triangle plus $1$
and $h$ is the index of the path. And the vertices of $H_n$ are denoted by
$(i,j,h)$, where $i\in [n]$ is the index of the $T_n$ in the Cartesian product,
and $j,h$ are the coordinates in $T_n$. We denote the first, second and third
coordinate of a vertex by the \textit{level}, \textit{layer} and \textit{page}
of the vertex, respectively.

For $i,j,h (i,j\in [n], h \in [3])$, we define $c(i,j,h)$ to be the subgraph of
$H_n$ induced by the vertices of coordinates $(i,*,h),(i,*,h\bmod 3+1),(*,j,h)$.
Intuitively, $c(i,j,h)$ is the \textit{crossing} of the following two paths:
\begin{itemize}
  \item the intersection of pages $h,h\mod 3+1$ and level $i$ which is a
    $P_{2n}$, and
  \item the intersection of layer $j$ and page $h$ which is a $P_n$.
\end{itemize}
Now we claim the following:
\begin{proposition}\label{prop:bramble}
  The set $\mathcal{B}=\{c(i,j,h)\mid i,j\in [n], h\in [3]\}$ is a bramble of
  $H_n$.
\end{proposition}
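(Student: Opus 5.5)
We need to verify two things: each $c(i,j,h)$ is connected, and any two of them touch (in fact we will show they always share a vertex).

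\textbf{Connectivity.} Fix $c(i,j,h)$. The vertices $(i,*,h)\cup(i,*,h\bmod 3+1)$ lie in the copy of $T_n$ at level $i$. They form the two paths of pages $h$ and $h\bmod 3+1$, joined through the triangle edge between $(i,1,h)$ and $(i,1,h\bmod 3+1)$. This union is therefore a path on $2n$ vertices. The vertices $(*,j,h)$ form a copy of $P_n$ along the Cartesian-product direction. This copy passes through $(i,j,h)$, which also lies on the first path. Two connected subgraphs sharing a vertex have a connected union, and the subgraph induced on the union contains both, so $c(i,j,h)$ is connected.

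\textbf{Touching.} Take $c(i,j,h)$ and $c(i',j',h')$. Write $A=\{(i,*,h),(i,*,h\bmod 3+1)\}$ for the level part and $P=\{(*,j,h)\}$ for the vertical part of the first subgraph, and define $A'$ and $P'$ likewise for the second. I split into cases on the pages.

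\emph{Case $h=h'$.} The vertex $(i',j,h)$ lies in $P$, since it has layer $j$ and page $h$. It also lies in $A'$, since it has level $i'$ and page $h$. So the two subgraphs share a vertex.

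\emph{Case $h'=h\bmod 3+1$.} The vertex $(i,j',h')$ lies in $A$, since it has level $i$ and page $h\bmod 3+1=h'$. It also lies in $P'$, since it has layer $j'$ and page $h'$. So they share a vertex.

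\emph{Case $h=h'\bmod 3+1$.} This is the previous case with the two subgraphs exchanged: $(i',j,h)$ lies in both $A'$ and $P$.

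Since there are only three pages, any two distinct $h,h'\in[3]$ are cyclically adjacent, so one of the last two cases always applies. Hence every pair of subgraphs shares a vertex, which is stronger than touching.

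The only step needing care is this cyclic case analysis: it is exactly where the use of two consecutive pages in $c$ pays off, because three pages guarantee that any two distinct pages are cyclic neighbours. Everything else is routine. The real work toward \cref{thm:bramble-hn}, which I am not proving here, is showing that this bramble has order $2n$.
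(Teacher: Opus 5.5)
Your proof is correct and follows the paper's argument: in the same-page case you exhibit the shared vertex $(i',j,h)$, and in the cyclically-adjacent case you exhibit $(i,j',h')$, exactly as the paper does. You also verify that each $c(i,j,h)$ is connected, which the paper leaves implicit, and your argument for that is sound.
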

\begin{proof}
  We prove this by showing every two elements in $\mathcal{B}$ share at least
  one vertex. Consider two arbitrary elements $c(i_1,j_1,h_1), c(i_2,j_2,h_2)$
  in $\mathcal{B}$, we have the following cases:
  \begin{itemize}
    \item if $h_1=h_2$, their crossings are on the same page, so both of them
      pass $(i_1,j_2,h_1)$ and $(i_2,j_1,h_1)$;
    \item if $h_1\neq h_2$, without loss of generality we can assume
      $h_2=h_1\mod 3+1$, then both of them pass $(i_1,j_2,h_2)$.
  \end{itemize}
  \begin{center}

    \begin{tikzpicture}[scale=0.4]
      \newcommand{\crossing}[3]{
        \coordinate (a0) at ($(v#13_1)!#2!(v#13_2)$);
        \coordinate (a1) at ($(v#11_1)!#2!(v#11_2)$);
        \node[fill vertex, draw=white] at ($(a0)!#3!(a1)$) {};
      }
      \begin{scope}
        \drawH{};
        \coverafter{};

        \draw[mRose,thick] ($(v33_1)!.3!(v33_2)$) -- ($(v31_1)!.3!(v31_2)$) -- ($(v11_1)!.3!(v11_2)$) -- ($(v13_1)!.3!(v13_2)$);
        \draw[mRose,thick] ($(v31_1)!.7!(v33_1)$) -- ($(v31_2)!.7!(v33_2)$);

        \draw[mBlue,thick] ($(v33_1)!.6!(v33_2)$) -- ($(v31_1)!.6!(v31_2)$) -- ($(v11_1)!.6!(v11_2)$) -- ($(v13_1)!.6!(v13_2)$);
        \draw[mBlue,thick] ($(v31_1)!.3!(v33_1)$) -- ($(v31_2)!.3!(v33_2)$);

        \crossing{3}{0.3}{0.7};
        \crossing{3}{0.6}{0.3};
      \end{scope}

      \begin{scope}[shift={(9,0)}]
        \drawH{};

        \draw[mBlue,thick] ($(v23_1)!.6!(v23_2)$) -- ($(v21_1)!.6!(v21_2)$) -- ($(v11_1)!.6!(v11_2)$) -- ($(v13_1)!.6!(v13_2)$);
        \draw[mBlue,thick] ($(v11_1)!.3!(v13_1)$) -- ($(v11_2)!.3!(v13_2)$);

        \coverafter{};

        \draw[mRose,thick] ($(v33_1)!.3!(v33_2)$) -- ($(v31_1)!.3!(v31_2)$) -- ($(v11_1)!.3!(v11_2)$) -- ($(v13_1)!.3!(v13_2)$);
        \draw[mRose,thick] ($(v31_1)!.7!(v33_1)$) -- ($(v31_2)!.7!(v33_2)$);

        \crossing{1}{0.3}{0.7};
      \end{scope}
    \end{tikzpicture}
  \end{center}
  Therefore, every two elements in $\mathcal{B}$ touch each other, and
  $\mathcal{B}$ is a bramble.
\end{proof}

Now we give a lower bound for the order of $\mathcal{B}$:
\begin{proposition}\label{prop:hitting-set} %
  Every hitting set of $\mathcal{B}$ has size at least $2n$.
\end{proposition}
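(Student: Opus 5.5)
We prove the bound by a direct counting argument, splitting into cases according to how many pages contain at least $n$ vertices of a candidate hitting set. Fix a hitting set $S$ of $\mathcal{B}$ and, for $h\in[3]$, let $S_h$ be the set of vertices of $S$ on page $h$.

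The key reformulation is the following. Fix a page $h$. The element $c(i,j,h)$ is missed by $S$ exactly when two things happen. First, the column $(*,j,h)$ (layer $j$, page $h$) contains no vertex of $S$. Second, level $i$ contains no vertex of $S$ on pages $h$ and $h\bmod 3+1$. Since $S$ hits every $c(i,j,h)$, for each $h$ at least one of the following holds:
\begin{enumerate}[label=(\alph*)]
\item all $n$ columns of page $h$ contain a vertex of $S$;
\item every level $i\in[n]$ contains a vertex of $S$ on page $h$ or on page $h\bmod 3+1$.
\end{enumerate}
Otherwise we could pick a free column $j$ and a free level $i$, and $c(i,j,h)$ would be missed. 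Condition (a) forces $|S_h|\ge n$, because distinct columns of page $h$ are disjoint. Condition (b) forces $|S_h\cup S_{h\bmod 3+1}|\ge n$, because distinct levels are disjoint.

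We now split into cases by the number of pages $h$ with $|S_h|\ge n$.
\begin{itemize}
\item If two pages satisfy $|S_h|\ge n$, then $|S|\ge 2n$ immediately.
\item If exactly one page $h_0$ satisfies $|S_{h_0}|\ge n$, look at $h=h_0\bmod 3+1$. Condition (a) fails for this $h$, so (b) holds. Hence every level contains a vertex of $S$ on one of the two pages other than $h_0$. This gives $|S\setminus S_{h_0}|\ge n$, and therefore $|S|\ge 2n$.
\item If no page satisfies $|S_h|\ge n$, then (b) holds for all three values of $h$. So for each level $i$, the set $P_i$ of pages on which level $i$ contains a vertex of $S$ meets each of the pairs $\{1,2\},\{2,3\},\{3,1\}$. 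No single page meets all three pairs, so $|P_i|\ge 2$. Summing over the $n$ disjoint levels gives $|S|\ge 2n$.
\end{itemize}

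The only real obstacle is the first step: recognising that naive averaging over pages is too weak, and that the right invariant is the per-level count $|P_i|\ge 2$. After the reformulation into conditions (a) and (b), the case analysis above is routine. Combined with \cref{prop:bramble} and the bramble duality theorem, this gives \cref{thm:bramble-hn}.
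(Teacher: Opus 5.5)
Your proof is correct. It rests on the same fact the paper's proof uses implicitly: $c(i,j,h)$ is missed exactly when layer $j$ of page $h$ and level $i$ on pages $h, h\bmod 3+1$ are both free of $S$. You organize the argument differently, though.

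The paper argues by contradiction. For a set $X$ of size $2n-1$, it uses pigeonhole to find a level $i_0$ carrying at most one vertex of $X$. It assumes without loss of generality that pages $1,2$ are empty on that level, and deduces that every layer of page $1$ is hit, so page $1$ absorbs at least $n$ vertices. It then runs a budget argument with $m$ vertices on page $3$ and at most $n-m-1$ on page $2$, exhibiting an explicit missed element $c(i_0,j_0,1)$ or $c(i_1,j_1,2)$.

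Your per-page dichotomy (a)/(b) and the case split on how many pages satisfy $|S_h|\ge n$ make the argument direct and symmetric. Your third case ($|P_i|\ge 2$ on every level) is the contrapositive of the paper's pigeonhole step. Your second case is the paper's budget argument, with pages $1,2,3$ replaced by $h_0$, $h_0\bmod 3+1$ and the remaining page.

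Your version avoids the WLOG normalization and the bookkeeping with $m$. The paper's version has the minor advantage of naming a concrete unhit bramble element.
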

\begin{proof}
  We show any vertex set $X$ of size $2n-1$ fails to hit at least one element
  in $\mathcal{B}$.

  We first investigate the distribution of the vertices across levels. Since
  there are $n$ levels and $|X|=2n-1$, there exists a level $i_0$ such that
  $X$ contains at most one vertex in it. Without loss of generality, we assume
  on level $i_0$ page $1$ and $2$ have no vertex in $X$. Assuming there are a
  total of $m (m\ge 0)$ vertices in $X$ that are on page $3$, there are no
  more than $2n-m-1$ vertices in $X$ on pages $1$ and $2$ in total. Now we
  look at pages $1$ and $2$. The subgraph induced by them is a grid of size
  $n\times 2n$, and we know that on level (row) $i_0$ there is no vertex in $X$.
  \begin{center}
    \begin{tikzpicture}[scale=0.4, every label/.append style={font=\scriptsize}]
      \drawH{};
      \draw[draw=none, fill=mBlue, opacity=0.1] (v11_1) -- (v13_1) -- (v13_2) -- (v11_2) -- (v11_1);
      \draw[draw=none, fill=mRose, opacity=0.1] (v21_1) -- (v23_1) -- (v23_2) -- (v21_2) -- (v21_1);
      \draw[draw=none, fill=mGreen, opacity=0.1] (v31_1) -- (v33_1) -- (v33_2) -- (v31_2) -- (v31_1);

      \def\n{5}
      \pgfmathtruncatemacro{\m}{2*\n}

      \newcommand{\drawgrid}{
        \draw[draw=none, fill=mRose, opacity=0.1] (0,0) -- (\n,0) -- (\n,\n) -- (0,\n) -- (0,0);
        \draw[draw=none, fill=mBlue, opacity=0.1, shift={(\n+1,0)}] (0,0) -- (\n,0) -- (\n,\n) -- (0,\n) -- (0,0);
        \draw[dashed, gray, step=1.0cm] (0,0) grid +(\m+1,\n);
      }

      \begin{scope}[shift={(8,-1.5)}]
        \drawgrid{};
        \draw[edge, double] (0,3) -- ++ (\m+1,0);
        \node at (\n/2,-0.5) {page $2$};
        \node at (\n/2+\n+1,-0.5) {page $1$};
        \node at (\m+1,3) [right] {level $i_0$};
      \end{scope}

      \begin{scope}[shift={(0,-6.5)}, scale=0.6]
        \drawgrid{};
        \draw[edge, double] (0,3) -- ++ (\m+1,0);
        \draw[edge, double] (\n+4,0) -- ++ (0,\n);
        \node[fill vertex, fill=white, draw=black, label=above right:{$(i_0,j_0,1)$}] at ({\n+4}, 3) {};
      \end{scope}

      \begin{scope}[shift={(10,-6.5)}, scale=0.6]
        \draw[draw=none, fill=mGreen, opacity=0.1] (0,0) -- (\n,0) -- (\n,\n) -- (0,\n) -- (0,0);
        \draw[draw=none, fill=mRose, opacity=0.1, shift={(\n+1,0)}] (0,0) -- (\n,0) -- (\n,\n) -- (0,\n) -- (0,0);
        \draw[dashed, gray, step=1.0cm] (0,0) grid +(\m+1,\n);
        \draw[edge, double] (\n+1+3,0) -- ++ (0,\n);

        \draw[edge, double] (\n+1,4) -- ++ (\n,0);
        \draw[edge, double] (0,0) -- ++ (\m+1,0);
        \draw[edge, double] (\n+1,2) -- ++ (\n,0);

        \node[vanish vertex] at (4, 2) {};
        \node[vanish vertex] at (1, 4) {};
        \node[fill vertex, fill=white, draw=black, label=below right:{$(i_1,j_1,1)$}] at (\n+1+3,0) {};
      \end{scope}
    \end{tikzpicture}
  \end{center}
  If there exists a layer (column) $j_0$ on page $1$ that is not covered by
  any vertex in $X$, then $c(i_0,j_0,1)$ is not hit by $X$. Therefore, we
  suppose all layers on page $1$ are covered by $X$. Note that this also
  limits $m\le n-1$, because if $m=n$, then the total budget for page $1$ is
  $n-1$, leaving at least one layer on page $1$ uncovered. The budget for page
  $2$ is thus $n-m-1$, so there are at least $m+1$ layers and levels that are
  not covered by $X$. We arbitrarily pick an uncovered layer $j_1$, then for
  any uncovered level $i$, $c(i,j_1,2)$ has to be covered by some $(i,*,3)$.
  However, there are $m+1$ different such $i$s, and only $m$ vertices on page
  $3$, so there exists a level $i_1$ such that $c(i_1,j_1,2)$ is not covered
  by any vertex in $X$. Therefore, $X$ fails to hit all elements in
  $\mathcal{B}$, and the hitting set of $\mathcal{B}$ has size at least $2n$.
\end{proof}
By \cref{prop:bramble} and \cref{prop:hitting-set}, according to the duality
between treewidth and bramble order~\cite{seymourGraphSearchingMinMax1993}, we
have the lower bound for the treewidth of $H_n$ as in \cref{thm:bramble-hn}.

\section{Algorithmic lower bounds}\label{sec:seth_lower_bounds}

In this section we establish that our algorithms for pathwidth and cutwidth are
optimal under the \pwseth{}. Before we proceed let us give some intuition,
focusing on the case of pathwidth. Recall that we want to justify, for each
\emph{fixed} values of $\alpha,\Delta$, that parameter dependence is at least
$\left(\alpha^{\lfloor\frac{\Delta}{2}+1\rfloor}\right)^{\pw}$. Consider the
case $\Delta=3$, where our goal is $(\alpha^2)^\pw$. We will achieve this by
reducing from a CSP problem with alphabet size $\alpha$ parameterized by
pathwidth. In particular, for each variable $x$ of the initial instance and each
appearance of $x$ in a bag, we construct a player with $\alpha$ strategies, with
the goal that all these players will pick a strategy corresponding to the value
of $x$ in a satisfying assignment. It is now easy to represent each constraint;
the interesting part is how to ensure that all players representing the same
variable play the same strategy, while \emph{decreasing} the pathwidth of the
initial instance by a factor of $2$. We can easily do this if allowed to keep
the $\pw$ constant, by listing the players in an $m\times \pw$ grid (the left
diagram), and forcing the player in row $i$ to play the same as her copy in row
$i+1$ (and this is the essence of our cutwidth lower bound).
\begin{center}
  \begin{tikzpicture}[scale=0.6]
    \foreach \i in {1,...,6} {
      \ifnum\i>4
      \def\pos{\i+1}
      \else
      \def\pos{\i}
      \fi
      \node[fill vertex] (xi\i) at (\pos, 0) {};
      \node[fill vertex] (xiplus1\i) at (\pos, -1) {};
      \draw[diedge] (xi\i) -- (xiplus1\i);
      \draw[diedge,opacity=0.4] ($(xi\i)+(0,0.6)$)--(xi\i);
      \draw[diedge,opacity=0.4] (xiplus1\i)--++(0,-0.6);
    }
    \draw[decorate, edge, decoration={brace, amplitude=5pt, raise=0.4cm}]
    (xi1.center) -- (xi6.center) node[midway, above=0.5cm, align=center, font=\footnotesize] {\color{black}$\pw$};
    \node at (5,-0.5) {\color{mGrey}$\dots$};
    \node at (0.5, 0) [left] {row $i$};
    \node at (0.5, -1) [left] {row $i+1$};

    \begin{scope}[xshift=9cm]
      \foreach \i in {1,...,6} {
        \ifnum\i>4
        \def\pos{\i+1}
        \else
        \def\pos{\i}
        \fi
        \node[fill vertex] (xi\i) at (\pos, 0) {};
        \node[fill vertex] (xiplus1\i) at (\pos, -1) {};

      }
      \foreach \i in {1,3,5} {
        \pgfmathtruncatemacro{\nextone}{\i+1}
        \draw[diedge] (xi\i) -- (xi\nextone);
        \draw[diedge] (xi\i) -- (xiplus1\nextone);
        \draw[diedge] (xi\i) -- (xiplus1\i);
        \draw[diedge,opacity=0.4] (xiplus1\i) -- (xiplus1\nextone);
        \draw[diedge,opacity=0.4] ($(xi\i)+(0,0.6)$)--(xi\i);
        \draw[diedge,opacity=0.4] (xiplus1\i)--++(0,-0.6);
        \draw[diedge,opacity=0.4] (xiplus1\i)--++(0.6,-0.6);
      }

      \draw[decorate, edge, decoration={brace, amplitude=5pt, raise=0.4cm}]
      (xi1.center) -- (xi6.center) node[midway, above=0.5cm, align=center,
      font=\footnotesize] {\color{black}${\pw}/{2}$};
      \node at (5,-0.5) {\color{mGrey}$\dots$};
    \end{scope}
  \end{tikzpicture}
\end{center}
The key idea of the pathwidth construction is now to use the larger degree to
perform this information transfer more efficiently. In particular, for
$\Delta=3$, we designate every other player of a row to be a \emph{pivot} player
and make this player responsible for her own consistency as well as the
consistency of a non-pivot player. The pivot has therefore one out-neighbor in
her own row and two out-neighbors in the next row and has responsibility to
ensure that this group of $4$ players is consistent.
The pivot players in each row thus become a separator (the right diagram),
allowing us to decrease the pathwidth by the desired factor.

\begin{theorem}\label{thm:pw-lb}
  For any integers $k,\alpha\ge 2$, if there is an algorithm that given an
  $\alpha$-strategy graphical game $\mathcal{G}$ with maximum out-degree
  $\Delta=2k-1$ and pathwidth $p$, decides whether there is a PNE in time
  $\alpha^{(1-\varepsilon)k\cdot p}|\mathcal{G}|^{O(1)}$ for some
  $\varepsilon>0$, then the \pwseth{} is false.
\end{theorem}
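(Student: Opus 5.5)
We plan to reduce from CSP with alphabet size $\alpha$ parameterized by pathwidth. The starting point is the equivalence of \cite[Theorem 3.2]{lampis_primal_2025}: if for some $\varepsilon>0$ every CSP instance $\psi$ over an alphabet of size $\alpha$ can be solved in time $\alpha^{(1-\varepsilon)\pw}|\psi|^{O(1)}$, where $\pw$ is the width of a given path decomposition of its primal graph, then the \pwseth{} is false. It is convenient to first normalize the CSP instance. We take a nice path decomposition $X_1,\dots,X_r$ of width $\pw$ and pad it to a ``rectangular'' form. Each bag then has exactly $\pw+1$ slots, and each slot holds one variable over a time interval. Each constraint is attached to one bag containing its scope; we may also assume constraints have bounded arity, e.g.\ binary. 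We pad $\pw+1$ up to a multiple of $k$, say $kp$, which costs only an additive constant in the exponent.

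The construction follows the grid $G_{k,p}$ of \cref{thm:pw-sqr-lb}, directed from pivots. We create rows $1,\dots,m$ with one row per bag (or per step). Row $i$ has $kp$ players $v_{i,1},\dots,v_{i,kp}$, each with $\alpha$ strategies; player $v_{i,s}$ represents the variable occupying slot $s$ at step $i$. The slots are grouped into $p$ blocks of $k$, and the first slot of each block holds the pivot $v_{i,gk+1}$. The pivot has arcs to the other $k-1$ players of its block in row $i$ and to all $k$ players of its block in row $i+1$, for out-degree exactly $2k-1$. Its payoff matrix is defined as follows. If the $2k$ players of the block consistently copy row $i$ to row $i+1$, on slots whose variable persists, the pivot is indifferent. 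Otherwise it is forced into a matching-pennies-style instability with an auxiliary pair of players, conditional on the inconsistency, as in the proof of \cref{thm:plus-dependency-impossible}.

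Constraints are handled in the same way, using a small number of auxiliary checker players per constraint. Each checker has arcs into the relevant row, its out-degree bounded by the arity (at most $2k-1$ after suitable normalization), and a matching-pennies gadget makes any violation fatal. Non-pivot players have constant payoffs.

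Correctness then needs two checks. A satisfying assignment yields a PNE by copying values into all players and setting the gadgets appropriately. Conversely, any PNE forces consistency along every slot. The pivot guards all $2k$ positions of its block, so each slot is consistent across rows, and satisfying every checker gives a satisfying assignment.

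The pathwidth bound reuses the decomposition from \cref{thm:pw-sqr-lb}: sweep row by row, keeping only the $p$ pivots of the previous row plus $O(1)$ extra vertices. The checkers and gadget players add $O(1)$ vertices per bag, so the width is $p+O(1)$. Game size is $|\psi|\cdot\alpha^{O(k)}$, polynomial for fixed $k,\alpha$. An $\alpha^{(1-\varepsilon)kp}$ algorithm would then solve the CSP in time $\alpha^{(1-\varepsilon)(\pw+O(k))}$, contradicting the \pwseth{}.

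The main obstacle is handling changes of slot contents: when a bag introduces or forgets a variable, the copy constraint for that slot must be dropped, and a checker needs access to variables in possibly different blocks. The first approach is to make each row correspond to a single introduce or forget step, so that the pivot simply ignores the changing slot. Constraint checkers would attach to a row where the scope is present, with their arcs counted separately so they never exceed out-degree $2k-1$. This keeps the pivot separator intact and the width at $p+O(1)$, while the hardness is still carried by the $\alpha^{kp}$ states of the row.
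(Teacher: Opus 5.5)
Your overall plan matches the paper's. It uses one row of variable copies per bag, blocks of $k$ slots whose pivot has out-degree exactly $2k-1$ ($k-1$ row-mates plus the $k$ next-row copies), checkers with conditional matching pennies for constraints, and a sweep that keeps only the previous row's pivots. The gap is in the one non-routine step: how the block is forced to be consistent. As written (``the pivot is indifferent if consistent, otherwise it is forced into a matching-pennies instability with an auxiliary pair, conditional on the inconsistency''), no player is actually able to detect the inconsistency. The pivot's own payoff cannot do it: for any fixed strategies of its out-neighbours, some strategy of the pivot is a best response, so the pivot's stability alone never rules out an inconsistent block. An auxiliary player that tests the whole block needs at least $2k>\Delta$ out-arcs, plus one more to its pennies partner. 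An auxiliary that tests only some non-pivot pairs reads both a row-$i$ and a row-$(i+1)$ non-pivot. It must therefore stay in the bags across the row boundary, which enlarges the separator beyond one vertex per $k$ slots and destroys the $\alpha^{kp}$ bound.

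The missing idea is that the pivot's own value must double as the consistency signal. Define $M_{v_{i,h}}$ so that the pivot's best responses are exactly $\{s_{i+1,h}\}$ when the other $k-1$ pairs of its block agree, and exclude $s_{i+1,h}$ otherwise; this is possible since $\alpha\ge 2$. Then add a conditional matching-pennies pair $m_1,m_2$, with $m_1$ reading only $v_{i,h}$ and $v_{i+1,h}$ (out-degree $3\le 2k-1$), that forces $v_{i,h}=v_{i+1,h}$. In a PNE the pivot therefore equals its copy, and its stability forces the rest of the block to agree. The pair is local, so it costs only $O(1)$ width.

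This fix also shows that ``the pivot simply ignores the changing slot'' is not enough. If the introduced or forgotten variable sits in the pivot's own slot, you may not impose $v_{i,h}=v_{i+1,h}$, so the signal is unavailable. For each transition you must therefore choose every block's pivot among the persisting slots, as the paper does by grouping only the variables of $B_i\cap B_{i+1}$. The sweep tolerates this, since it only needs one row-$i$ vertex per block, introduced before the rest of its block.

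Finally, to keep the checkers at $O(1)$ extra width you need at most one constraint per row. You can get this by duplicating bags, as the paper does via a bijection between constraints and bags.
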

\newcommand{\proofsection}[1]{\vspace{0.3em}\noindent\textbf{#1}.}
\begin{proof}%
  Suppose we are given a 3-CSP instance $\psi$ with domain $\Sigma$, an
  associated nice path decomposition for the primal graph of $\psi$ of width
  $\pw$ and an integer $k\ge 2$, we build a graphical game $\mathcal{G}=(G,\{M_v\})$
  satisfying the following properties:
  \begin{enumerate}
    \item $\mathcal{G}$ is $\alpha$-strategy where $\alpha=|\Sigma|$ and has
      maximum out-degree $\Delta=2k-1$; and
    \item $\mathcal{G}$ has a PNE if and only if $\psi$ is satisfiable; and
    \item $\mathcal{G}$ and a path decomposition of $G$ of width $\frac
      {\pw}{k}+7$ can be obtained in polynomial time.
  \end{enumerate}
  To see why this implies the theorem, suppose there exists an algorithm for PNE
  with runtime $\alpha^{(1-\varepsilon)k\cdot p}|\mathcal{G}|^{C}$, then we can
  use this algorithm to solve the PNE instance constructed from the CSP instance
  in time $\alpha^{(1-\varepsilon)k\cdot(\frac
  {\pw}k+7)}(n\alpha^{2k})^{C}=|\Sigma|^{(1-\varepsilon)(\pw+7k)+2kC}n^{C}$.
  When $\pw>\frac {7k+2kC}{\varepsilon}$, this is
  ${|\Sigma|}^{(1-\varepsilon')\pw}n^{O(1)}$ for some $\varepsilon'>0$, which
  contradicts \pwseth{}.

  Now we give the construction. Given a path decomposition of the primal graph
  of $\psi$, by adapting~\cite[Lemma 2.1]{lampis_primal_2025} to general CSPs
  and adding dummy constraints to $\psi$, we obtain in linear time a new nice
  path decomposition $B_1,\dots,B_\ell$ and a \textit{bijective} function $\mu$
  from the set of constraints of $\psi$ to the set of bags such that for each
  constraint $c$, $B_{\mu(c)}$ contains all the variables of $c$. Under this
  bijection, we will check each constraint with a dedicated bag.

  For all $i\in [\ell]$ and each variable $x_j$ in bag $B_i$, we create a player
  $v_{i,j}$ representing $x_j$. Assuming $B_{i}$ is used to verify a constraint
  containing $(x_{a_{i,1}},x_{a_{i,2}},x_{a_{i,3}})$, we then create players
  $u_{i,1},u_{i,2},u_{i,3}$ and add arcs from $u_{i,1}$ to
  $v_{i,a_{i,1}},v_{i,a_{i,2}},v_{i,a_{i,3}}$, from $u_{i,2},u_{i,3}$ to $u_{i,1}$ and also
  antiparallel arcs between $u_{i,2}$ and $u_{i,3}$. For two consecutive bags
  $B_i,B_{i+1}$, let $B_{i}\cap B_{i+1}=\{x_1,\ldots,x_t\}$ be the set of common
  variables they share. Then for $h=1,k+1,\dots,\lfloor\frac {t-1}{k}\rfloor
  k+1$, we create arcs from $v_{i,h}$ to \(
    \left\{v_{i,h+1},v_{i,h+2},\dots,v_{i,\min(t,h+k-1)}\right\} \cup \left\{
  v_{i+1,h},v_{i+1,h+1},\dots,v_{i+1,\min(t,h+k-1)}\right\}\). Two additional
  players $m^{(i)}_{h,1}, m^{(i)}_{h,2}$ are introduced, with antiparallel arcs
  between them and arcs from $m^{(i)}_{h,1}$ to $v_{i,h},v_{i+1,h}$. See below
  for an example of $k=3$.
  \begin{center}
    \begin{tikzpicture}[
        xscale=0.8, yscale=1,
        label font/.style={font=\scriptsize}
      ]
      \def\k{3} \def\t{7}

      \foreach \j in {1,...,\t} {
        \pgfmathsetmacro{\xcoord}{\j > 3 ? \j + 1 : \j}
        \node[fill vertex, label={[label font]above:$v_{i,\j}$}] (xi\j) at (\xcoord, 1) {};
        \node[fill vertex, label={[label font]below:$v_{i+1,\j}$}] (xiplus1\j) at (\xcoord, -1) {};
      }

      \node[left=1.8cm of xi1, font=\scriptsize, anchor=east, align=center]
      {$i$-th bag $B_i$ \\ $\downarrow$ \\ $i$-th row $R_i$};
      \node[left=1.8cm of xiplus11, font=\scriptsize, anchor=east, align=center]
      {$(i+1)$-th bag $B_{i+1}$ \\ $\downarrow$ \\ $(i+1)$-th row $R_{i+1}$};

      \node[right=0.5cm of xi\t] {$\dots$};
      \node[right=0.5cm of xiplus1\t] {$\dots$};

      \newcommand{\drawscript}[1]{
        \pgfmathsetmacro{\endTop}{\inteval{#1+\k-1}}
        \pgfmathsetmacro{\startTop}{\inteval{#1+1}}
        \foreach \j in {\startTop,...,\endTop} {
          \draw[diedge] (xi#1) to[bend left=-20] (xi\j);
        }
        \pgfmathsetmacro{\endBottom}{#1+\k-1}
        \foreach \j in {#1,...,\endBottom} {
          \draw[diedge] (xi#1) -- (xiplus1\j);
        }
        \pgfmathsetmacro{\xcoord}{#1 > 3 ? #1 + 1 : #1}
        \node[fill vertex, label={[label font]left:$m^{(i)}_{#1,1}$}] (m0) at (\xcoord-.6, 0.5) {};
        \node[fill vertex, label={[label font]left:$m^{(i)}_{#1,2}$}] (m1) at (\xcoord-.6, -0.5) {};
        \foreach \x/\y in {m0/xi#1, m0/xiplus1#1, m0/m1, m1/m0} {
          \draw[diedge] (\x) to[bend left=20] (\y);
        }
      }

      \drawscript{1}
      \drawscript{4}

      \draw[decorate, edge, decoration={brace, amplitude=5pt, raise=0.4cm}]
      (xi1.center) -- (xi3.center) node[midway, above=0.6cm, align=center, font=\footnotesize] {\color{black}Group 1};

      \draw[decorate, edge, decoration={brace, amplitude=5pt, raise=0.4cm}]
      (xi4.center) -- (xi6.center) node[midway, above=0.6cm, align=center, font=\footnotesize] {\color{black}Group 2};

      \node[fill vertex, label={[label font]above:$u_{i,1}$}] (ui1) at (10.5, 1) {};
      \node[fill vertex, label={[label font]left:$u_{i,2}$}] (ui2) at (10.5-0.6, 0.5) {};
      \node[fill vertex, label={[label font]right:$u_{i,3}$}] (ui3) at (10.5+0.6, 0.5) {};
      \draw[diedge] (ui2) -- (ui1);
      \draw[diedge] (ui3) -- (ui1);
      \draw[diedge] (ui2) edge[bend left=20] (ui3);
      \draw[diedge] (ui3) edge[bend left=20] (ui2);

      \node[fill vertex, label={[label font]below:$u_{i+1,1}$}] (uip1) at (10.5, -1) {};
      \node[fill vertex, label={[label font]left:$u_{i+1,2}$}] (uip2) at (10.5-0.6,-0.5) {};
      \node[fill vertex, label={[label font]right:$u_{i+1,3}$}] (uip3) at (10.5+0.6, -0.55) {};
      \draw[diedge] (uip2) -- (uip1);
      \draw[diedge] (uip3) -- (uip1);
      \draw[diedge] (uip2) edge[bend left=20] (uip3);
      \draw[diedge] (uip3) edge[bend left=20] (uip2);

      \foreach \x in {ui1,uip1} {
        \draw[diedge] (\x) -- ++(-0.7,0.2);
        \draw[diedge] (\x) -- ++(-0.7,0);
        \draw[diedge] (\x) -- ++(-0.7,-0.2);
      }
    \end{tikzpicture}
  \end{center}
  Now we construct the payoff matrices. Let $s_{i,j}\in \Sigma$ be the strategy
  of $v_{i,j}$. For $v_{i,h}(h\in \{1,k+1,\dots\})$, the payoff matrix of
  $v_{i,h}$ is defined in a way that it will want to play $s_{i+1,h}$ if and
  only if $s_{i,j}=s_{i+1,j}$ for all $h+1\leq j \leq \min(t,h+k-1)$. For
  $m^{(i)}_{h,1}$ and $m^{(i)}_{h,2}$, we make them a conditional matching
  pennies gadget forcing $v_{i,h}$ and $v_{i+1,h}$ to play the same strategy.
  For $u_{i,1}$, we make sure it plays $1$ if and only if the joint strategy of
  $v_{i,a_{i,*}}$ corresponds to a valid assignment for the constraint, and make
  $u_{i,2},u_{i,3}$ a conditional matching pennies gadget forcing $u_{i,1}$ to
  play $1$. From the construction we see in any PNE, copies of each variable are
  consistent, and each constraint is satisfied.%

  Finally, we give a path decomposition of width at most $\frac {\pw}k+7$. %
  The idea is similar to the ones used in the proof of \cref{thm:pw-sqr-lb},
  with slight changes listed as follows:
  \begin{itemize}
    \item before any other players in row $i+1$, introduce
      $u_{i+1,*},v_{i+1,a_{i+1,*}}$ and forget $u_{i+1,*}$; and
    \item when introducing a pivot player $v_{i+1,h}$, also introduce
      $m^{(i)}_{h,*}$ and forget them right after; and
    \item $v_{i+1,j}$ with $j\equiv 0\pmod k$ can be immediately forgotten.
  \end{itemize}
  The width can be bounded similarly. Refer to Appendix~\ref{app:pd-pw-lb} for
  details.
\end{proof}

\begin{toappendix}

  \appendixsection{Path decomposition of the game graph in the proof of \cref{thm:pw-lb}}\label{app:pd-pw-lb}

  \begin{algorithmic}[1]
      \State Initialize a bag with $u_{1,1},u_{1,2},u_{1,3},v_{1,a_{1,1}},v_{1,a_{1,2}},v_{1,a_{1,3}}$
      \State Append a new bag forgetting $u_{1,1},u_{1,2},u_{1,3}$

      \For{$j \gets 1$ \textbf{to} $n$}
        \If {$v_{1,j}$ has not been introduced}
          \State Append a new bag introducing $v_{1,j}$
        \EndIf
        \If{$j\not\equiv 1\pmod k$}
          \State Append a new bag forgetting $v_{1,j}$
        \EndIf
      \EndFor

      \For{$i \gets 1$ \textbf{to} $n-1$} \Comment{Start with pivot vertices of row $i$, others in row $i$ forgotten}
        \State Append a new bag introducing $u_{i+1,1},u_{i+1,2},u_{i+1,3},v_{i+1,a_{i+1,1}},v_{i+1,a_{i+1,2}},v_{i+1,a_{i+1,3}}$\label{line:biggest-bag-2}
        \State Append a new bag forgetting $u_{i+1,1},u_{i+1,2},u_{i+1,3}$

        \For{$j \gets 1$ \textbf{to} $n$}
          \If{$v_{i+1,j}$ has not been introduced}
            \State Append a new bag introducing $v_{i+1,j}$
          \EndIf
          \If{$j\not\equiv 1\pmod k$}
            \State Append a new bag forgetting $v_{i+1,j}$
          \Else
            \State Append a new bag introducing $m^{(i)}_{j,1},m^{(i)}_{j,2}$
            \State Append a new bag forgetting $m^{(i)}_{j,1},m^{(i)}_{j,2}$
          \EndIf
          \If{$j\equiv 0\pmod k$}
            \State Append a new bag forgetting $v_{i,j-k+1}$
          \EndIf
        \EndFor
      \EndFor
    \end{algorithmic}
  The biggest bag is created by line~\ref{line:biggest-bag-2}, containing
  $\lceil\frac {|B_i|}k\rceil$ pivot players and $6$ others. Thus, the width
  of this path decomposition is at most $\lceil \frac {|B_i|}k\rceil+5\le \frac
  {\pw+1}{k}+6<\frac {\pw}{k}+7$.
\end{toappendix}

\begin{theorem}\label{thm:ctw-lb}%
  For any integer $\alpha \ge 2$, if there is an algorithm that given an
  $\alpha$-strategy graphical game $\mathcal{G}$ with cutwidth $\ctw$, decides
  whether there is a PNE in time
  $\alpha^{(1-\varepsilon)\ctw}|\mathcal{G}|^{O(1)}$ for some $\varepsilon > 0$,
  then the \pwseth{} is false.
\end{theorem}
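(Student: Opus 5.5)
We mirror the proof of \cref{thm:pw-lb}, using the simpler ``grid'' transfer sketched in the left diagram at the start of this section. Start from a 3-CSP instance $\psi$ over alphabet $\Sigma$ with $|\Sigma|=\alpha$ and a nice path decomposition of its primal graph of width $\pw$. As in \cref{thm:pw-lb}, adapt~\cite[Lemma 2.1]{lampis_primal_2025} to obtain a path decomposition $B_1,\dots,B_\ell$ and a bijection $\mu$ from constraints to bags. We then build a game $\mathcal{G}$ with three properties: every player has at most $\alpha$ strategies; $\mathcal{G}$ has a PNE iff $\psi$ is satisfiable; and a linear ordering of cutwidth at most $\pw+c$ for an absolute constant $c$ can be computed in polynomial time. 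The running-time arithmetic is then identical to that of \cref{thm:pw-lb}. An algorithm running in time $\alpha^{(1-\varepsilon)\ctw}|\mathcal{G}|^{C}$ gives $|\Sigma|^{(1-\varepsilon)(\pw+c)}n^{O(1)}$, since the maximum out-degree is constant and so $|\mathcal{G}|$ is polynomial. This is $|\Sigma|^{(1-\varepsilon')\pw}n^{O(1)}$ for large $\pw$, contradicting the \pwseth{}.

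\textbf{Construction.} For each bag $B_i$ and each variable $x_j\in B_i$, create a player $v_{i,j}$. For each variable $x_j\in B_i\cap B_{i+1}$, add a single arc $v_{i,j}\to v_{i+1,j}$, and let the payoff of $v_{i,j}$ say that it wants to copy $s_{i+1,j}$. Then $v_{i,j}$ is locally stable iff $s_{i,j}=s_{i+1,j}$, so no matching-pennies gadget is needed for consistency. The last copy of each variable gets a constant payoff. For the constraint checked at $B_i$, reuse the gadget $u_{i,1},u_{i,2},u_{i,3}$ from \cref{thm:pw-lb}: $u_{i,1}$ points to the three relevant $v_{i,*}$, and $u_{i,2},u_{i,3}$ form a conditional matching pennies that forces $u_{i,1}$ to certify satisfaction. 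Correctness (consistent copies plus every constraint satisfied) then follows exactly as before.

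\textbf{Ordering.} Order the players row by row: first $R_1$, then $R_2$, and so on. Within row $i$, place the three $u_{i,*}$ players immediately next to the $v_{i,*}$ players they point to. Consider a cut between rows $i$ and $i+1$. It is crossed only by the consistency arcs, one for each variable in $B_i\cap B_{i+1}$, so at most $\pw+1$ arcs cross it.

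\textbf{Main obstacle.} The main obstacle is cuts \emph{inside} a row. Take a cut in the middle of row $i+1$, with row $i$ entirely on the left. Arcs into row $i+1$ from row $i$ whose heads are still to the right are crossed, and so are arcs out of row $i+1$ to row $i+2$ from tails already to the left. The fix is to order the variables within each row consistently with the order in which they are forgotten in the path decomposition, so that the arc from $v_{i,j}$ enters row $i+1$ at exactly the position where $v_{i+1,j}$ leaves toward row $i+2$. Concretely, list row $i+1$ so that $v_{i+1,j}$ with $x_j$ forgotten after $B_{i+1}$ come in a fixed global order of variables (e.g.\ by introduction time). Then at any inner cut, each variable of $B_{i+1}$ contributes at most one crossing arc: either the incoming arc to its unplaced copy, or the outgoing arc from its placed copy. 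The newly introduced or just-forgotten variable contributes at most one extra arc. The constraint gadget adds $O(1)$ crossings because its arcs are local. This gives cutwidth at most $\pw+O(1)$, and making this bookkeeping precise, with the parallel-edge caveat from the preliminaries, is the step I expect to require the most care.
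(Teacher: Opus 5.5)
Your proposal is correct and follows essentially the same route as the paper. Both use per-bag copies $v_{i,j}$ with a single copy-arc $v_{i,j}\to v_{i+1,j}$, the $u_{i,1},u_{i,2},u_{i,3}$ constraint gadget with conditional matching pennies, and a row-by-row ordering of cutwidth $\pw+O(1)$; the paper gets $\pw+4$ by placing $u_{i,1},u_{i,2},u_{i,3}$ right after the $v_{i,*}$ of each row.

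The within-row ordering you impose is not actually needed. At a cut inside row $i$, each variable of $B_i$ contributes at most one of its two chain arcs, whatever the order: the incoming arc if $v_{i,j}$ lies right of the cut, the outgoing arc if it lies left. This per-variable count is exactly what justifies the paper's $\pw+4$ bound.
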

We refer to Appendix~\ref{app:ctw-lb} for the proof of the above theorem. These
two theorems together with the result in \cref{subsec:alg-implication} complete
the proof of equivalence for \cref{thm:lower}.%

\begin{toappendix}

  \appendixsection{Proof of \cref{thm:ctw-lb}}\label{app:ctw-lb}

  \begin{proof}
    Suppose we are given a 3-CSP instance $\psi$ with domain $\Sigma$, an
    associated nice path decomposition of the primal graph of $\psi$ with width
    $\pw$, we build a graphical game $\mathcal{G}=(G,\{M_v\})$ satisfying the
    following properties:
    \begin{enumerate}
      \item $\mathcal{G}$ is $\alpha$-strategy where $\alpha=|\Sigma|$; and
      \item $\mathcal{G}$ has a PNE if and only if $\psi$ is satisfiable; and
      \item $\mathcal{G}$ and an ordering of $V(G)$ achieving cutwidth at most
        $\pw+4$ can be constructed in polynomial time.
    \end{enumerate}
    If there exists an algorithm for PNE with runtime
    $\alpha^{(1-\varepsilon)\ctw}|\mathcal{G}|^{O(1)}$, then we can use this
    algorithm to solve CSP in time $|\Sigma|^{(1-\varepsilon)(\pw+4)}n^{O(1)}$
    using the construction above, which is
    $|\Sigma|^{(1-\varepsilon')\pw}n^{O(1)}$ for some $\varepsilon'>0$ when
    $\pw>\frac 4\varepsilon$, contradicting \pwseth{}.

    Now we give the construction. Similar to the proof of \cref{thm:pw-lb}, we
    assume we have a bijection between the constraints and the bags in the path
    decomposition, and for each variable we will also make a copy for each
    bag it appears in, and make sure all copies of the same variable play the
    same strategy.

    For each variable $x_j$ in bag $B_i$, we create a player $v_{i,j}$
    representing $x_j$, and assume $B_{i}$ is used to verify a constraint
    containing $(x_{a_{i,1}},x_{a_{i,2}},x_{a_{i,3}})$. Then create three
    players $u_{i,1},u_{i,2},u_{i,3}$ and add arcs from $u_{i,1}$ to
    $v_{i,a_{i,1}},v_{i,a_{i,2}},v_{i,a_{i,3}}$, from $u_{i,2},u_{i,3}$ to
    $u_{i,1}$ and also antiparallel arcs between $u_{i,2}$ and $u_{i,3}$. For
    two consecutive bags $B_i, B_{i+1}$ with $B_i\cap
    B_{i+1}=\{x_1,\dots,x_t\}$, we create arcs $(v_{i,j},v_{i+1,j})$ for all
    $j\in[t]$. The payoff matrices are defined in a way that $v_{i,j}$ will want
    to play the same strategy as $v_{i+1,j}$, $u_{i,1}$ will only play 1 if the
    constraint is satisfied, and $u_{i,2},u_{i,3}$ form a pair of matching
    pennies forcing $u_{i,1}$ to play 1.

    Then we have a simple layered structure where each layer corresponds to a
    bag in the path decomposition and checks the satisfaction status of the
    corresponding constraint. Therefore, we have a game graph where a PNE exists
    if and only if the original CSP instance is satisfiable. Moreover, the
    maximum out-degree in our constructed game graph is 3, so the construction
    can be done in polynomial time.

    Finally, we give an upper bound of the cutwidth of the constructed game
    graph. Suppose there are $p$ bags in total, we simply order the vertices as
    follows
    \[
      \begin{aligned}
        & v_{1,1},v_{1,2},\dots,v_{1,|B_1|},u_{1,1},u_{1,2},u_{1,3}, \\
        & v_{2,1},v_{2,2},\dots,v_{2,|B_2|},u_{2,1},u_{2,2},u_{2,3}, \\
        & \dots                                                      \\
        & v_{p,1},v_{p,2},\dots,v_{p,|B_p|},u_{p,1},u_{p,2},u_{p,3}.
      \end{aligned}
    \]
    \begin{center}
      \begin{tikzpicture}[
          node distance=1cm,
          >=latex
        ]
        \node[fill vertex, label=below:{\small $v_{i,1}$}] (v_i1) {};
        \node[fill vertex, right of=v_i1, label=below:{\small $v_{i,2}$}] (v_i2) {};
        \node[fill vertex, right of=v_i2, label=below:{\small $v_{i,3}$}] (v_i3) {};
        \node[fill vertex, right of=v_i3, label=below:{\small $v_{i,4}$}] (v_i4) {};
        \node[fill vertex, right of=v_i4, label=below:{\small $u_{i,1}$}] (u_i1) {};
        \node[fill vertex, right of=u_i1, label=below:{\small $u_{i,2}$}] (u_i2) {};
        \node[fill vertex, right of=u_i2, label=below:{\small $u_{i,3}$}] (u_i3) {};

        \node[fill vertex, right of=u_i3, label=below:{\small $v_{i+1,1}$}] (v_j1) {};
        \node[fill vertex, right of=v_j1, label=below:{\small $v_{i+1,2}$}] (v_j2) {};
        \node[fill vertex, right of=v_j2, label=below:{\small $v_{i+1,3}$}] (v_j3) {};
        \node[fill vertex, right of=v_j3, label=below:{\small $v_{i+1,4}$}] (v_j4) {};

        \draw[->, mBlue] (v_i1) to[bend left=35] (v_j1);
        \draw[->, mBlue] (v_i2) to[bend left=35] (v_j2);
        \draw[->, mBlue] (v_i3) to[bend left=35] (v_j3);
        \draw[->, mBlue] (v_i4) to[bend left=35] (v_j4);

        \draw[<-, mRose] (v_i1) to[bend right=30] (u_i1);
        \draw[<-, mRose] (v_i2) to[bend right=30] (u_i1);
        \draw[<-, mRose] (v_i3) to[bend right=30] (u_i1);

        \draw[<-, mRose] (u_i2) to[bend right=30] (u_i3);
        \draw[<-, mRose] (u_i3) to[bend right=30] (u_i2);
        \draw[->, mRose] (u_i2) to (u_i1);

        \draw[dashed, thick, gray] ($(v_i4)!0.5!(u_i1)+(0,-0.7)$) -- ($(v_i4)!0.5!(u_i1)+(0,1.5)$) node[below] {};
      \end{tikzpicture}
    \end{center}
    Let $B_i\cap B_{i+1}=\{x_1,\dots,x_t\}$, then for $v_{i,j}$, the number of
    arcs in the form of $(v_{i,k},v_{i+1,k})$ that cross the cut after $v_{i,j}$
    is exactly $j$, which is at most $\pw+1$. Arcs incident to $u_{i,*}$ will
    make additional contribution at most 3. For $u_{i,*}$, the number of arcs in
    the form of $(v_{i,j},v_{i+1,j})$ that cross the cut after $u_{i,*}$ is
    exactly $|B_i|\le \pw+1$, and the arcs incident to $u_{i,*}$ will also
    contribute at most 3. Therefore, the cutwidth of the constructed ordering is
    at most $\pw+4$.
  \end{proof}
\end{toappendix}

\section{Conclusion}\label{sec:conclusion}

In this paper we established that computing a PNE is \W{1}-hard parameterized by
standard graph parameters, such as treewidth, correcting a previous claim from
the literature. We complemented this by giving optimal algorithms for this
problem for parameters pathwidth and cutwidth and an improved combinatorial
upper bound on the width of the square of a graph which is likely to find
further applications.

The most intriguing question we leave open is the gap between the problem's
complexity for pathwidth and treewidth. As mentioned, it is extremely rare to
find problems that have different complexities for these parameters.
However, the gap in our tight combinatorial relations leads us to
believe that PNE computation may be such a problem.
Finding complexity-theoretic evidence to that effect, or conversely closing the
gap by improving the treewidth-based algorithm, would therefore be very
interesting. Another interesting question would be to tighten the lower bound we
give for pathwidth to graphs where $\Delta$ is even, as at the moment our bound
is only tight for instances with odd degree.

%
\bibliography{main}

\appendix
\end{document}